\documentclass[10pt,a4paper]{article}
\usepackage[utf8]{inputenc}
\usepackage[english]{babel}
\usepackage{hyperref}
\usepackage{amsmath,amsthm,enumerate}
\usepackage{amssymb}
\usepackage{tikz}
\usepackage{subfigure}
\usepackage[hmargin=2.5cm,vmargin=3cm]{geometry}
\usepackage[color=green!30]{todonotes}

\usepackage{perpage} 
\MakePerPage{footnote} 

\newtheorem{theorem}{Theorem}

\newtheorem{proposition}[theorem]{Proposition}
\newtheorem{lemma}[theorem]{Lemma}

\newtheorem{definition}[theorem]{Definition}

\newcommand{\M}{\gamma^{\text{\tiny{ID}}}}
\newcommand{\ID}{\gamma^{\text{\tiny{ID}}}}
\newcommand{\LD}{\gamma^{\text{\tiny{LD}}}}
\newcommand{\OLD}{\gamma^{\text{\tiny{OLD}}}}

\newcommand{\MD}{dim}
\newcommand{\sepID}{sep_{ID}}
\newcommand{\emp}{{\text{ID-EMP}}}
\newcommand{\univ}{{\text{ID-UNIV}}}
\newcommand{\sepLD}{sep_{LD}}
\newcommand{\empMD}{{\text{LD-EMP}}}
\newcommand{\univMD}{{\text{LD-UNIV}}}

\newcommand{\PBID}{\textsc{Identifying Code}}
\newcommand{\PBLD}{\textsc{Locating-Dominating-Set}}
\newcommand{\PBOLD}{\textsc{Open Locating-Dominating Set}}
\newcommand{\PBMD}{\textsc{Metric Dimension}}

\begin{document}

\title{Identification, location-domination and metric dimension on interval and permutation graphs. I. Bounds}
\author{Florent Foucaud\footnote{\noindent LIMOS - CNRS UMR 6158 Universit\'e Blaise Pascal, Clermont-Ferrand (France). florent.foucaud@gmail.com}
\and George B. Mertzios\footnote{\noindent School of Engineering and Computing Sciences, Durham University (United Kingdom). george.mertzios@durham.ac.uk} 
\footnote{\noindent Partially supported by the EPSRC Grant EP/K022660/1.}
\and Reza Naserasr\footnote{\noindent IRIF - CNRS UMR 8243, Université Paris Diderot, Paris, France. 
reza@irif.fr}
\and Aline Parreau\footnote{\noindent Université Lyon, CNRS, LIRIS UMR CNRS 5205, F-69621 Lyon, France}\footnote{Partially supported by a postdoctoral FNRS grant from University of Liege.}
\and Petru Valicov\footnote{\noindent LIF - CNRS UMR 7279, Université d'Aix-Marseille (France). petru.valicov@lif.univ-mrs.fr}}

\maketitle

\begin{abstract}
We consider the problems of finding optimal identifying codes, (open) 
locating-dominating sets and resolving sets of an interval or a permutation graph. 
In these problems, one asks to find a subset of vertices, normally called a
\emph{solution} set, using which all vertices of the graph are distinguished.
The identification can be done by considering the neighborhood within the solution set,
or by employing the distances to the solution vertices. Normally the goal is 
to minimize the size of the solution set then. Here we study the case of interval graphs, unit interval graphs, (bipartite) permutation graphs and cographs. For these classes of graphs we give tight lower bounds for the size of such solution sets depending on the order of the input graph. While such lower bounds for the general class of graphs
are in logarithmic order, the improved bounds in these special
classes  are of the order of either quadratic root or linear in terms of number of 
vertices. Moreover, the results for cographs lead to linear-time algorithms to solve 
the considered problems on inputs that are cographs.
\end{abstract}

\section{Introduction}

Identification problems in discrete structures are a well-studied topic. In these problems, we are 
given a graph or a hypergraph, and we wish to distinguish (i.e. uniquely identify) its vertices 
using (small) set of selected elements from the (hyper)graph.
For the \emph{metric dimension}, one seeks a set 
$S$ of vertices of a graph $G$ where every vertex of $G$ is uniquely identified by its distances to 
the vertices of $S$. The notions of \emph{identifying codes} and \emph{(open) locating-dominating 
sets} are similar; instead of the distances to $S$, we ask for the vertices to be distinguished by 
their neighbourhood within $S$. These concepts are studied by various authors since the 1970s and 
1980s, and have been applied to various areas such as network 
verification~\cite{BBDGKP11,BEEHHMR06}, fault-detection in networks~\cite{KCL98,UTS04}, graph 
isomorphism testing~\cite{B80} or the logical definability of graphs~\cite{KPSV04}. We note that the 
related problem of finding a \emph{test cover} of a hypergraph (where hyperedges are selected to 
distinguish the vertices) has been studied under several names by various authors, see 
e.g.~\cite{BS07,B72,CCCHL08,GJ79,R61}.

In this paper, we study identifying codes, (open) locatig-dominating sets and the metric dimension of interval graphs, permutation graphs and some of their subclasses. In particular, we study bounds on the order for such graphs with given size of an optimal solution.

\bigskip

\noindent\textbf{Important concepts and definitions.} All considered
graphs are finite and simple. We will denote by $N[v]$, the
\emph{closed neighbourhood} of a vertex $v$, and by $N(v)$ its
\emph{open neighbourhood} $N[v]\setminus\{v\}$. A vertex is \emph{universal} if it is adjacent to all the vertices of the graph. A set $S$ of vertices
of $G$ is a \emph{dominating set} if for every vertex $v$ of $G$, there is a
vertex $x$ in $S\cap N[v]$. It is a \emph{total dominating set} if
instead, $x\in S\cap N(v)$. In the context of (total) dominating sets
we say that a vertex $x$ \emph{(totally) separates} two distinct
vertices $u,v$ if it (totally) dominates exactly one of them. A set $S$
(totally) separates the vertices of a set $X$ if all pairs of $X$ are
(totally) separated by a vertex of $S$. Whenever it is clear from the
context, we will only say ``separate'' and omit the word
``totally''. We have the three key definitions, that merge the
concepts of (total) domination and (total) separation:

\begin{definition}[Slater~\cite{Sl87,S88}]
A set $S$ of vertices of a graph $G$ is a \emph{locating-dominating
set} if it is a dominating set and it separates the vertices of
$V(G)\setminus S$.
The smallest size of a locating-dominating set of $G$ is the \emph{location-domination number} of 
$G$, denoted $\LD(G)$. Without the domination constraint, this concept has also been used under the 
name \emph{distinguishing set} in~\cite{B80} and \emph{sieve} in~\cite{KPSV04}.
\end{definition}

\begin{definition}[Karpovsky, Chakrabarty and Levitin~\cite{KCL98}]
A set $S$ of vertices of a graph $G$ is an \emph{identifying code} if
it is a dominating set and it separates all vertices of $V(G)$.
The smallest size of an identifying code of $G$ is the
\emph{identifying code number} of $G$, denoted $\ID(G)$.
\end{definition}

\begin{definition}[Seo and Slater~\cite{SS10}]
A set $S$ of vertices of a graph $G$ is an \emph{open
locating-dominating set} if it is a total dominating set and it
totally separates all vertices of $V(G)$.
The smallest size of an open locating-dominating set of $G$ is the
\emph{open location-domination number} of $G$, denoted $\OLD(G)$. This
concept has also been called \emph{identifying open code}
in~\cite{HY12}.
\end{definition}

Separation could also be done using distances from the members of the solution 
set. Let $d(x,u)$ denote the distance between two vertices $x$ and $u$. 

\begin{definition}[Harary and Melter~\cite{HM76}, Slater~\cite{S75}] A set $B$ of vertices of a graph $G$ is a \emph{resolving set} if for each pair $u,v$ of distinct vertices, there is a vertex $x$ of $B$ with $d(x,u)\neq d(x,v)$.\footnote{Resolving sets are also known under the name of
\emph{locating sets}~\cite{S75}. Optimal resolving sets have
sometimes been called \emph{metric bases} in the literature; to
avoid an inflation in the terminology we will only use the term
\emph{resolving set}.}
The smallest size of a resolving set of $G$ is the \emph{metric dimension} of $G$, denoted 
$\MD(G)$. 
\end{definition}

It is easy to check that the inequalities $\MD(G)\leq \LD(G)\leq \ID(G)$ 
and $\LD(G)\leq \OLD(G)$ hold, indeed every
locating-dominating set of $G$ is a resolving set, and every
identifying code (or open locating-dominating set) is a
locating-dominating set. Moreover it is proved that $\ID(G)\leq
2\LD(G)$~\cite{GKM08} (using the same proof idea one would get a
similar relation between $\LD(G)$ and $\OLD(G)$ and between $\ID(G)$
and $\OLD(G)$, perhaps with a different constant factor).

In a graph $G$ of diameter~2, one can easily see that the concepts of
resolving set and locating-dominating set are almost the same, as
$\LD(G)\leq\MD(G)+1$. Indeed, let $S$ be a resolving set of $G$. Then
all vertices in $V(G)\setminus S$ have a distinct neighborhood within
$S$. There might be (at most) one vertex that is not dominated by $S$,
in which case adding it to $S$ yields a locating-dominating set.

While a resolving set and a locating-dominating set exist in every
graph $G$ (for example the whole vertex set), an identifying code may
not exist in $G$ if it contains \emph{twins}, that is, two vertices
with the same closed neighbourhood. However, if the graph is
\emph{twin-free}, then the set $V(G)$ is an identifying code of
$G$. Similarly, a graph admits an open locating-dominating set if and
only if it has no \emph{open twins}, vertices sharing the same open
neighbourhood. We say that such a graph is \emph{open twin-free}.

The focus of this work is to study these concepts and corresponding decision problems for
specific subclasses of perfect graphs. Many
standard graph classes are perfect, for example bipartite graphs,
split graphs, interval graphs. For precise definitions, we refer to
the book of Brandstädt, Le and Spinrad~\cite{BL99}. Some of these
classes are defined using a geometric intersection model, that
is, the vertices are associated to the elements of a set $S$ of
(geometric) objects, and two vertices are adjacent if and only if the
corresponding elements of $S$ intersect. The graph defined by the
intersection model of $S$ is its \emph{intersection graph}. An
\emph{interval graph} is the intersection graph of intervals of the
real line, and a \emph{unit interval graph} is an interval graph whose
intersection model contains only (open) intervals of unit
length. Given two parallel lines $B$ and $T$, a \emph{permutation graph} is the intersection graph of segments of the plane which have
an endpoint on $B$ and an endpoint on $T$. A \emph{cograph} is a graph
which can be built from single vertices using the repeated application
of two binary graph operations: the disjoint union $G\oplus H$, and
the complete join $G\bowtie H$ (another standard characterization of
cographs is that they are those graphs that do not contain a
4-vertex-path as an induced subgraph). All cographs are permutation graphs.

Interval graphs and permutation graphs are classic graph classes that
have many applications and are widely studied. They can be recognized
efficiently, and many combinatorial problems have simple and efficient
algorithms for these classes.

\bigskip

\noindent\textbf{Previous work.} It is not difficult to observe that a graph $G$ 
with $n$ vertices and an identifying code or open locating-dominating set $S$ of
size~$k$ satisfies $n\leq 2^k-1$~\cite{KCL98,SS10}. Furthermore it can be observed 
that this bound is tight. If $S$ is a locating-dominating 
set, then a tight bound is $n\leq 2^k+k-1$~\cite{S88}. These bounds are tight, even 
for bipartite graphs or split graphs. They are also tight up to a constant factor for 
co-bipartite graphs. On the other hand, tight bounds of the form $n=O(k)$ are given 
for paths and cycles~\cite{BCHL04,S88}, trees~\cite{BCHL05,Sl87} and planar graphs 
and some of their subclasses~\cite{RS84}. A bound of the form $O(k^2)$ was given for 
identifying codes in line graphs~\cite{lineID}.

The number of vertices of a graph with metric dimension~$k$ cannot be 
bounded by a function of $k$: for example, an end point of a path (of any length) 
forms a resolving set.
More generally, for every integer $k$, one can construct arbitrarily large 
trees with metric dimension~$k$ (consider for example a vertex $x$ with $k+1$
arbitrarily long disjoint paths starting from $x$). 
However, when the diameter of $G$ is at most $D$ and $\MD(G)=k$, 
we have the (trivial) bound $n\leq D^k+k$~\cite{CEJO00}, which is not tight 
but a more precise (and tight) bound is given in~\cite{HMMPSW10}.

Regarding the algorithmic study of these problems, \PBID, \PBLD, \PBOLD{} and \PBMD{} (the decision 
problems that ask, given a graph $G$ and an integer~$k$, for the existence of an identifying code, a 
locating-dominating set, an open locating-dominating set and a resolving set of size at most~$k$ in 
$G$, respectively) were shown to be NP-complete, even for many restricted graph classes. We refer to 
e.g.~\cite{A10,CHL03,DPSV12,ELW12j,F13j,lineID,GJ79,MS09,SS10} for some results. On the positive 
side, \PBID, \PBLD{} and \PBOLD{} are linear-time solvable for graphs of bounded clique-width (using 
Courcelle's theorem~\cite{CM93} ). Furthermore, Slater~\cite{Sl87} and Auger~\cite{A10} gave 
explicit linear-time algorithms solving \PBLD{} and \PBID, respectively, in trees. Epstein, Levin 
and Woeginger~\cite{ELW12j} also gave polynomial-time algorithms for the weighted version of \PBMD{} 
for paths, cycles, trees, graphs of bounded cyclomatic number, cographs and partial wheels. Diaz, 
Pottonen, Serna, Jan van Leeuwen~\cite{DPSV12} gave a polynomial-time algorithm for outerplanar 
graphs. In a companion paper~\cite{part2}, we prove that all four problems \PBID, \PBLD, \PBOLD{} 
and \PBMD{} are NP-complete, even for interval graphs and permutation graphs. We also 
give in~\cite{part2} an $f(k)poly(n)$-time (i.e. fixed-parameter-tractable) algorithm to check 
whether an interval graph has metric dimension at most~$k$.\\

\noindent\textbf{Our results and structure of the paper.} In this paper, we give new upper bounds on the maximum order of interval or permutation graphs (and some of their subclasses) having an identifying code, an (open) locating-dominating set or a resolving set of size~$k$. For the three first problems (in which the identification is neighbourhood-based), the bounds are $O(k^2)$ for interval graphs and permutation graphs and $O(k)$ for unit interval graphs, bipartite permutation graphs and cographs. We also study the metric dimension of such graphs by giving similar upper bounds in terms of the solution size~$k$ and the diameter~$D$. We obtain the bounds $O(Dk^2)$ for interval and permutation graphs, and $O(Dk)$ for unit interval graphs and cographs. We also provide constructions showing that all our bounds are nearly tight. Finally, we give a linear-time algorithm for \PBID{} and \PBOLD{} in cographs.\footnote{Remark that the algorithm of Epstein, Levin and Woeginger~\cite{ELW12j} for \PBMD{} can also be used for \PBLD.}

Section~\ref{sec:int} is devoted to interval graphs, Section~\ref{sec:unit} to unit interval graphs, Section~\ref{sec:perm} to permutation graphs, Section~\ref{sec:bip-perm} to bipartite permutation graphs, and Section~\ref{sec:cog} to cographs. We conclude the paper in Section~\ref{sec:conclu}.

\section{Interval graphs}\label{sec:int}

We now give bounds for interval graphs. Recall that in general there
are graphs with (open) location-domination or identifying code number~$k$ and
$\Theta\left(2^k\right)$ vertices~\cite{KCL98,S88}. This can be improved for
interval graphs as follows.

\begin{theorem}\label{thm:interval-sqrt}
Let $G$ be an interval graph on $n$ vertices and let $S$ be a subset
of vertices of size $k$. If $S$ is an open locating-dominating set or
an identifying code of $G$, then $n\leq \frac{k(k+1)}{2}$. If $S$ is a
locating-dominating set of $G$, then $n\leq \frac{k(k+3)}{2}$. Hence,
$\M(G)\geq\sqrt{2n+\frac{1}{4}}-\frac{1}{2}$,
$\OLD(G)\geq\sqrt{2n+\frac{1}{4}}-\frac{1}{2}$ and
$\LD(G)\geq\sqrt{2n+\frac{9}{4}}-\frac{3}{2}$.
\end{theorem}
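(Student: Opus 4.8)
The plan is to exploit the linear structure of an interval representation to show that the $k$ solution vertices partition the real line into few "regions," and that each region can contain only a bounded number of vertices that are pairwise distinguished by $S$. Fix an interval representation of $G$, and for each vertex $v \in S$ let $I_v = [\ell_v, r_v]$ be its interval. The $2k$ endpoints of the intervals of $S$ divide the real line into at most $2k+1$ elementary intervals; but the right data to track for a vertex $u \notin S$ is which members of $S$ it is dominated by, i.e. $N(u) \cap S$ (or $N[u] \cap S$). The key geometric observation is that for interval graphs this "trace" is an interval of $S$ in a suitable linear order: order the vertices of $S$ by, say, left endpoint; then for any vertex $u$, the set of $s \in S$ whose interval meets $I_u$ forms a contiguous block in a closely related order — more precisely, if we sort $S$ by left endpoint and among those break ties consistently, the neighbours in $S$ of any single interval form an interval of this order. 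This is the standard consecutive-ones / Helly-type property of intervals.

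**The counting step.** Since each $N[u]\cap S$ (resp. $N(u)\cap S$) is a contiguous block in a fixed linear order on the $k$ elements of $S$, there are at most $\binom{k}{2} + k = \frac{k(k+1)}{2}$ nonempty blocks (choose the two endpoints of the block, or a singleton). For an identifying code or open locating-dominating set, $S$ separates \emph{all} vertices of $V(G)$, so in particular all vertices have pairwise distinct traces, and every vertex has a nonempty trace (domination); hence $n \le \frac{k(k+1)}{2}$. For a locating-dominating set, only the vertices of $V(G)\setminus S$ are required to have distinct nonempty traces, giving $|V(G)\setminus S| \le \frac{k(k+1)}{2}$ and thus $n \le k + \frac{k(k+1)}{2} = \frac{k(k+3)}{2}$. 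The inequalities on $\M(G)$, $\OLD(G)$, $\LD(G)$ then follow by solving the quadratics $k(k+1)/2 \ge n$ and $k(k+3)/2 \ge n$ for $k$: from $k^2+k-2n\ge 0$ we get $k \ge \sqrt{2n+\frac14}-\frac12$, and from $k^2+3k-2n\ge 0$ we get $k \ge \sqrt{2n+\frac94}-\frac32$.

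**The main obstacle** is making the "contiguous block" claim fully rigorous and uniform across all three parameters, since one must be careful about ties among endpoints and about whether one uses open or closed neighbourhoods. For the open case ($N(u)\cap S$), a vertex $u\in S$ itself is not in its own trace, and one must check that the trace is still an interval of the chosen order; the cleanest fix is to first perturb the representation so that all $2n$ endpoints are distinct, then fix the order of $S$ by left endpoints, and prove: if $s,s' \in S$ both meet $I_u$ with $s$ before $s'$ in the order, then every $s''$ between them also meets $I_u$. This is immediate from the interval (Helly) property — $I_{s''}$ is "trapped" between $I_s$ and $I_{s'}$ relative to the point configuration of $I_u$. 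Once this lemma is in place, the rest is the elementary counting and the quadratic-formula manipulation sketched above; these are routine and I would not belabor them.
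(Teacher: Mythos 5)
Your counting and the final algebra are fine, but the lemma everything rests on --- that after sorting $S$ by left endpoints, $N(u)\cap S$ is a contiguous block of that order for every vertex $u$ --- is false, and it is not a matter of tie-breaking or of open versus closed neighbourhoods. Take $s=\,]0,10[$, $s''=\,]1,2[$, $s'=\,]3,20[$ in $S$ and a vertex $u$ with interval $]5,6[$: in left-endpoint order $s<s''<s'$, and $u$ meets $s$ and $s'$ but not $s''$, because an interval whose left endpoint lies between $\ell_s$ and $\ell_{s'}$ is free to end long before $I_u$ begins --- it is not ``trapped''. Worse, no single linear order of $S$ can repair this: if $S$ contains $a=\,]0,100[$, $b=\,]10,20[$, $c=\,]30,40[$, $d=\,]50,60[$ and the graph contains vertices with intervals $]12,13[$, $]32,33[$, $]52,53[$, their traces are $\{a,b\}$, $\{a,c\}$, $\{a,d\}$, and in any linear order $a$ has at most two neighbours, so at most two of these three sets can be contiguous. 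So the family of traces need not have the consecutive-ones property with respect to any ordering of $S$, and the count ``at most $\binom{k}{2}+k$ nonempty blocks'' does not apply as you set it up. (What is true is the one-sided statement: among the code intervals whose \emph{right} endpoints lie beyond $\ell_u$, the ones meeting $I_u$ form a prefix of their left-endpoint order.)

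The bound itself is of course correct, and the repair turns your argument into the paper's proof: sort $S=\{x_1,\dots,x_k\}$ by right endpoints $r_1\leq\dots\leq r_k$ and classify each vertex $u$ by the index $i$ such that $\ell_u$ falls in $[r_{i-1},r_i[$ (with the obvious conventions at the two ends). A vertex of class $i$ can only meet $x_i,\dots,x_k$, and since all of these end after $\ell_u$, whether $u$ meets $x_j$ depends only on whether $\ell_{x_j}<r_u$; hence within class $i$ the trace is a nonempty (by domination) prefix of the left-endpoint order on $\{x_i,\dots,x_k\}$, giving at most $k-i+1$ vertices per class and $\sum_{i=1}^{k}(k-i+1)=\frac{k(k+1)}{2}$ in total. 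In other words, you need a two-parameter encoding (class index, prefix length) rather than a single global order; once that substitution is made, your treatment of the locating-dominating case and the quadratic inequalities goes through unchanged.
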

\begin{proof}
Let $S=\{x_1,\ldots,x_k\}$ be an identifying code or open
locating-dominating set of $G$ of size $k$, where the intervals
$x_1,\ldots,x_k$ are ordered increasingly by their right endpoint (let
us denote by $r_i$, the right endpoint of interval $x_i$). Using this
order, we define a partition $\mathcal E_1,\ldots,\mathcal E_k$ of
$V(G)$ as follows. Let $\mathcal E_1$ be the set of intervals that
start strictly before $r_1$. For any $i$ with $2\leq i\leq k-1$, let
$\mathcal E_{i}$ be the set of intervals whose left endpoint lies
within $[r_{i-1},r_i[$, and let $\mathcal E_{k}$ be the set of intervals whose left endpoint is at least $r_{k-1}$. Now, let $I$ be an interval of $\mathcal E_i$ with $1\leq i\leq k$. Interval $I$ can only intersect intervals of $S$ in
$x_i,\ldots,x_k$.\footnote{We use a representation with open
intervals.} These intervals must be consecutive when considering
the order defined by the left endpoints and $I$ must intersect the
first one. There are $k-i+1$ possible intersections and so
$\mathcal E_i$ contains at most $k-i+1$ intervals. Hence, in total $G$ has at most $\sum_{i=1}^k(k-i+1)\leq\frac{k(k+1)}{2}$ vertices.

If $S$ is a locating-dominating set, we reason similarly, but we must
take into account the existence of $k$ additional vertices that do not
need to be separated (the ones from $S$).

The bounds on parameters $\M$, $\LD$ and $\OLD$ follow directly by
using the facts that $k(k+1)=(k+\frac{1}{2})^2-\frac{1}{4}$ and
$k(k+3)=(k+\frac{3}{2})^2-\frac{9}{4}$.
\end{proof}

\begin{proposition}\label{prop:interval-sqrt-tight}
The bounds of Theorem~\ref{thm:interval-sqrt} are tight for every
$k\geq 1$.
\end{proposition}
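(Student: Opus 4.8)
The plan is to exhibit, for each $k\ge 1$, interval graphs attaining the three bounds of Theorem~\ref{thm:interval-sqrt}, together with explicit solution sets of size exactly $k$. I want the construction to mirror the counting argument in the proof of the theorem: there the ground set was partitioned into classes $\mathcal E_1,\dots,\mathcal E_k$ with $|\mathcal E_i|\le k-i+1$, so I should build a graph where this inequality is tight for every $i$, i.e. $\mathcal E_i$ contains exactly $k-i+1$ vertices and all the ``allowed'' intersection patterns with $S=\{x_1,\dots,x_k\}$ are realized. Concretely, take the $k$ solution intervals $x_1,\dots,x_k$ arranged in the nested/staircase fashion forced by the proof (ordered by right endpoint, and such that $x_i,\dots,x_k$ are the only ones a vertex of $\mathcal E_i$ can meet), and for each $i$ add $k-i+1$ further intervals $v_{i,1},\dots,v_{i,k-i+1}$, where $v_{i,j}$ intersects exactly $x_i,x_{i+1},\dots,x_{i+j-1}$ — one interval for each prefix-length $j\in\{1,\dots,k-i+1\}$. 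This gives $n=\sum_{i=1}^k (k-i+1)=\frac{k(k+1)}{2}$ vertices.

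Next I would verify the three claims. First, that the $x$'s and $v_{i,j}$'s really do form an interval graph: it suffices to write down explicit coordinates (a standard ``staircase of nested intervals'' does the job — give $x_i$ the interval $(2i-1,\,2k+i)$ or similar, and place $v_{i,j}$ as a short interval straddling exactly the endpoints of $x_i,\dots,x_{i+j-1}$), or simply argue one can realize the prescribed intersection pattern by intervals. Second, that $S=\{x_1,\dots,x_k\}$ is an open locating-dominating set and an identifying code: every $v_{i,j}$ is dominated (it meets $x_i$) and totally dominated, and two distinct non-solution vertices $v_{i,j}$, $v_{i',j'}$ have different ``signatures'' (the set of $x$'s they meet is a distinct prefix $\{x_i,\dots,x_{i+j-1}\}$), so they are separated and totally separated; the $x$'s themselves are separated because each $x_\ell$ meets itself and, by the staircase structure, the collection of $x$'s meeting $x_\ell$ is distinct for each $\ell$ — I'd double-check this last point and, if needed, perturb the construction slightly (e.g. make the $x$'s pairwise intersecting in a strictly nested way) so that closed and open neighbourhoods within $S$ are all distinct. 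Third, for the locating-dominating bound $n\le\frac{k(k+3)}{2}$, I would additionally append $k$ private vertices, one ``twin-like'' pendant attached to each $x_\ell$ (an interval meeting only $x_\ell$) that lies inside $S$-neighbourhood-space but need not be separated because... — more carefully, following the theorem's own remark, the extra $k$ vertices are vertices of $S$ itself in a modified graph; so the cleanest route is to take the identifying-code construction for parameter $k$, and observe that as a locating-dominating set the set $S$ need only separate $V\setminus S$, freeing up $k$ additional ``signature slots'' (the all-empty-except-one patterns), into which we insert $k$ new non-solution vertices, one attached to each $x_\ell$ alone; this yields $n=\frac{k(k+1)}{2}+k=\frac{k(k+3)}{2}$.

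The main obstacle I anticipate is the bookkeeping around the solution vertices themselves — ensuring simultaneously that (a) the prescribed intersection pattern between the $v_{i,j}$ and the $x$'s is realizable by actual intervals, (b) the $x_i$ are ordered by right endpoint exactly as the theorem's argument assumes so that $\mathcal E_i$ behaves as intended, and (c) the vertices of $S$ are pairwise separated (for identifying codes) or totally separated (for OLD) — this third requirement is what forces a careful choice of the mutual overlaps among $x_1,\dots,x_k$ rather than, say, making them pairwise disjoint. Once explicit coordinates are fixed, checking (a)–(c) and the signature-distinctness of the $v_{i,j}$ is routine, so I would present one clean coordinate assignment and a short table of neighbourhoods, then note the $+k$ modification for the $\LD$ case.
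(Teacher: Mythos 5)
Your overall strategy is the right one and matches the paper's: realize every ``consecutive run'' signature $\{x_i,\dots,x_{i+j-1}\}$ exactly once, which forces equality in each class $\mathcal E_i$ of the theorem's proof. But there are two concrete problems with the execution. First, the count as written overshoots: you add $\sum_{i=1}^k(k-i+1)=\tfrac{k(k+1)}{2}$ intervals $v_{i,j}$ \emph{on top of} the $k$ solution intervals, giving $\tfrac{k(k+1)}{2}+k$ vertices, which already violates the bound $n\le\tfrac{k(k+1)}{2}$ you are trying to attain for identifying codes. To close the count, the code vertices themselves must occupy $k$ of the signature slots, i.e.\ $x_i$ must \emph{be} the vertex $v_{i,1}$ whose trace on $S$ is $\{x_i\}$. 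That forces the code intervals to be pairwise disjoint. Second, and incompatibly with this, your proposed coordinates $x_i=(2i-1,\,2k+i)$ make all the $x_i$ pairwise intersecting, so $N[x_i]\cap S=S$ for every $i$ and the code vertices are mutually unseparated --- $S$ is then not an identifying code. Your fallback of making them ``strictly nested'' does not repair this: nested intervals still pairwise intersect and still all have trace $S$. The correct fix (and the paper's construction) is to take $S=\{\,]i,i+1[\;:\;1\le i\le k\}$, pairwise disjoint open unit intervals, and let the remaining vertices be all intervals $]i,j[$ with $j>i+1$; this realizes exactly the $\binom{k+1}{2}$ distinct nonempty consecutive-run traces and nothing else.

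Once the code intervals are disjoint, a genuinely separate adjustment is needed for the open locating-dominating case, which you only gesture at: disjoint code intervals have empty open neighbourhood within $S$, so they are not totally dominated. The paper handles this (for even $k$) by shifting every second code interval so that the code intervals overlap in pairs, keeping all other traces intact; a mere ``perturbation'' of the nested staircase will not produce a valid OLD set. Your treatment of the locating-dominating bound is fine once the base construction is corrected: adding one duplicate of each $]i,i+1[$ creates $k$ new non-code vertices with trace $\{x_i\}$, which need not be separated from $x_i$ itself and are separated from every $v_{i,j}$ with $j\ge 2$, giving $n=\tfrac{k(k+3)}{2}$.
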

\begin{proof}For identifying codes, consider the
interval graph formed by the intersection of the following family of
intervals: $\mathcal F=\{]i,j[~|~1\leq i<j\leq k+1,i,j\in\mathbb{N}\}$, where the subfamily $\{]i,i+1[~|~1\leq i\leq k\}$ forms an identifying code $S$ of size~$k$. A similar construction can be done for open locating-dominating sets when
$k$ is even by replacing the $k/2$ intervals $]2i,2i+1[$ by
intervals $]2i-0.5,2i+0.5[$.  For locating-dominating sets,
consider $\mathcal F$ with a copy of each interval $]i,i+1[$ of
$S$. Then $S$ is a locating-dominating set.  An illustration of
these examples for $k=4$ is given in Figure~\ref{fig:extint}.
\end{proof}

\begin{figure}
\begin{center}
\tikzstyle{intervalle}=[]
\tikzstyle{intervallecode}=[line width=2pt]
\subfigure[Identifying code]{\begin{tikzpicture}[scale=1.2]
\foreach \I in {0,1,2,3}
\draw[intervallecode] (\I+0.1,0.25)--(\I+0.9,0.25);
\foreach \I in {0,1,2}
\draw[intervalle] (\I+0.1,-0.25-\I/4)--(\I+1.9,-0.25-\I/4);
\foreach \I in {0,1}
\draw[intervalle] (\I+0.1,-1-\I/4)--(\I+2.9,-1-\I/4);
\draw[intervalle] (0.1,-1.5)--(3.9,-1.5);
\end{tikzpicture}
}
\qquad
\subfigure[Open locating-dominating set]{\begin{tikzpicture}[scale=1.2]
\draw[intervallecode] (0.1,0.25)--(0.9,0.25);
\draw[intervallecode] (0.7,0)--(1.5,0);
\draw[intervallecode] (2.5,0.25)--(3.3,0.25);
\draw[intervallecode] (3.1,0)--(3.9,0);

\foreach \I in {0,1,2}
\draw[intervalle] (\I+0.1,-0.25-\I/4)--(\I+1.9,-0.25-\I/4);
\foreach \I in {0,1}
\draw[intervalle] (\I+0.1,-1-\I/4)--(\I+2.9,-1-\I/4);
\draw[intervalle] (0.1,-1.5)--(3.9,-1.5);

\end{tikzpicture}}
\qquad
\subfigure[Locating-dominating set]{\begin{tikzpicture}[scale=1.2]
\foreach \I in {0,1,2,3}
\draw[intervallecode] (\I+0.1,0.25)--(\I+0.9,0.25);
\foreach \I in {0,1,2,3}
\draw[intervalle] (\I+0.1,0)--(\I+0.9,0);
\foreach \I in {0,1,2}
\draw[intervalle] (\I+0.1,-0.25-\I/4)--(\I+1.9,-0.25-\I/4);
\foreach \I in {0,1}
\draw[intervalle] (\I+0.1,-1-\I/4)--(\I+2.9,-1-\I/4);
\draw[intervalle] (0.1,-1.5)--(3.9,-1.5);

\end{tikzpicture}}
\end{center}
\caption{\label{fig:extint}Examples of interval graphs from Proposition~\ref{prop:interval-sqrt-tight} reaching the lower bounds of Theorem~\ref{thm:interval-sqrt}. Solution intervals are in bold.}
\end{figure}
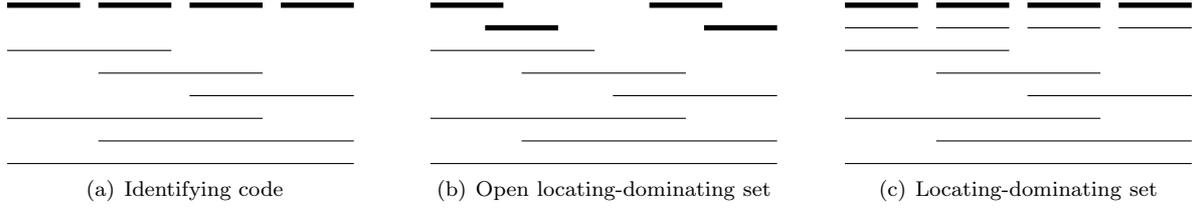

We now give a bound similar to the one of
Theorem~\ref{thm:interval-sqrt} for the metric dimension using the
diameter and the order of the graph. Recall that in general there are
graphs with metric dimension~$k$, diameter~$D$ and order
$\Theta\left(D^k\right)$~\cite{HMMPSW10}.

\begin{theorem}\label{thm:interval-MB}
Let $G$ be a connected interval graph on $n$ vertices, of diameter~$D$,
and a resolving set of size $k$. Then $n\leq 2k^2D+4k^2+kD+5k+1 = \Theta(Dk^2)$.
\end{theorem}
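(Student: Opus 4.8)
The plan is to mimic the partitioning strategy of Theorem~\ref{thm:interval-sqrt}, but now measuring resolution by distances rather than by neighbourhoods, which forces us to pay an extra factor proportional to the diameter~$D$. Fix a resolving set $B=\{x_1,\dots,x_k\}$ of~$G$, and fix an interval representation. As before, order the $x_i$ by right endpoint and cut the real line at the right endpoints $r_1\le\dots\le r_k$, obtaining ``columns'' $\mathcal E_1,\dots,\mathcal E_k$ according to where the left endpoint of an interval falls. The crucial difference from the neighbourhood case is that two intervals $I,J$ lying in the same column $\mathcal E_i$ are no longer necessarily twins: although they intersect the same prefix $x_i,\dots,x_k$ of~$B$ that they can possibly touch, they may sit at different distances from the members of $B$ lying \emph{to the left} of column~$i$ (i.e.\ $x_1,\dots,x_{i-1}$), and these left vertices can resolve them. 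So the column alone no longer bounds the number of vertices; we need a second coordinate.

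The key step is therefore to refine each column $\mathcal E_i$ by the distance vector to the $k$ vertices of $B$, and to argue that within a column this vector has few possible values. First I would observe that for two intervals $I,J$ in the same column, any $x_\ell$ with $\ell\ge i$ either is disjoint from both or, among those it meets, contributes the same distance-$0$ or distance-$1$ information in a controlled way; the real work is to show that the distance from a fixed interval $I\in\mathcal E_i$ to a fixed $x_\ell$ with $\ell<i$ depends only on $\ell$, $i$, and \emph{one} integer parameter of $I$ (essentially ``how far to the right $I$ reaches'', measured in the column scale), up to an additive error bounded by a constant. Concretely, distances in an interval graph can be read off from a greedy left-to-right sweep: $d(x_\ell,I)$ is determined by which column the ``reach'' of $I$ lands in, and differs by at most an additive constant between intervals whose reaches land in the same column. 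Hence the pair (home column $i$, reach column $j$) nearly determines the full distance vector, and since $1\le i\le k$, $i\le j\le $ (something like $i$ plus the number of columns spanned, which is at most $k$), and the residual ambiguity within a fixed $(i,j)$ pair is bounded by a constant coming from the diameter, we get that each $(i,j)$ class contains $O(D)$ vertices resolvable from each other only if... — precisely, the count per $(i,j)$ pair is at most something like $2D+O(1)$, because two intervals in the same $(i,j)$ class can only be separated by a vertex $x_\ell$ that is itself at distance $\Theta(\text{shift})$, and all such distances lie in a window of length $O(D)$.

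Summing up: there are at most $\binom{k}{2}+k=O(k^2)$ relevant pairs $(i,j)$ (home column before reach column, both in $\{1,\dots,k\}$, plus the degenerate cases), each contributing at most $2D+4$ or so vertices, plus the $k$ vertices of $B$ itself and a constant number of boundary corrections; assembling these with honest bookkeeping of the additive constants yields $n\le 2k^2D+4k^2+kD+5k+1$. The main obstacle I anticipate is the second step: proving cleanly that ``home column $+$ reach column'' pins down the distance vector to within an $O(D)$-sized set, i.e.\ setting up the right integer parameter for an interval and bounding exactly how much two intervals with the same parameters can still differ in their distances to the left-lying solution vertices. Once that structural lemma is in place with explicit constants, the final inequality is just a summation; getting the precise coefficients $2,4,1,5,1$ will require careful, but routine, accounting of off-by-one effects at the two ends of the representation and of the vertices of $B$ that may themselves fail to be separated from non-solution vertices.
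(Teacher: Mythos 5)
Your general direction --- cut the real line at points derived from the solution set and use the fact that the distance from an interval to $s_\ell$ is determined by where its endpoints fall among those cut points --- is the right one, and it is the idea behind the paper's proof. But the specific counting scheme you propose has a genuine gap, precisely at the step you yourself flag as the anticipated obstacle. You claim that each class indexed by a pair (home column, reach column) --- of which there are $O(k^2)$ --- contains at most $2D+O(1)$ mutually resolved vertices. This is not true in general. The breakpoints that govern $d(\cdot,s_\ell)$ are not the $k$ column boundaries $r_1,\dots,r_k$ but a separate set of roughly $D$ points per solution vertex (the ``greedy sweep'' points), and nothing forces these $\Theta(kD)$ points to distribute themselves with $O(D)$ per column. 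For instance, if a long induced path extends to the right of all of $x_1,\dots,x_k$, then the sweep points of \emph{all} $k$ solution vertices accumulate inside the last column, and that single column can contain $\Theta(kD)$ intervals that are pairwise separated by $S$ while sharing the same home and reach columns. So the per-class bound fails, and summing your classes with the correct worst-case class size would give $O(k^3D)$ or worse, not $O(Dk^2)$.

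The paper's accounting avoids this by inverting the roles of the two factors. For each $s_i$ it defines the ordered breakpoint sets $L^i$ (leftward sweep from the left endpoint of $s_i$) and $R^i$ (rightward sweep), shows $|L^i\cup R^i|\le D+4$ via a shortest-path argument through $s_i$, and lets the union of all these sets partition $\mathbb R$ into at most $k(D+2)+1$ parts; the distance vector of an interval is determined by the parts containing its two endpoints. The step that replaces your missing lemma is the observation that an interval can contain at most one point of each $L^i$ and at most one point of each $R^i$ (by the minimality/maximality in their definitions), hence at most $2k$ breakpoints in total; therefore, once the part containing the left endpoint is fixed (at most $k(D+2)+1$ choices), there are at most $2k+1$ choices for the part containing the right endpoint. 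This gives $(k(D+2)+1)(2k+1)+k$ directly, i.e.\ the factor $O(kD)$ comes from the left-endpoint part and the factor $O(k)$ from the span, rather than $O(k^2)$ classes times $O(D)$ each. To repair your write-up you would need to abandon the per-column-pair bound and adopt this breakpoint partition together with the ``spans at most $2k$ breakpoints'' lemma.
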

\begin{proof}
Let $S$ be a resolving set of size~$k$ of $G$ and let $s_1,\ldots,s_k$ be 
the elements of $S$. For each $i$ in
$\{1,\ldots,k\}$, we define an ordered set $L^i=\{x^i_1>x^i_2>\ldots>x^i_s\}$, in the
following way. Let $x^i_1$ be the left endpoint of $s_i$. Assuming
$x^i_j$ is defined, let $x^i_{j+1}$ be the smallest among all left
endpoints of the intervals of $G$ that end strictly after $x^i_j$. We stop the process 
when we have $x^i_{s+1}=x^i_s$, which means that, since $G$ is connected, 
$x^i_s$ is the smallest left endpoint among all the intervals of $G$. 
Note that an interval whose right endpoint
lies within $]x^i_{j+1},x^i_{j}]$ is at distance exactly $j+1$ of
$s_i$. Furthermore, there is no interval whose right endpoint is smaller than $x^i_{s}$.

We similarly define the ordered set 
$R^i=\{y^i_1<y^i_2<\ldots<y^i_{s'}\}$:  $y^i_1$ is the right 
endpoint of $s_i$, $y^i_{j+1}$ is the largest right endpoint among all the intervals
of $G$ that start strictly before $y^i_j$, and $y^i_{s'}$ is the largest right endpoint among all the intervals of $G$. An interval whose
left endpoint is within $[y^i_{j},y^i_{j+1}[$ is at distance exactly $j+1$
of $s_i$ and no interval has left endpoint larger than
$y^i_{s'}$.

Note that intervals at distance~1 of $s_i$ in $G$ are exactly the
intervals starting before $y^i_1$ and finishing after $x^i_1$. More
generally, for any interval of $G$, its distance to $s_i$ is uniquely
determined by the position of its right endpoint in the ordered set
$L^i$ and the position of its left endpoint in the ordered set $R^i$.
Moreover the interval $I_s$ that defines the point $x^i_s$ of $L^i$ and the interval $I_{s'}$ that defines the point $y^i_{s'}$ of $R^i$ are at distance at least $s+s'-4$ from each other. Indeed, a shortest path from $I_s$ to $I_{s'}$ contains $s_i$ or a neighbour $J$ of $s_i$. In the best case, $J$ is the interval $]x^i_2,y^i_2[$ and then $d(I_s,I_{s'})=d(I_s,J)+d(J,I_{s'})\leq s-2+s'-2$.
Therefore, we have $s+s'-4\leq D$ and $L^i\cup R^i$ contains at most $D+4$ points.

Consider now the union of all the sets $L^i\cup R^i$. Each of these sets has at most $D+4$ points 
and they all have two common points at the extremities. Thus the union contains at most $k(D+2)+2$ 
distinct points on the real line and thus defines a natural partition $\mathcal P$ of $\mathbb R$ 
into at most $k(D+2)+1$ intervals (we do not count the intervals before and after the extremities 
since no intervals can end or start there).
Any interval of $V(G)\setminus S$ is uniquely determined by the positions of its endpoints in $\mathcal P$.
Let $I\in V(G)\setminus S$. For a fixed $i$, by definition of the sets $L^i$, the interval $I$ cannot contain two points of $L^i$ and similarly, it cannot contain two points of $R^i$. Thus, $I$ contains at most $2k$ points of the union of all the sets $L^i$ and $R^i$. Therefore, if $P$ denotes a part of $\mathcal P$, there are at most $2k+1$ intervals with left endpoints in $P$.
In total, there are at most $(k(D+2)+1)\cdot (2k+1)$ intervals in $V(G)\setminus S$ and 
\begin{align*}
|V(G)| & \leq (k(D+2)+1)\cdot (2k+1)+k\\
& =2k^2D+4k^2+kD+5k+1.\hfill\qedhere
\end{align*}
\end{proof}

The bound of Theorem~\ref{thm:interval-MB} is tight up to a constant factor:

\begin{proposition}\label{prop:md-interval-tight}
For every $k\geq 1$ and $D\geq 2$, there exists an interval graph with diameter $D$, a resolving set of size~$k$, and $\Theta(Dk^2)$ vertices.
\end{proposition}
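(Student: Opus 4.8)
The plan is to generalise the extremal construction of Proposition~\ref{prop:interval-sqrt-tight}, which already realises the case $D=2$ with $\Theta(k^2)$ vertices, by ``stretching'' the real line. Concretely, I would place along the line $\Theta(D)$ consecutive, mutually overlapping \emph{blocks}, each block being a (slightly modified, translated) copy of the family $\mathcal F=\{\,]i,j[\ :\ 1\le i<j\le k+1\,\}$ used in Proposition~\ref{prop:interval-sqrt-tight}, arranged so that (a) consecutive blocks share one interval, so that the whole graph is connected, and (b) a shortest path from the leftmost to the rightmost interval crosses the blocks in order, gaining a bounded amount of distance per block. Choosing the number of blocks to be $\Theta(D)$ then makes the diameter about $D$ (tune the count so that it is at most $D$) and yields $\Theta(D)\cdot\Theta(k^2)=\Theta(Dk^2)$ vertices. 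Since the vertex count and the diameter are then immediate, the whole content is to exhibit a resolving set of size $O(k)$, which one may afterwards pad to exactly $k$ (supersets of resolving sets are resolving sets).

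For the resolving set I would use a bounded number of \emph{global reference} intervals located at the two extremities of the construction, together with $\Theta(k)$ \emph{local probe} intervals. The role of the extremal intervals is that, in an interval graph, the distance from the leftmost (respectively rightmost) interval to any interval $I$ is governed only by the position of the left (respectively right) endpoint of $I$ in a fixed monotone ``reach sequence'' --- exactly the sequences $L^i$, $R^i$ appearing in the proof of Theorem~\ref{thm:interval-MB}. Because in this layout these reach sequences advance by roughly one block per step, the distances to the extremal intervals recover, up to a bounded additive error, which block $I$ lies in; in particular any two intervals lying in blocks that are far apart are already separated. Two intervals lying in the same block, or in a bounded window of consecutive blocks, form a configuration isomorphic to one inside a copy of $\mathcal F$, and two distinct intervals $]i,j[\neq\,]i',j'[$ of $\mathcal F$ necessarily differ in their left or in their right endpoint; a suitable choice of $\Theta(k)$ short probe intervals sitting at the ``joints'' $i\in\{2,\dots,k+1\}$ of the blocks then separates such a pair, the verification running block by block exactly as in the argument of Proposition~\ref{prop:interval-sqrt-tight} that the subfamily $\{\,]i,i+1[\,\}$ is an identifying code.

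The hard part is the tension between having $\Theta(Dk^2)$ vertices and only $\Theta(k)$ probes: a naive block-by-block scheme would require $\Theta(k)$ separators inside \emph{each} block, hence $\Theta(Dk)$ in total. Making the same $\Theta(k)$ local probes serve all $\Theta(D)$ blocks simultaneously forces the internal structure of the blocks and their placement to be co-designed --- for instance so that each block is essentially distinguished ``from outside'' by the global references, the local probes being needed only to kill the bounded-error ambiguity between consecutive blocks; this in turn requires the blocks to contain no long interval that would wash out their fine internal structure when probed from a distance, so the modified block cannot be exactly $\mathcal F$ (which contains a near-universal interval) but a variant with comparable parameters. I expect the bulk of the work, and the place where the parameters must be tuned, to be precisely this verification --- that a fixed $O(1)$ global plus $\Theta(k)$ local intervals separate \emph{every} pair of vertices across all $\Theta(D)$ blocks, boundary pairs included --- after which a routine arithmetic check gives diameter at most $D$, a resolving set of size at most $k$, and order $\Theta(Dk^2)$, matching Theorem~\ref{thm:interval-MB} up to a constant factor.
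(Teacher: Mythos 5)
There is a genuine gap here, and you have in fact put your finger on it yourself: the entire content of the proposition is the claim that $\Theta(k)$ resolving vertices can distinguish $\Theta(k^2)$ intervals in \emph{each} of $\Theta(D)$ blocks, and your proposal leaves exactly that step as ``the place where the parameters must be tuned.'' Worse, the concrete suggestion you do make --- blocks that are (modified) copies of $\mathcal F$ with probe intervals sitting at the joints $]i,i+1[$ of the blocks --- cannot work as described. Recall from the proof of Theorem~\ref{thm:interval-MB} that the distance from an interval $I$ to a probe $s$ is determined solely by the positions of the endpoints of $I$ relative to the reach sequences $L, R$ of $s$, and that the number of intervals resolvable inside a region is bounded by the number of reach points falling in that region times $2k+1$. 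If several probes sit inside the same copy of $\mathcal F$, their reach sequences collapse onto one another after a single step (e.g.\ for $s_i=\,]i,i+1[$ the next left reach point is $1$ for every $i$, because of the long intervals $]1,j[$), so viewed from any other block all these probes induce essentially the same partition of the line. A block far from all probes then contains only $O(k)$ usable cells rather than the $\Theta(k^2)$ you need, and no local re-design of a single block repairs this: what is required is that the $\Theta(k)$ reach sequences remain \emph{staggered relative to each other along the entire length of the construction}.

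That staggering is precisely the paper's key idea, and it leads to a rather different-looking construction: instead of $\Theta(D)$ dense blocks, take $k/2$ translated copies of a path of length $D-1$, realized as intervals $I_{i,j}=\,](j-1)L+i,\ jL+1/2+i[$, and let the resolving set be the $k$ \emph{endpoints} of these paths (so all resolving vertices are ``global references'' at the two extremities; there are no local probes at all). Because the $i$-th path is offset by $i$, the reach sequences of the $k$ resolving vertices interleave and stay interleaved for all $j$, producing $\Theta(k)$ distinct reach points inside every window between consecutive layers; one then packs $k/2+1$ short intervals into each such window for each of the $k/2$ paths and each of the $\Theta(D)$ layers, and distinctness of the distance vectors is immediate because consecutive added intervals start or end between different consecutive reach points. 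If you want to salvage your block-based plan, the repair is to make each ``block'' consist of $\Theta(k)$ mutually offset intervals whose offsets are inherited from one block to the next --- at which point you have reconstructed the staggered-paths construction.
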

\begin{proof}
Assume that $k$ is even (a similar construction can be done if $k$ is
odd) and $D\geq 2$. Let $L>k/2$. For $i\in \{1,\ldots,k/2\}$ and $j\in \{1,\ldots,D\}$, we define the interval 
$I_{i,j}=](j-1)L+i,jL+1/2+i[$. The intervals $I_{i,j}$ for a fixed $i$ induce a path
of length $D-1$. See Figure~\ref{fig:tightmb} for an illustration with $k=6$ and $D=5$.

Let $s_i=I_{i,1}$ for $1\leq i \leq k/2$ and $s_i=I_{i-k/2,D}$ for $k/2<i\leq k$.
Using the notations of the proof of Theorem~\ref{thm:interval-MB},
one can note that, if $1\leq i \leq k/2$, then $y_j^i=jL+1/2+i$ and if $k/2<i\leq k$, 
then $x_j^i=(j-1)L+(i-k/2)$.

In particular for $1\leq i \leq k/2$ and $1<j<D$ we have: 

$$d(I_{i,j},s_{i'}) = \begin{cases}
    j-1 & \text{ if } i\leq i' \\
    j & \text{ if } i>i'\\
   \end{cases}
$$

and, for $k/2< i \leq k$ and $1<j<D$:

$$d(I_{i,j},s_{i'}) = \begin{cases}
    D-j & \text{ if } i\geq i'+k/2 \\
   D-j+1& \text{ if } i<i'+k/2.\\
   \end{cases} 
$$

Therefore, the set of intervals $\mathcal S=\{s_i , 1\leq i\leq k\}$ is a resolving set.

We add some intervals that do not influence the shortest paths between the intervals $I_{i,j}$ (in particular, the distances from $I_{i,j}$ to $\mathcal S$ do not change). 
First note that all the intervals $I_{i,j}$ have the same length. Thus there is a natural order on these intervals which is actually defined by $I_{i,j}<I_{i',j'}$ if and only if $j<j'$ or $j=j'$ and $i<i'$. In particular, any set of $k/2$ intervals that are consecutive for this order do not contain two intervals $I_{i,j}$ and $I_{i',j'}$ with $i=i'$.

Consider a particular interval $J=I_{i,j}$ with $2\leq j\leq D-2$. We add $k/2+1$ intervals after the end of $J$ in the following way. Consider the set $\{J_0<J_1<\cdots<J_{k/2}\}$ of the first $k/2+1$ intervals starting after the end of $J$. Note that $J_0$ and $J_{k/2}$ correspond to a pair of intervals $I_{i,j}$, $I_{i',j'}$ with $i=i'$. For each interval $J_s$, add an interval starting between the end of $J$ and the beginning of $J_0$ and finishing before the beginning of $J_s$ and after the beginning of $J_{s-1}$ if $s\neq 0$. See Figure~\ref{fig:tightmb} for an illustration of the intervals added in a particular example ($J=I_{3,2}$).
These intervals are all finishing before the end of $J_{k/2}$ and thus are not changing the shortest paths and the values of $x_j^i$ and $y_j^i$. 

All the intervals added this way have distinct distances to set $\mathcal S$. Indeed, either they are starting between two different consecutive pairs $y_j^i$ or finishing between different consecutive pairs $x_j^i$. There are in total $kD+(k/2+1)(D-2)k/2=\Theta(Dk^2)$ intervals in this graph and its diameter is $D$.
\end{proof}

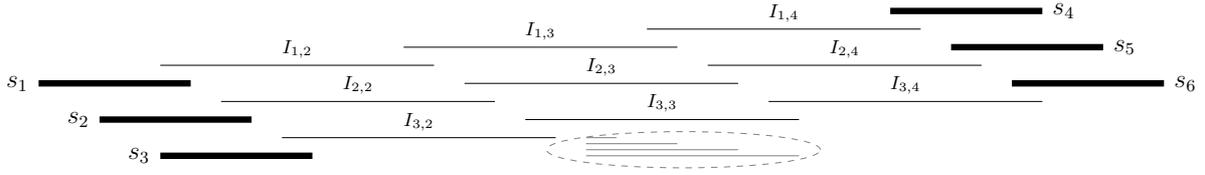
\begin{figure}[ht]
\begin{center}
\scalebox{0.8}{\begin{tikzpicture}
\tikzstyle{intervalle}=[]
\tikzstyle{intervallecode}=[line width=3pt]

\foreach \I in {1,2,3}
\draw[intervallecode] (\I+2,0.8-\I*0.6) node[left] {\large $s_{\I}$} -- (4.5+\I,0.8-\I*0.6);
\foreach \I in {4,5,6}
\draw[intervallecode] (16+\I-3,3.8-\I*0.6)--(18.5+\I-3,3.8-\I*0.6) node[right] {\large $s_{\I}$};

\foreach \I in {1,2,3}
\foreach \J in {2,3,4}
\draw[intervalle] (\J*4-4+\I,0.8-\I*0.6+0.3*\J-0.3) --node[above=0] {\small $I_{\I,\J}$} (\J*4+\I+0.5,0.8-\I*0.6+0.3*\J-0.3);

\draw[intervalle,gray] (12,-0.7)--(12.5,-0.7);
\draw[intervalle, gray] (12,-0.8)--(13.5,-0.8);
\draw[intervalle, gray] (12,-0.9)--(14.5,-0.9);
\draw[intervalle, gray] (12,-1)--(15.5,-1);

\draw[gray, dashed] (13.6,-0.9) ellipse (2.25cm and 0.3cm);
\end{tikzpicture}}
\caption{\label{fig:tightmb} An interval graph of Proposition~\ref{prop:md-interval-tight}, reaching the order of the lower bound of Theorem~\ref{thm:interval-MB}. The construction is done for diameter $D=5$ and resolving set size $k=6$. 
The intervals inside the dashed ellipse are the intervals that are added after the end of $I_{3,2}$. Similar intervals are added after the end of each interval $I_{i,j}$ for $2\leq j\leq 3$. The intervals $\{s_1,\ldots,s_6\}$ (in bold) form a resolving set.
}
\end{center}
\end{figure}

\section{Unit interval graphs}\label{sec:unit}

Using similar ideas as for Theorem~\ref{thm:interval-sqrt}
and~\ref{thm:interval-MB}, we are able to give improved bounds for unit
interval graphs.

\begin{theorem}\label{thm:lowunit}
Let $G$ be a unit interval graph on $n$ vertices and let $S$ be a
subset of vertices of size $k$. If $S$ is an open locating-dominating
set or an identifying code of $G$, then $n\leq 2k-1$. If $S$ is a
locating-dominating set of $G$, then $n\leq 3k-1$. Hence, $\M(G)\geq
\frac{n+1}{2}$, $\OLD(G) \geq \frac{n+1}{2}$, and $\LD(G)\geq
\frac{n+1}{3}$.
\end{theorem}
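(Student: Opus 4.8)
The plan is to mimic the proof of Theorem~\ref{thm:interval-sqrt}, but to exploit the extra rigidity of a unit interval representation. First I would fix a unit interval representation of $G$ and order the solution vertices $S=\{x_1,\dots,x_k\}$ by their left endpoints (equivalently, by their right endpoints, since all intervals have the same length these two orders coincide). As in the interval case, I would partition $V(G)$ according to where the left endpoint of each interval falls relative to the endpoints of the $x_i$'s. The key new observation is that in a unit interval graph, if an interval $I$ intersects two solution intervals $x_i$ and $x_j$ with $i<j$, then since $x_i,x_j,I$ all have the same length, $I$ must also intersect every $x_\ell$ with $i\le \ell\le j$; moreover the set of solution intervals meeting $I$ is an interval of indices whose "span" is tightly controlled by the length constraint.

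The main step is to show that each class of the partition has \emph{bounded size}, in fact at most two vertices (for the identifying code / OLD case). Concretely, I would argue that two distinct non-solution vertices in the same class that are not separated would have to have nearly identical endpoints; using that all intervals have unit length, I would show that at most two vertices can share the same "trace" on $S$ together with the same position relative to the partition, because beyond that, two of them would be twins (same closed neighbourhood) or open twins, contradicting that $S$ is an identifying code or an OLD set. Summing $2$ over the roughly $k$ classes, plus accounting for the $k$ solution vertices themselves, should give $n\le 2k-1$ after a careful off-by-one count at the extreme classes. For locating-dominating sets, the same reasoning applies except that one does not need to separate the $k$ vertices of $S$ from anything, so the count loosens to roughly $3k-1$; I would track the extra slack exactly as in the proof of Theorem~\ref{thm:interval-sqrt}.

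The hard part will be the twin/separation argument inside a single class: I need to pin down precisely how many non-solution intervals can simultaneously (i) have left endpoint in a fixed elementary sub-interval determined by the endpoints of $S$, (ii) have the same neighbourhood within $S$, and (iii) still be pairwise separated (or totally separated) by $S$. This requires using the unit-length condition to convert "same neighbourhood in $S$" into "endpoints confined to a tiny window", and then observing that within such a window any two vertices are (open) twins in the whole graph, which $S$ cannot separate. I expect the clean bound of $2$ per class to drop out once this is set up correctly, though getting the boundary classes $\mathcal{E}_1$ and $\mathcal{E}_k$ to contribute the right constant (so that the final bound is exactly $2k-1$ and not $2k+c$) will need the same kind of careful bookkeeping used at the end of the proof of Theorem~\ref{thm:interval-sqrt}. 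Finally, the inequalities $\M(G)\ge\frac{n+1}{2}$, $\OLD(G)\ge\frac{n+1}{2}$ and $\LD(G)\ge\frac{n+1}{3}$ follow immediately by rearranging $n\le 2k-1$ and $n\le 3k-1$, together with the chain $\M(G)\le\LD(G)\le\ID(G)$ and $\LD(G)\le\OLD(G)$ noted in the introduction.
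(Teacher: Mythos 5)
There is a genuine gap in the core counting step. Your plan partitions $V(G)$ by where left endpoints fall relative to \emph{the endpoints of the solution intervals}, and then claims each class contains at most two pairwise-separated vertices. Both parts fail. First, for open unit intervals, an interval $J$ intersects a solution interval $x$ with left endpoint $\ell_x$ if and only if $\ell_J \in\, ]\ell_x-1,\ell_x+1[$; the adjacency ``transition points'' are therefore $\ell_x-1$ and $\ell_x+1$, and the point $\ell_x$ itself is irrelevant. A partition by the actual endpoints of the $x_i$'s does \emph{not} make the trace on $S$ constant on each class. Second, a single class of your partition can contain many pairwise-separated vertices: take $x_m=\,]0.01(m-1),\,1+0.01(m-1)[$ for $m=1,\dots,k$; then for each $m$ there is an interval with left endpoint in $]0.01(m-1)-1,\,0.01m-1[$ whose neighbourhood in $S$ is exactly $\{x_1,\dots,x_m\}$, and all $k$ of these lie in the same class $\mathcal E_1$ of the Theorem~\ref{thm:interval-sqrt}-style partition. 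Moreover, the correct per-class bound is \emph{one}, not two: two vertices with the same trace on $S$ are simply not separated, whether or not they are twins in the whole graph (the ``twins in $G$'' framing is a red herring — only the neighbourhood within $S$ matters). Finally, the arithmetic of your plan cannot produce $2k-1$: with roughly $k$ (or $2k$) classes and two vertices each, no boundary bookkeeping reduces $2k$ or $4k$ to $2k-1$.

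The paper's proof fixes all of this at once: it sorts the $2k$ points $T=\{\ell_x-1,\,\ell_x+1 : x\in S\}$ as $t_1\le\cdots\le t_{2k}$, shows that any two intervals with left endpoints in the same window $]t_i,t_{i+1}]$ intersect exactly the same subset of $S$ (because a solution interval distinguishing them would force one of its shifted endpoints into the open window, contradicting that all such points lie in $T$), and concludes that at most one interval of $G$ starts in each of the $2k-1$ interior windows, while domination excludes intervals starting before $t_1$ or after $t_{2k}$. That yields $n\le 2k-1$ directly, with the $3k-1$ bound for locating-dominating sets obtained by exempting the $k$ solution vertices from the count. Your closing derivation of the lower bounds on $\M$, $\OLD$ and $\LD$ from these inequalities is fine, but the main inequality is not established by the argument you outline.
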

\begin{proof}
We consider a representation of $G$ with open unit intervals and we
denote by $\ell_I$ and $r_I$ the endpoints of the interval $I$. Consider
an identifying code or open locating-dominating set $S$ of size $k$.
Consider the set of points $T=\{\ell_I-1,\ell_I+1, \text{for all }I\in S\}$, and sort
$T$ by increasing order: $T=\{t_{1}\leq t_{2}\ldots \leq
t_{2k}\}$. Now, consider two intervals $I,I'$ such that
$\ell_I,\ell_{I'}\in~]t_{i},t_{i+1}]$. Then $I$ and $I'$ have the same
intersection under $S$. Indeed, if it is not the case, and if we
assume that $\ell_I<\ell_{I'}$, there must be an interval $I_{0}$ of
$S$ such that $\ell_I\leq r_{I_{0}}<\ell_{I'}$ ($I_{0}$ intersects $I$
but not $I'$) or such that $r_I\leq \ell_{I_{0}}<r_{I'}$ ($I_{0}$
intersects $I'$ but not $I$). But in both cases we have either
$\ell_{I_{0}}+1 \in ]t_{i},t_{i+1}[$ or $\ell_{I_{0}}-1 \in
]t_{i},t_{i+1}[$, a contradiction.

So, we must have at most one interval beginning in each period
$]t_{i},t_{i+1}]$. It is not possible to have an interval beginning
before $t_{1}$ or after $t_{i+1}$ because $S$ is also a dominating
set. Hence, there at most $2k-1$ intervals in $G$, and we are done.

By similar arguments, if $S$ is a locating-dominating set, we obtain
that there are at most $2k-1$ vertices in $V(G)\setminus S$, hence in
total at most $3k-1$ intervals.
\end{proof}

\begin{proposition}
The bounds of Theorem~\ref{thm:lowunit} are tight for every $k\geq 1$.
\end{proposition}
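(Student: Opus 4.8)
The plan is to produce, for each $k$, three unit interval graphs carrying solution sets of size exactly~$k$ that meet the three bounds of Theorem~\ref{thm:lowunit}. Since paths, squares of paths, and the operation of duplicating a vertex into a twin all stay inside the class of unit interval graphs, verifying that each construction is a unit interval graph is routine; write $P_m$ for the path $v_1,\dots,v_m$ and $P_m^2$ for its square (for instance, represent $v_i$ by the open interval $]i/3,\,i/3+1[$, so that $v_i\sim v_j$ iff $|i-j|\le 2$).

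For \emph{identifying codes}, I would take $G=P_{2k-1}$ and let $S$ be the set of the $k$ vertices in odd position. One checks that $N[v_{2i-1}]\cap S=\{v_{2i-1}\}$ and $N[v_{2i}]\cap S=\{v_{2i-1},v_{2i+1}\}$; these $2k-1$ sets are nonempty and pairwise distinct (the $k$ singletons are distinct, the $k-1$ consecutive pairs are distinct, and no singleton equals a pair), so $S$ dominates and separates $V(G)$ and $n=2k-1$ ($K_1$ when $k=1$). For \emph{locating-dominating sets}, I would start again from $P_{2k-1}$ and add, for each odd position $2i-1$, a twin $v_{2i-1}'$ of $v_{2i-1}$ (a new vertex with the same closed neighbourhood, obtained by duplicating the corresponding interval), giving $n=(2k-1)+k=3k-1$ vertices; with $S$ the set of the $k$ original odd-position vertices, $S$ dominates $G$, and the $2k-1$ vertices of $V(G)\setminus S$ are separated since $N(v_{2i})\cap S=\{v_{2i-1},v_{2i+1}\}$ while $N(v_{2i-1}')\cap S=\{v_{2i-1}\}$, again nonempty and pairwise distinct ($K_2$ when $k=1$).

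The delicate case is \emph{open locating-dominating sets}. For $k\ge 2$ with $k\ne 3$, I would take $G=P_{2k-1}^2$ with $S$ the $k$ odd-position vertices: then every vertex has a neighbour in $S$, and $v\mapsto N(v)\cap S$ produces only nonempty pairwise distinct sets, namely the pairs $\{v_{2i-1},v_{2i+1}\}$ and $\{v_{2i-3},v_{2i+1}\}$ together with the two singletons $\{v_3\}$ (from $v_1$) and $\{v_{2k-3}\}$ (from $v_{2k-1}$). This breaks exactly when those two singletons coincide, i.e. when $k=3$; this is precisely where the squared-path family cannot realize the extremal configuration of the proof of Theorem~\ref{thm:lowunit} (one interval beginning per period), and a genuinely different graph is needed. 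For $k=3$ I would instead use the $5$-vertex unit interval graph on $v_1<\dots<v_5$ with edge set $\{v_1v_2,v_2v_3,v_2v_4,v_3v_4,v_4v_5\}$ (that is, $P_5^2$ with the edges $v_1v_3$ and $v_3v_5$ deleted) and $S=\{v_2,v_3,v_4\}$: here the five sets $N(v)\cap S$ are $\{v_2\},\{v_3,v_4\},\{v_2,v_4\},\{v_2,v_3\},\{v_4\}$, which are nonempty and pairwise distinct, and $S$ is a total dominating set, so $n=5=2\cdot 3-1$ ($k=1$ being vacuous, since no graph admits an open locating-dominating set of size~$1$).

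The main obstacle is thus isolating and treating this single exceptional value $k=3$ for open locating-dominating sets; once that is done, everything reduces to the bookkeeping of verifying nonemptiness and pairwise distinctness of the listed neighbourhoods within~$S$, together with the routine fact that all the graphs involved are unit interval graphs.
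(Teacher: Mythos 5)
Your constructions for identifying codes and locating-dominating sets are exactly the paper's: the odd path $P_{2k-1}$ with the $k$ odd-position vertices as the code, and the same path with a closed twin duplicated onto each code vertex for location-domination. For open locating-dominating sets you take a genuinely different route. The paper builds a path $P_{3k-1}$, takes as solution the $2k$ vertices whose index is $1$ or $2$ modulo $3$, and attaches $k$ extra intervals, obtaining $4k-1$ vertices; that family only realizes solution sets of \emph{even} size, so as written the paper's proof exhibits tightness of $n\le 2k-1$ only for even solution sizes. Your squared paths $P_{2k-1}^2$ with the odd-position vertices instead cover every $k\ge 2$ except $k=3$ --- where, as you correctly identify, the two endpoint codes $\{v_3\}$ and $\{v_{2k-3}\}$ collide --- and you repair that single case with an explicit five-vertex unit interval graph. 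The price of your approach is this exceptional case plus the need to verify the repaired graph is still a unit interval graph (your explicit representation does this); what it buys is a construction for \emph{all} admissible $k$, which is arguably a more complete justification of the ``for every $k\geq 1$'' claim than the paper's own. All of your neighbourhood computations and distinctness checks are correct, including the observation that $k=1$ is vacuous for open location-domination.
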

\begin{proof}
The bound for identifying codes is reached by odd paths
$P_{2k-1}$. Ordering its intervals $I_1,\ldots,I_{2k-1}$, the set
$S=\{I_i~|~i=1\bmod 2\}$ is an identifying code. For open
locating-dominating sets, consider a path $P_{3k-1}$ whose intervals
are ordered $I_1,\ldots,I_{3k-1}$; let $S=\{I_i~|~i=1,2\bmod 3\}$ and
add $k$ additional intervals $J_1,\ldots,J_k$, where each $J_i$ is adjacent
only to $I_{3i-2}$ and $I_{3i-1}$. It is easy to check that the resulting graph
is a unit interval graph on $4k-1$ vertices. Then $S$ is an open
locating-dominating set. For locating-dominating sets, consider the
odd path $P_{2k-1}$ and the set $S$ defined for identifying codes, and
add to this graph a copy of each interval of $S$.
\end{proof}

We also obtain the following bound for the order of a unit interval graph with a
given metric dimension and a diameter.

\begin{theorem}\label{thm:unit-MD}
Let $G$ be a connected unit interval graph on $n$ vertices, of
diameter~$D$ and with a resolving set of size~$k$. Then $n\leq k(D+2)-2$.
\end{theorem}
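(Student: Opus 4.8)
The plan is to mimic the proof of Theorem~\ref{thm:interval-MB}, but to exploit the much stronger geometric structure of unit interval graphs, where all intervals have the same length and hence cannot be nested. First I would fix a resolving set $S=\{s_1,\ldots,s_k\}$ and, for each $s_i$, reuse the construction of the ordered sets $L^i=\{x^i_1>x^i_2>\cdots\}$ and $R^i=\{y^i_1<y^i_2<\cdots\}$ from the proof of Theorem~\ref{thm:interval-MB}: $x^i_{j+1}$ is the smallest left endpoint among intervals ending strictly after $x^i_j$, and symmetrically for $R^i$. As established there, the distance from any interval $I$ to $s_i$ is determined by the position of the right endpoint of $I$ in $L^i$ together with the position of the left endpoint of $I$ in $R^i$, and $|L^i\cup R^i|\leq D+O(1)$.

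Next I would argue that in a unit interval graph these two data are not independent: since all intervals have unit length, the position of the left endpoint of $I$ already determines the position of its right endpoint (the right endpoint is just a fixed shift). Concretely, I expect that for a unit interval graph the distance from $I$ to $s_i$ is in fact a function of the single real number $\ell_I$ (the left endpoint), and that the set of thresholds separating consecutive distance-classes with respect to $s_i$ is an interval-set of size roughly $D+2$. Taking the union over all $i\in\{1,\ldots,k\}$ of these threshold sets gives at most $k(D+2)$ points (again discarding the outermost ones, which are shared or carry no intervals beyond them), partitioning the line into at most $k(D+2)-1$ cells. Two intervals of $V(G)\setminus S$ whose left endpoints lie in the same cell have identical distance vectors to $S$, hence are equal; so there are at most $k(D+2)-1$ intervals outside $S$, giving $n\leq k(D+2)-1+k$ at first glance — the extra care needed to sharpen this to $k(D+2)-2$ is where I would be most careful.

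The sharpening, which I expect to be the main obstacle, comes from a more frugal accounting of the shared endpoints and the empty boundary regions. In Theorem~\ref{thm:interval-MB} the factor $2k+1$ (intervals per cell) arose because an interval could straddle up to $2k$ threshold points; here, because unit intervals each straddle at most a bounded number of threshold points from a single $s_i$, and because the right-endpoint information is redundant, I would aim to show each cell contains at most one interval of $V(G)\setminus S$, and moreover that the first and last cells are empty (as in the proof of Theorem~\ref{thm:lowunit}, domination forces nothing to start too far left or end too far right). Combining "at most one interval per cell", "roughly $k(D+2)$ threshold points hence $k(D+2)-1$ cells", the emptiness of one extremal cell, and the $k$ vertices of $S$ themselves should yield exactly $n\leq k(D+2)-2$. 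The delicate points are getting the constant in $|L^i\cup R^i|$ exactly right for unit intervals (it should be $D+2$ rather than $D+4$, using that a shortest path between the extremal defining intervals need not detour through a neighbour of $s_i$ in the unit case) and correctly bookkeeping which threshold points coincide across different $i$'s so that no point is double-counted.
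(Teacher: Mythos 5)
Your overall strategy is the paper's own: use the unit length to collapse the two-coordinate description from Theorem~\ref{thm:interval-MB} into a single coordinate (the paper works with the right endpoint and the shifted set $R^i+1$, you with the left endpoint --- these are equivalent), so that each vertex outside $S$ is pinned down by which cell of a one-dimensional partition contains that endpoint, then count cells, use that each cell holds at most one vertex of $V(G)\setminus S$, and discard empty boundary cells. None of that is in question.

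The gap is that your counting stops, as you yourself concede, at $n\leq k(D+2)-1+k$, which overshoots the claimed bound by $k+1$, and the ``more frugal accounting'' you defer to is precisely the content of the proof. The paper recovers the deficit with three concrete facts you would need to supply. First, $|L^i\cup(R^i+1)|\leq D+3$ (not $D+2$ as you guess, nor $D+4$ as in the general interval case), obtained from the shortest path between the extremal intervals defining $x^i_s$ and $y^i_{s'}$, whose length is at least $s+s'-3$. Second --- the step your sketch only gestures at with ``bookkeeping which threshold points coincide across different $i$'s'' --- the leftmost point of every $L^i$ is the globally smallest left endpoint of $G$ and the rightmost point of every $R^i$ is the globally largest right endpoint, \emph{independently of $i$}; so the $k$ merged sets share two common extremal points and their union has at most $k(D+3)-2(k-1)=kD+k+2$ points, which for $k\geq 3$ is strictly smaller than your $k(D+2)$. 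Third, connectivity forces that no interval ends before the first two points or after the last two points of this union, removing four slots and leaving at most $kD+k-2$ admissible positions, whence $n\leq kD+k-2+k=k(D+2)-2$. Without locating the $2(k-1)$ coincidences at the two global extremes and the four empty boundary slots, the arithmetic does not close, so as written the proposal proves only the weaker bound it states in its second paragraph.
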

\begin{proof}
The proof is similar to the one of Theorem~\ref{thm:interval-MB},
except that now the right endpoint of an interval is determined
by its left endpoint. Let $s_1,\ldots,s_k$ be the elements of a resolving set $S$ of size~$k$. For
each $i$ in $\{1,\ldots,k\}$, $\ell_i$ is the left endpoint of $s_i$,
and $r_i=\ell_i+1$ is its right endpoint. Define an ordered set
$L^i=\{x^i_1>x^i_2>\ldots>x^i_s\}$, where $x^i_1=\ell_i$ (for $1<j\leq s$),
$x^i_{j+1}$ is the leftmost endpoint of an interval stopping strictly
after $x^i_{j}$ and $x^i_s=x^i_{s+1}$.  Similarly, $R^i=\{y^i_1>y^i_2>\ldots>y^i_{s'}\}$,
 with $y^i_1=r_i$, for $1<j\leq s'$, $y^i_{j+1}$ is the
rightmost endpoint of an interval starting strictly before
$y^i_{j}$ and $y^i_{s'}=y^i_{s'+1}$. In this way, the distance of an interval $I$ to $s_i$ is
determined by the position of the right endpoint of $I$ among the
points of $L^i$ and the left endpoint of $I$ among the points of $R^i$. 
Since the intervals have unit length, the position of the left endpoint of 
$I$ in $R^i$ is determined by the position of the right endpoint of $I$ in $R^i+1$
(where $R^i+1$ denotes the set $\{x+1|x \in R^i\}$). 
Therefore the distance of an interval $I$ to $s_i$ 
is determined by the position of the right endpoint of $I$ among $L_i\cup (R^i+1)$.

The distance between the leftmost and the righmost neighbor of $s_i$ is at least $s+s'-3$. Therefore, we have $|L^i\cup (R^i+1)|\leq D+3$. However, for any $i,i'$ the leftmost point of $L^i$ and
$L^{i'}$ are equal, as well as the rightmost point of $R^i$ and of
$R^{i'}$. Hence, in total, the union of all sets $L^i$ and $R^i+1$
contains at most $kD+k+2$ points, and the distance of an interval in
$V(G)\setminus S$ to elements of $S$ is determined by its position
compared to the ordering of these points. Moreover, no interval can
end before the two first points or after the two last points of $R^i+1$, so in total there are at most $kD+k-2$ possibilities. Hence $n\leq kD+k-2+k=k(D+2)-2$.
\end{proof}

Next, we show that the bound of Theorem~\ref{thm:unit-MD} is almost tight.

\begin{proposition}
For every $k\geq 1$ and $D\geq 1$, there exists a unit interval graph of diameter $D$, a resolving set of size~$k$, and $kD+1$ vertices.
\end{proposition}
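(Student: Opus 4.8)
The plan is to reproduce, with honest unit intervals, the ``$k$ parallel staircased paths'' idea behind Proposition~\ref{prop:md-interval-tight}. In the unit-interval world one cannot pack many auxiliary intervals into a single slot (their right endpoints would all have to lie within distance~$1$), so the construction only reaches order $\Theta(kD)$; the single extra vertex beyond the obvious $k\times D$ grid is produced by a small modification at one end. Combinatorially, the graph $G_{k,D}$ I would use has vertex set $\{u^i_j : 1\le i\le k,\ 1\le j\le D\}\cup\{w\}$ with adjacencies: each column $C_j=\{u^1_j,\dots,u^k_j\}$ is a clique; $u^i_j\sim u^{i'}_{j+1}$ if and only if $i\ge i'$; columns $C_j,C_{j'}$ with $|j-j'|\ge 2$ are completely non-adjacent; and $w$ is adjacent to exactly $C_1$. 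To see $G_{k,D}$ is a unit interval graph I would exhibit the explicit representation $u^i_j=(c_j+i\delta,\ c_j+i\delta+1)$ and $w=(c_0+k\delta,\ c_0+k\delta+1)$ with $\delta=\tfrac1{k+1}$ and $c_j=j(1-\tfrac\delta2)$ for $0\le j\le D$, whereupon the four adjacency properties follow from an elementary comparison of endpoints. Note that $|V(G_{k,D})|=kD+1$, that each row $u^i_1-u^i_2-\cdots-u^i_D$ is an induced path, and that $w$ behaves like a vertex appended to all of $C_1$; for $k=1$ the graph degenerates to the path $P_{D+1}$.

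Next I would record the distances. Since each row is an induced path and each column is a clique, $d(u^i_j,u^{i'}_{j'})\le|j-j'|+1\le D$ and $d(w,u^i_j)\le j\le D$; assigning the ``column label'' $0$ to $w$ and $j$ to $u^i_j$, adjacent vertices differ by at most~$1$ in label, which forces $d(w,u^i_j)\ge j$ and hence $d(w,u^i_j)=j$. As $d(w,u^1_D)=D$, the diameter is exactly~$D$. Exactly as in Proposition~\ref{prop:md-interval-tight}, I then get that for $2\le j\le D$ one has $d(u^m_1,u^i_j)=j-1$ if $m\ge i$ and $d(u^m_1,u^i_j)=j$ if $m<i$: a walk of length $j-1$ from $u^i_j$ must decrease the column at every step, and by the staircase rule such a step never decreases the row index, so it exists only when $m\ge i$ (which gives the lower bound $j$ in the remaining case), while the paths $u^i_j\to u^m_{j-1}\to\cdots\to u^m_1$ (legal when $m\ge i$) and $u^i_j\to u^m_j\to u^m_{j-1}\to\cdots\to u^m_1$ supply the matching upper bounds; inside $C_1$ the distance is $0$ or $1$.

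Finally I would check that $S=\{w,u^1_1,u^2_1,\dots,u^{k-1}_1\}$, which has size $k$, is a resolving set by listing the distance vectors with respect to $S$. From the formulas above, $w$ has vector $(0,1,\dots,1)$; for $i<k$ the vector of $u^i_1$ has a single $0$, at the coordinate of $u^i_1$, and $1$'s elsewhere; $u^k_1$ has vector $(1,\dots,1)$; and for $j\ge 2$ the vector of $u^i_j$ has first coordinate $j$, with exactly $i$ of its coordinates equal to $j$ and the remaining $k-i$ equal to $j-1$. These $kD+1$ vectors are pairwise distinct: the first coordinate isolates $w$, isolates the whole of $C_1$, and separates the columns $C_j$ for distinct $j\ge2$; within $C_1$ the position (or absence, in the case of $u^k_1$) of the unique $0$ distinguishes the vertices; and within a fixed $C_j$ with $j\ge2$ the number of coordinates equal to $j$ recovers $i$. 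Hence $S$ resolves $G_{k,D}$, and the proposition follows.

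The step I expect to be the real obstacle is finding this ``$+1$'' vertex at all: the bare $k\times D$ staircase grid is metric-dimension-saturated, in the sense that attaching a pendant at a corner, or a vertex adjacent to a whole end column, either creates (open or closed) twins or reproduces a distance vector already used by the natural size-$k$ resolving set $\{u^1_1,\dots,u^k_1\}$ (the ``all-ones'' vector is already taken by $u^1_2$). What makes it work is to add $w$ at the far left \emph{and simultaneously} drop $u^k_1$ from the resolving set in favour of $w$: because $w$ lies at distance $\ge2$ from every vertex of $C_2\cup\cdots\cup C_D$ but at distance $1$ from all of $C_1$, including $w$ raises the first coordinate of every ``deep'' vertex, which is precisely what destroys the old collision and makes all $kD+1$ vectors distinct.
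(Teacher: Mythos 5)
Your construction is correct and is in fact the paper's own in disguise: the graph $G_{k,D}$ you build is isomorphic to the $k$-th distance-power $P^k_{kD+1}$ of a path on $kD+1$ vertices used in the paper (map $w\mapsto v_0$ and $u^i_j\mapsto v_{(j-1)k+i}$, so that columns become blocks of $k$ consecutive path-vertices), and your resolving set $\{w,u^1_1,\dots,u^{k-1}_1\}$ corresponds exactly to the paper's $\{v_0,\dots,v_{k-1}\}$. The only differences are presentational: you describe the graph as a staircase grid plus one extra vertex rather than as a path power, and you carry out the distance computations and the unit-interval representation in more explicit detail.
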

\begin{proof}
For any $k,D\geq 1$ and $n=kD$, consider the $k$-th distance-power
$P^k_{kD+1}$ of a path on $kD+1$ vertices (that is, two vertices are
adjacent if and only if their distance is at most $k$ in the path
$P_{kD}$). This graph is a unit interval graph of diameter $D$. Let
$\{v_0,\ldots,v_{kD}\}$ be its vertices, ordered according the
natural order of the path. Then, the set $S=\{v_0,\ldots,v_{k-1}\}$
forms a resolving set. Indeed, for every $i,j$ with $1\leq i\leq D-1$
and $0\leq j\leq k-1$, vertex $v_{ik+j}$ is the unique vertex at
distance $i+1$ from all vertices in $\{v_0,\ldots,v_{j-1}\}$ (if
$j>0$) and at distance~$i$ from all vertices in
$\{v_{j},\ldots,v_{k-1}\}$ and vertex $v_{kD}$ is at distance $D$ from all the vertices of $S$.
\end{proof}

\section{Permutation graphs}\label{sec:perm}

We now give bounds for permutation graphs.

\begin{theorem}\label{thm:permutation-sqrt}
Let $G$ be a permutation graph on $n$ vertices and let $S$ be a $k$-subset of
$V(G)$ with $k\geq 3$. If $S$ is an open locating-dominating set
or an identifying code of $G$, then $n\leq k^2-2$. If $S$ is a
locating-dominating set of $G$, then $n\leq k^2+k-2$. Hence, $\M(G)\geq
\sqrt{n+2}$, $\OLD(G)\geq \sqrt{n+2}$ and $\LD(G)\geq
\sqrt{n+\tfrac{9}{4}}-\tfrac{1}{2}$.
\end{theorem}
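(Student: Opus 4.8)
The plan is to mimic the proof of Theorem~\ref{thm:interval-sqrt}, replacing the single left-to-right ordering of intervals by the two orderings coming from the intersection model of a permutation graph on two parallel lines $B$ and $T$. First I would fix a segment representation of $G$: each vertex $v$ becomes a segment with an endpoint $b(v)$ on the bottom line $B$ and an endpoint $t(v)$ on the top line $T$, and two vertices are adjacent if and only if their segments cross, i.e. their orders on $B$ and on $T$ disagree. Write $S=\{x_1,\dots,x_k\}$. The key idea is that membership in the neighbourhood of $S$ of a vertex $v$ is governed by the relative position of $b(v)$ among the $b(x_i)$ and of $t(v)$ among the $t(x_i)$: the $k$ bottom endpoints $b(x_1),\dots,b(x_k)$ split $B$ into $k+1$ slots and similarly for $T$, so every vertex $v\notin S$ falls into one of at most $(k+1)^2$ combinatorial ``cells'' determined by (bottom slot, top slot). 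Two vertices in the same cell have the same neighbourhood within $S$, since each $x_i$ crosses $v$ exactly when $b(x_i)$ and $v$'s bottom slot, together with $t(x_i)$ and $v$'s top slot, force the two orders to disagree — a condition depending only on the cell. Hence if $S$ is an identifying code or an open locating-dominating set, each cell contains at most one vertex of $V(G)\setminus S$, giving $n\le k+(k+1)^2$; for a locating-dominating set the bound is the same with the $+k$ already counted among the cells, and in fact one reasons as before that the $k$ vertices of $S$ are ``free''.

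The naive count $(k+1)^2$ is weaker than the claimed $k^2-2$, so the second and more delicate part is to prune impossible or redundant cells, which is where I expect the main difficulty to lie. Several cells cannot contain a vertex of $V(G)\setminus S$ without contradicting domination: for instance the ``extreme'' cells whose bottom slot is entirely to the left of $b(x_1),\dots$ \emph{and} whose top slot is entirely to the left of $t(x_1),\dots$ in the same direction correspond to a segment crossing no $x_i$, hence an undominated vertex; the same holds for the opposite extreme corner, and one can squeeze out a couple more boundary cells using that $S$ must dominate (and, for the identifying-code case, separate) the endpoints. Carefully accounting for which of the roughly $4k$ boundary cells are forbidden, and how the two extreme corners interact, should bring $(k+1)^2+k$ down to $k^2-2$ (and $k^2+k-2$ in the locating-dominating case); the hypothesis $k\ge3$ is presumably exactly what is needed so that the number of forbidden boundary cells does not exceed the number of available ones, and so that corner cases in the pruning behave uniformly.

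Concretely, the steps I would carry out are: (1) set up the two-line segment model and the bottom/top slot decompositions; (2) prove the Cell Lemma: the $S$-neighbourhood of $v\notin S$ depends only on the pair (bottom slot, top slot) of $v$, so distinct vertices of $V(G)\setminus S$ with the same $S$-neighbourhood lie in the same cell, and conclude that an identifying code / OLD-set / LD-set has at most one vertex of $V(G)\setminus S$ per cell; (3) identify the forbidden cells from the domination (and total-domination) requirement and, for the separation requirement, observe that at most one vertex may fail to be dominated only in the LD-case, transferring the slack to the $S$-vertices exactly as in Theorem~\ref{thm:interval-sqrt}; (4) do the arithmetic: for identifying codes / OLD-sets, $n\le (k+1)^2 - (\text{forbidden cells}) \le k^2-2$, and for LD-sets $n\le (k+1)^2 - (\text{forbidden cells}) + k \le k^2+k-2$; (5) convert the three order bounds into the stated lower bounds on $\M$, $\OLD$ and $\LD$ using $k^2-2=(k)^2-2$ and $k^2+k-2=(k+\tfrac12)^2-\tfrac94$, so $\M(G)\ge\sqrt{n+2}$, $\OLD(G)\ge\sqrt{n+2}$, and $\LD(G)\ge\sqrt{n+\tfrac94}-\tfrac12$. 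The hard part is step (3)–(4): getting the constant right in the pruning, i.e. showing that precisely enough boundary cells are excluded to improve $(k+1)^2$ to $k^2-2$, rather than merely $k^2+O(1)$.
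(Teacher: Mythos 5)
Your steps (1)--(2) are exactly the paper's setup: the two slot decompositions of $T$ and $B$ induced by $S$, the $(k+1)^2$ cells (the paper calls them \emph{configurations}), and the observation that two vertices of $V(G)\setminus S$ in the same cell meet the same subset of $S$, so each cell is realized at most once. The concluding arithmetic in step (5) is also fine. The problem is that your pruning plan in steps (3)--(4) targets the wrong cells and cannot produce the required $3k+3$ (resp.\ $2k+3$) exclusions. You propose to forbid ``roughly $4k$ boundary cells'' via domination, but this is false: a vertex whose top endpoint lies in the extreme slot $\mathcal T_0$ can still have its bottom endpoint deep inside $B$ and thus cross many segments of $S$; such boundary cells are perfectly realizable, and indeed the extremal constructions of Proposition~\ref{prop:permutation-sqrt-tight} realize almost all of them. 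Domination and mutual separation of undominated/fully-dominated vertices only rule out $O(1)$ cells --- the two corners $(\mathcal T_0,\mathcal B_0)$, $(\mathcal T_k,\mathcal B_k)$ and one of $(\mathcal T_0,\mathcal B_k)$, $(\mathcal T_k,\mathcal B_0)$ --- leaving you at $k^2+3k+O(1)$, not $k^2-2$.

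The missing idea is that the linear-in-$k$ savings come from separating the vertices of $V(G)\setminus S$ from the vertices of $S$ \emph{themselves}, not from domination. For $x\in S$ with $t(x)=t_i$, $b(x)=b_j$, look at the four cells incident to the crossing point: a vertex in $(\mathcal T_{i-1},\mathcal B_j)$ or $(\mathcal T_i,\mathcal B_{j-1})$ meets exactly $N[x]\cap S$, and one in $(\mathcal T_{i-1},\mathcal B_{j-1})$ or $(\mathcal T_i,\mathcal B_j)$ meets exactly $N(x)\cap S$. For an identifying code the first two cells are therefore forbidden outright ($2k$ cells, all distinct across $x\in S$), and at most one of the latter two can be realized, giving at least $k$ more; for an OLD-set the roles swap; for an LD-set one gets at least $k$ exclusions from each pair. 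There is a genuine subtlety you would also need to handle: the cells $A_2(x)$ and $A_1(y)$ may coincide for distinct $x,y\in S$, and the paper resolves this by following maximal chains $A_2(x_i)=A_1(x_{i+1})$ and counting $\ell+1$ distinct cells of which at most one is realizable. Without this code-vertex separation argument the proof does not close, so as written the proposal has a gap at its acknowledged ``hard part.''
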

\begin{proof}
Let $S$ be a set of $k$ vertices of $G$. Consider a permutation
diagram of $G$, where each vertex $v$ is represented by two integers:
the top index $t(v)$ and the bottom index $b(v)$ of its segment in the
diagram. Without loss of generality we can assume that all top indices and all bottom indices are distinct. Let
$\{t_1,\ldots,t_k\}$ and $\{b_1,\ldots,b_k\}$ be the two ordered sets
of the top and bottom indices of vertices in $S$. Now, for $1\leq
i\leq k-1$, let $\mathcal T_i$ be the set of top indices (of a vertex
of $G$) that are strictly between $t_i$ and $t_{i+1}$ in the
permutation diagram, and let $\mathcal T_{0},\mathcal T_{k}$ be, respectively, the
sets of top indices that are strictly before $t_1$ and strictly after
$t_k$. For $0\leq i\leq k$, let $\mathcal B_i$ be the
similarly defined set of bottom indices. Observe that every vertex $v$
in $V(G)\setminus S$ has its top and bottom indices $t(v)$ and $b(v)$
in some set $\mathcal T_i$ and $\mathcal B_j$, respectively.

Now, observe that the segments of two vertices $v,w$ with both
$t(v),t(w)$ in some set $\mathcal T_i$ and both $b(v),b(w)$ in some set $\mathcal B_j$
(hence both $v,w$ belong to $V(G)\setminus S$) intersect exactly the
same set of segments of $S$. Hence if $S$ is an (open)
locating-dominating set or an identifying code of $G$, we must have
$v=w$. In other words, each vertex of $V(G)\setminus S$ is uniquely
determined by the couple of intervals $(\mathcal T_i,\mathcal B_j)$ to which
its top and
bottom indices belong to. We call such a couple a {\em configuration}.

Further, let $x\in S$, $t(x)=t_i$ and $b(x)=b_j$. Then each of the two
potential vertices corresponding to the two configurations
$A_1(x)=(\mathcal T_{i-1},\mathcal B_{j-1})$ and $A_2(x)=(\mathcal
T_i,\mathcal B_j)$ are intersecting the same subset of $S$, that is the open
neighborhood of $x$, $N(x)\cap S$. Hence, if $S$ is an open
locating-dominating set, then 
any vertex has neither configuration $A_1(x)$ nor configuration $A_2(x)$
(otherwise this vertex and $x$ would not be totally separated). Also, if $S$ is a 
locating-dominating set or an identifying code, at most
one of $A_1(x)$ and $A_2(x)$ is realized. Note that, by definition, for each pair of distinct vertices $x,y \in S$, we have $A_1(x)$ and $A_1(y)$ are distinct (the same holds for $A_2(x)$ and $A_2(y)$). However we might
have $A_1(x)=A_2(y)$ for some $x\neq y$.
Nevertheless, if $S$ is an open locating-dominating set, then necessarily $A_1(x)\neq A_2(y)$,
since otherwise $x,y$ are not separated.
If $S$ is a locating-dominating set
or an identifying code, we claim that at least $k$ configurations of
the form $A_i(x)$ for $i\in\{1,2\}$ are not realized. If all of them are
distinct, we are done by the previous discussion. Otherwise, consider
a maximal sequence $x_1,\ldots,x_\ell$ of vertices of $S$ such that
$A_2(x_i)=A_1(x_{i+1})$ for every $1\leq i\leq \ell-1$. Then, these
vertices form $\ell+1$ distinct configurations of the form $A_i(x)$
for $i\in\{1,2\}$. Then, at most one such configuration can be
realized, otherwise at least two corresponding vertices would be dominated
by the same set of vertices of $S$. Repeating this argument for all
such maximal sequences yields our claim.

Similarly, the two potential vertices corresponding to configurations
$A_3(x)=(\mathcal T_{i-1},\mathcal B_{j})$ and $A_4(x)=(\mathcal
T_i,\mathcal B_{j-1})$ are intersecting in $S$ exactly the closed
neighborhood of $x$, $N[x]\cap S$. Hence, in an identifying code, no
vertex has configuration $A_3(x)$ or $A_4(x)$. In an (open)
locating-dominating set, at most one of $A_3(x)$ or $A_4(x)$ is
realized. Note that for all distinct $x,y$ in $S$, $A_3(x)\neq
A_3(y)$ and $A_4(x)\neq A_4(y)$. For identifying codes, $A_3(x)\neq
A_4(y)$, otherwise $x$ and $y$ would not be separated. For (open)
locating-dominating sets, considering again a maximal sequence of
vertices of $S$ pairwise sharing a configuration and using the same
arguments as in the previous paragraph, we get that at least $k$
configurations of the type $A_i(x)$ for $i=3,4$ are not realized.

Finally, it is clear that for any two distinct vertices $x,y$ of $S$,
$A_i(x)\neq A_j(y)$ for $i=1,2$ and $j=3,4$.

We can now proceed with counting the maximum number of realized
configurations. There are $(k+1)^2$ configurations in total.

First of all, note that the configuration $(\mathcal T_0, \mathcal
B_0)$ is not realized (otherwise the corresponding vertex would not be
dominated). For the same reason, $(\mathcal T_k, \mathcal B_k)$ is not
realized either. Moreover, $(\mathcal T_0, \mathcal B_0)$ is not a
configuration of the form $A_i(x)$ for $i\neq 1$, and $(\mathcal T_k,
\mathcal B_k)$ is not of the form $A_i(x)$ for $i\neq 2$. If
$(\mathcal T_0, \mathcal B_0)=A_1(x)$ for some $x\in S$, then $x$ is
only dominated by itself in $S$. Hence, if $S$ is an open
locating-dominating set, this cannot happen. If $S$ is an identifying
code or a locating-dominating set, consider once again the maximal
sequence $x_1,\ldots,x_\ell$ of vertices of $S$ with $x=x_1$ and
$A_2(x_i)=A_1(x_{i+1})$ for every $1\leq i\leq \ell-1$. Then, none of
the $\ell+1$ configurations of the type $A_1(x_i)$ or $A_2(x_i)$ can
be realized, since none of the corresponding vertices would be
dominated. The same argument holds if $(\mathcal T_k, \mathcal
B_k)=A_2(x')$ for some $x'\in S$. Moreover, if the two ``saved''
configurations are actually the same --- i.e. if $x=x_1$ and
$x'=x_\ell$ (then $S$ is an independent set) --- it is easy to see
that there are many additional non-realized configurations. Hence we
can assume that $(\mathcal T_0, \mathcal B_0)$ and $(\mathcal T_k,
\mathcal B_k)$ account for two additional non-realized configurations.

Now, consider the two configurations $(\mathcal T_0, \mathcal B_k)$
and $(\mathcal T_k, \mathcal B_0)$: they are both intersecting the
whole set of segments and cannot both appear, otherwise the two
corresponding vertices are not separated. If one of them is equal to
$A_i(x)$ for some $x\in S$ it must be $A_3(x)$ or $A_4(x)$. Then the
segment of $x$ is also intersecting all the segments in $S$. Hence, if
$S$ is an identifying code, none of the two configurations $(\mathcal
T_0, \mathcal B_k)$ and $(\mathcal T_k, \mathcal B_0)$ is
realized. However, in that case, we cannot have $(\mathcal T_0,
\mathcal B_k)=A_3(x)$ and $(\mathcal T_k, \mathcal B_0)=A_4(y)$,
otherwise $x,y$ would not be separated. Hence, among $(\mathcal T_0,
\mathcal B_k),(\mathcal T_k, \mathcal B_0)$ there is at least one
non-realized configuration that was not yet counted. If $S$ is an
(open) locating-dominating set, then considering a maximal sequence of
vertices of $S$ pairwise sharing a configuration and using similar
arguments as in the previous paragraph, one can show that again at
least one additional configuration is not realized.

If $S$ is an identifying code, none of the $2k$ distinct
configurations of the form $A_3(x)$, $A_4(x)$ is realized. Moreover we
also proved that at least $k$ configurations of type $A_i(x)$ for
$i\in\{1,2\}$ are not realized, and we also exhibited three additional
non-realized configurations. To summarize, we have $(k+1)^2$
configurations, from which $3k+3$ configurations are not realized,
leading to $|V(G)\setminus S|\leq k^2-k-2$ and so $|V(G)|\leq k^2-2$.

The same counting gives $|V(G)|\leq k^2-2$ if $S$ is an open
locating-dominating set, and $|V(G)|\leq k^2+k-2$ if $S$ is a
locating-dominating set.
\end{proof}

\begin{proposition}\label{prop:permutation-sqrt-tight}
The bounds of Theorem~\ref{thm:permutation-sqrt} are tight for every
$k\geq 1$.
\end{proposition}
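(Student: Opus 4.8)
The plan is to exhibit, for each $k$, explicit permutation diagrams whose intersection graphs meet the three bounds $n = k^2-2$ (identifying code and open locating-dominating set) and $n = k^2+k-2$ (locating-dominating set). Following the logic of the upper bound proof, the natural way to do this is to partition the top axis into $k+1$ slots $\mathcal T_0,\ldots,\mathcal T_k$ separated by the $k$ top endpoints of $S$, and similarly the bottom axis into $\mathcal B_0,\ldots,\mathcal B_k$, and then to realize almost all of the $(k+1)^2$ configurations $(\mathcal T_i,\mathcal B_j)$ by exactly one vertex. Concretely, I would place the segments of $S$ so that $t_i$ and $b_i$ appear in the \emph{reverse} order on the two lines (so that $S$ induces a clique, say, or some other convenient structure), and then add one segment with top index in $\mathcal T_i$ and bottom index in $\mathcal B_j$ for every pair $(i,j)$ except for the handful of "forbidden" configurations identified in the proof of Theorem~\ref{thm:permutation-sqrt} — namely $(\mathcal T_0,\mathcal B_0)$, $(\mathcal T_k,\mathcal B_k)$, and the configurations of the form $A_1(x),\ldots,A_4(x)$ that would destroy separation or domination.

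First I would nail down the diagram for identifying codes. Take the $k$ solution segments to pairwise cross, i.e. with tops in order $t_1<\cdots<t_k$ and bottoms in order $b_k<\cdots<b_1$; then $A_3(x_i)$ and $A_4(x_i)$ are exactly the configurations $(\mathcal T_{i-1},\mathcal B_{j})$ and $(\mathcal T_i,\mathcal B_{j-1})$ where $b(x_i)=b_j$, and with the crossing arrangement these line up into a single staircase, so the $2k$ forbidden "closed-neighborhood" configurations collapse to about $k+1$ diagonal cells; similarly the $A_1,A_2$ configurations collapse along the reverse diagonal. Counting the cells that remain should give exactly $k^2-k-2$ usable configurations for $V(G)\setminus S$, hence $n = k^2-2$ once we add back $S$. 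I would then verify directly that the resulting set $S$ is indeed a dominating set and separates all vertices: domination is immediate because every remaining configuration intersects at least one segment of $S$, and separation follows because two distinct remaining vertices sit in distinct configurations and the proof of the theorem already shows distinct configurations (other than the forbidden pairs, which we have excluded) give distinct traces on $S$.

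Next I would handle the open locating-dominating case. Here $S$ must be open-twin-free and totally dominating, so I cannot use the same clique arrangement verbatim; instead I would perturb the construction slightly — e.g. split one of the solution segments or shift endpoints so that no segment of $S$ is isolated within $S$ and no two configurations coincide with an $A_i(x)/A_j(y)$ pair — while keeping the same count of realized cells, giving $n = k^2-2$ again. For the locating-dominating case the bound is larger by $k$ because the $k$ vertices of $S$ themselves need not be separated from each other, so I would additionally add a "private" duplicate attached to each $x\in S$ (a segment whose trace on $S$ equals $N[x]\cap S$ or $N(x)\cap S$), exactly as the theorem's slack term $+k$ suggests, and check these $k$ extra vertices are dominated and mutually separated. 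The figure for $k=4$ referenced in the paper (Figure for Proposition~\ref{prop:permutation-sqrt-tight}) would illustrate all three.

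The main obstacle I expect is bookkeeping: getting the forbidden configurations to collapse onto as few cells as possible so that the count of realized configurations hits the bound exactly rather than falling a few short. In particular one must arrange the top and bottom orderings of $S$ so that the $A_1/A_2$ staircase and the $A_3/A_4$ staircase overlap maximally with each other and with the two corner cells $(\mathcal T_0,\mathcal B_0)$ and $(\mathcal T_k,\mathcal B_k)$, and one must double-check the small cases ($k=1,2$, where $k^2-2$ may be $\le 0$ and the statement should be read as "there is a graph attaining the bound whenever the bound is meaningful", or where a direct tiny construction is needed). Once the diagram is pinned down, verifying the identification property is routine given Theorem~\ref{thm:permutation-sqrt}, since that proof already catalogues precisely which configuration collisions are the dangerous ones.
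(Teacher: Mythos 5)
Your guiding principle is reversed, and the concrete construction it leads to fails. If the $k$ solution segments pairwise cross, then $S$ induces a clique, so $N[x]\cap S=S$ for every $x\in S$ and no two code vertices are separated by $S$; hence $S$ is not an identifying code for any $k\geq 2$. In the configuration language this is exactly the point made in the proof of Theorem~\ref{thm:permutation-sqrt}: for an identifying code one must have $A_3(x)\neq A_4(y)$ for all distinct $x,y\in S$ (a coincidence forces $N[x]\cap S=N[y]\cap S$), so the $2k$ cells of type $A_3,A_4$ can never ``collapse into a staircase'' --- they are forced to be $2k$ pairwise distinct forbidden cells. More generally, collapsing is what you must \emph{avoid}, not engineer: a maximal chain $A_2(x_1)=A_1(x_2),\ldots$ involving $\ell$ code vertices yields $\ell+1$ distinct cells all carrying the same trace $N(x_i)\cap S$, of which at most one may be realized, so every coincidence within the $A_1/A_2$ family strictly decreases the number of realizable cells. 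To reach $n=k^2-2$ the number of non-realized cells must equal the lower bound $3k+3$ exactly, which requires all $2k$ cells of type $A_1/A_2$ to be pairwise distinct and all $2k$ cells of type $A_3/A_4$ to be pairwise distinct (the two families can never meet each other, as observed at the end of the theorem's proof, so no savings are available there either).

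The paper's construction accordingly takes $S$ inducing a \emph{path} on $k$ vertices: then the closed traces $N[x]\cap S$ and the open traces $N(x)\cap S$ are pairwise distinct, the $4k$ special cells are maximally spread out, and one realizes every cell that the proof of Theorem~\ref{thm:permutation-sqrt} does not forbid. Your treatment of the locating-dominating case --- realizing, for each $x\in S$, one extra cell with trace $N[x]\cap S$ --- is the right explanation of the additional $+k$, but it must be grafted onto a correct base diagram. Your caveat about small $k$ is reasonable; the bounds are only meaningful under the theorem's hypothesis $k\geq 3$.
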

\begin{proof}
Given $k$, one can attain the bounds using a solution $S$ inducing a path on $k$ vertices, and realizing all the configurations that were allowed in the proof of Theorem~\ref{thm:permutation-sqrt}. The key observation here is that all configurations of type $A_3$ or $A_4$ are distinct, and all configurations of type $A_1$ or $A_2$ are distinct too. See Figure~\ref{fig:tightperm} for an illustration.
\end{proof}

\begin{figure}[h]
\centering
\subfigure[Locating-dominating set]{\scalebox{0.6}{\begin{tikzpicture}[scale=0.4]
\draw[help lines] (-14.5,3) -- (15.5,3) 
      (-14.5,-3) -- (15.5,-3);
\draw[line width=3pt] (-10,-3) -- (-2,3) (-2,-3) -- (6,3) (2,-3) -- (-6,3) (10,-3) -- (2,3) ;
\draw[line width=1.1pt] (15,-3) -- (-13,3) (-6,-3) -- (-10.5,3) (6,-3) -- (10.5,3) (-0.5,-3) -- (-9,3) (0,-3) -- (0,3) (0.5,-3) -- (9,3) (-14,-3) -- (-4,3) (14,-3) -- (4,3) (-13,-3) -- (-0.6,3) (13,-3) -- (0.6,3) (-12,-3) -- (3.5,3) (12,-3) -- (-3.5,3) (4,-3) -- (-7.5,3) (-4,-3) -- (7.5,3);
\end{tikzpicture}}}\qquad
\subfigure[Identifying code]{\scalebox{0.6}{\begin{tikzpicture}[scale=0.4]
\draw[help lines] (-12.5,3) -- (12.5,3) 
      (-12.5,-3) -- (12.5,-3);
\draw[line width=3pt] (-10,-3) -- (-2,3) (-2,-3) -- (6,3) (2,-3) -- (-6,3) (10,-3) -- (2,3);
\draw[line width=1.1pt] (11,-3) -- (-12,3) (-6,-3) -- (-9.5,3) (6,-3) -- (9.5,3) (-0.5,-3) -- (-8.5,3) (0,-3) -- (0,3) (0.5,-3) -- (8.5,3) (-4,-3) -- (-1,3) (4,-3) -- (1,3) (-12,-3) -- (4,3) (12,-3) -- (-4,3);
\end{tikzpicture}}}\qquad
\subfigure[Open locating-dominating set]{\scalebox{0.6}{\begin{tikzpicture}[scale=0.4]
\draw[help lines] (-12.5,3) -- (13.5,3)
      (-12.5,-3) -- (13.5,-3);
\draw[line width=3pt] (-10,-3) -- (-2,3) (-2,-3) -- (6,3) (2,-3) -- (-6,3) (10,-3) -- (2,3) ;
\draw[line width=1.1pt] (13,-3) -- (-11,3) (-6,-3) -- (-9,3) (6,-3) -- (9,3)  (0,-3) -- (0,3)  (-12,-3) -- (-0.6,3) (12,-3) -- (0.6,3) (-11,-3) -- (3.5,3) (11,-3) -- (-3.5,3) (4,-3) -- (-7.5,3) (-4,-3) -- (7.5,3);
\end{tikzpicture}}}
\caption{\label{fig:tightperm} Examples of the constructions described in Proposition~\ref{prop:permutation-sqrt-tight} that reach the bounds in Theorem~\ref{thm:permutation-sqrt}, with $k=4$. The bold segments are part of the solution.}
\end{figure}
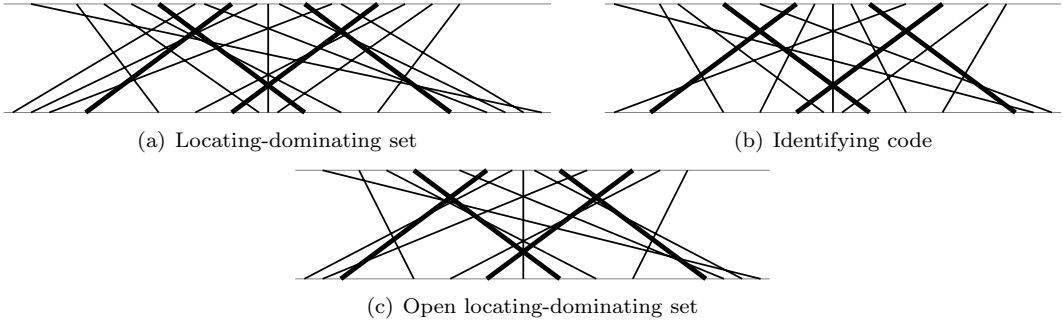

Once again, we are able to give a bound on the metric dimension of a
permutation graph in terms of its order and diameter, using ideas
similar to the ones of Theorem~\ref{thm:interval-MB}.

\begin{theorem}\label{thm:permutation-MB}
Let $G$ be a connected permutation graph on $n$ vertices, of diameter~$D$ and a resolving set of size~$k$. Then $n\leq 2k^2(D+3)+3k = \Theta(Dk^2)$.
\end{theorem}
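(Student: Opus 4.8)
The plan is to adapt the argument of Theorem~\ref{thm:interval-MB} to the permutation diagram model, replacing "left/right endpoints on the real line" by "top/bottom indices on the two parallel lines." First I would fix a resolving set $S=\{s_1,\ldots,s_k\}$ and a permutation diagram of $G$, where each vertex $v$ has a top index $t(v)$ and a bottom index $b(v)$. Recall that in a permutation graph the distance between two vertices can be read off from the diagram: a shortest path corresponds to a "staircase" of segments, each consecutive pair crossing. For each $s_i$, I would build, exactly as in Theorem~\ref{thm:interval-MB}, four ordered sequences of indices recording the "distance levels" of $s_i$ as one sweeps the top line to the left, to the right, and the bottom line to the left, to the right. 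Concretely, for the top line to the right: $x^{i}_1=t(s_i)$, and $x^{i}_{j+1}$ is the smallest top index among segments that cross "past" level $j$ (i.e. whose bottom index forces them to reach distance $\le j$ but whose top index is as far right as possible), so that a segment is at distance exactly $j+1$ from $s_i$ on that side precisely when its relevant index falls in the corresponding gap. The key geometric fact is that the distance of an arbitrary segment $v$ to $s_i$ is determined by which gap $t(v)$ lies in (among the top sequences for $s_i$) together with which gap $b(v)$ lies in (among the bottom sequences for $s_i$) — this is the permutation-graph analogue of "distance to $s_i$ is determined by the position of the left endpoint in $L^i$ and the right endpoint in $R^i$."

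Next I would bound the length of each of these sequences by $O(D)$: as in Theorem~\ref{thm:interval-MB}, the two extreme segments defining the ends of a sequence are at distance roughly (length $-$ small constant) from each other along a path through $s_i$ or a neighbour of $s_i$, so each sequence has at most $D+c$ entries for an absolute constant $c$ (the constant $3$ in the target bound suggests $D+3$ is the right count per side after merging the shared extreme indices). Taking the union over all $i$ of all top sequences gives at most $\sim k(D+3)$ distinct top indices — the outermost index on each side being common to all $i$ — and similarly at most $\sim k(D+3)$ bottom indices; these partition the top line and the bottom line into $O(k(D+3))$ intervals each. Now every vertex $v\in V(G)\setminus S$ is determined, \emph{as far as its distance vector to $S$ is concerned}, by the pair (top gap of $t(v)$, bottom gap of $b(v)$). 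Here is where a naive count would give $O(k^2D^2)$, which is too weak: I need the sharper observation, exactly parallel to the one in Theorem~\ref{thm:interval-MB} ("$I$ cannot contain two points of $L^i$"), that the top index of a single segment can lie in only boundedly many — in fact it trivially lies in exactly one — top gap, but more importantly that for a fixed top gap, only $O(k)$ segments can have their top index there while being pairwise distinguished. The right way to see this: fix a top gap $P$; a segment with $t(v)\in P$ has its distance-to-$s_i$ determined, for each $i$, by the bottom gap of $b(v)$ relative to $s_i$'s bottom sequences, and across all $i$ there are only $2k$ "bottom sequence points" that $b(v)$ can be compared against within the relevant range, so at most $2k+1$ distinct values of $b(v)$-gap are possible — hence at most $2k+1$ pairwise-resolved segments with top index in $P$. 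Multiplying, $|V(G)\setminus S|\le (k(D+3)+O(1))(2k+1)$, giving $n\le 2k^2(D+3)+O(k)$, which I would then tune to match $2k^2(D+3)+3k$ by being careful about the off-by-constant terms (the shared outermost indices, the fact that no segment ends outside the two extreme sequence points, and the $|S|=k$ additive term).

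The main obstacle I expect is the distance-reading lemma for permutation graphs: making precise the claim that "the distance from $v$ to $s_i$ is a function of (top gap, bottom gap)" requires a careful definition of the four sequences so that a BFS-layer around $s_i$ really is described by nested staircases in the diagram, and one must check that the sequences are genuinely monotone and that the extreme segments realize a near-diameter path (the analogue of "a shortest path from $I_s$ to $I_{s'}$ contains $s_i$ or a neighbour $J$ of $s_i$"). In the interval case this was almost immediate because intervals are one-dimensional; in the permutation case a segment's reach to the left on the top line and to the right on the bottom line interact, so the bookkeeping is heavier. Once that lemma is in place, the counting is a routine product bound as above, and the only remaining care is squeezing the additive constants to land exactly on $2k^2(D+3)+3k$; the $\Theta(Dk^2)$ statement itself is robust to those constants and follows together with a tightness construction analogous to Proposition~\ref{prop:md-interval-tight}.
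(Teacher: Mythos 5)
Your proposal follows essentially the same route as the paper's proof: per-element ordered sequences of extremal top/bottom indices recording BFS levels of $s_i$, the lemma that $d(v,s_i)$ is determined by the gaps containing $t(v)$ and $b(v)$ (in the paper it comes out as $\min\{j+1,j'+1\}$), the $D+3$ bound per sequence via a near-shortest path through $s_i$ or a neighbour, and the crucial refinement that fixing the part containing one endpoint leaves only $O(k)$ (rather than $O(kD)$) admissible parts for the other endpoint, yielding the product bound $(k(D+3)+1)\cdot 2k+k$. The obstacles you flag are exactly the ones the paper resolves, and it resolves them the way you anticipate.
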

\begin{proof}
As before, consider a permutation diagram of $G$ with top line $T$ and
bottom line $B$, where each vertex $v$ is represented by the top index
$t(v)\in T$ and the bottom index $b(v)\in B$ of its segment in the
diagram. Let $s_1,\ldots,s_k$ be the elements of a resolving set $S$ of $G$ of size~$k$. For every $i$ in
$\{1,\ldots,k\}$, we define four ordered sets $LB^i$, $LT^i$, $RB^i$,
$RT^i$ as follows. Sets $LB^i$ and $RB^i$ contain points of $B$ that
are smaller than $b(s_i)$ and greater than $b(s_i)$, respectively;
while sets $LT^i$ and $RT^i$ contain points of $T$ that are smaller
than $t(s_i)$ and greater than $t(s_i)$, respectively. More precisely,
we let $LB^i=\{lb^i_0=b(s_i),lb^i_1,\ldots,lb^i_p\}$, where for every
$j\in\{1,\ldots,p\}$, $lb^i_j=\min\{b(v)~|~d(v,s_i)=j \text{ and }
b(v)<b(s_i)\}$, that is $lb^i_j$ is the smallest bottom index of the segment of a vertex
$v$ with $d(v,s_i)=j$ and $b(v)<b(s_i)$\footnote{Note that $lb^i_1$ might not exist. In that case we just start the set $LB^i$ from $lb^i_2$.}. Similarly,
$RB^i=\{rb^i_0=b(s_i),rb^i_1,\ldots,rb^i_q\}$, where for every
$j\in\{1,\ldots,q\}$, $rb^i_j=\max\{b(v)~|~d(v,s_i)=j \text{ and }
b(v)>b(s_i)\}$. Also, $LT^i=\{lt^i_0=t(s_i),lt^i_1,\ldots,lt^i_r\}$,
where for every $j\in\{1,\ldots,r\}$, $lt^i_j=\min\{t(v)~|~d(v,s_i)=j
\text{ and } t(v)<t(s_i)\}$. Finally,
$RT^i=\{rt^i_0=t(s_i),rt^i_1,\ldots,rt^i_s\}$, where for every
$j\in\{1,\ldots,s\}$, $rt^i_j=\max\{t(v)~|~d(v,s_i)=j \text{ and }
t(v)>t(s_i)\}$.

Next, we show that the distance of any vertex $v$ of $V(G)\setminus S$
to $s_i$ is determined by the position of $t(v)$ in $LT^i$ and $RT^i$,
and the position of $b(v)$ in $LB^i$ and $RB^i$. If $t(v)>t(s_i)$ and
$b(v)<b(s_i)$ or $t(v)<t(s_i)$ and $b(v)>b(s_i)$, then $d(v,s_i)=1$.
Otherwise, assume that $v$ lies completely to the left of $s_i$:
$lt^i_{j}\leq t(v)<lt^i_{j-1}$ and $lb^i_{j'}\leq b(v)<lb^i_{j'-1}$
for some $j,j'$ with $1\leq j\leq r$ and $1\leq j'\leq p$ (the case
where it lies to the right of $s_i$ is symmetric). Then, we claim that
$d(v,s_i)=\min\{j+1,j'+1\}$. If $j\leq j'$, by definition of $j$, the
segment of $v$ cannot intersect any segment with distance at
most~$j-1$ to $s_i$, hence $d(v,s_i)\geq j+1$. However, the segment
whose top endpoint is $lt^i_{j}$ must intersect a segment with
distance~$j-1$ to $s_i$, hence it also crosses the segment of $v$, and
$d(v,s_i)\leq j+1$. If $j'<j$, a similar argument holds.

Now, since $G$ has diameter $D$, we have $|LB^i\cup RB^i|=p+q+1\leq
D+3$, and $|LT^i\cup RT^i|=r+s+1\leq D+3$. Indeed, consider the
shortest path $P_l$ of length~$p$ starting from the vertex whose
bottom index is $lb_p$ and goes to $s_i$. Consider a similar shortest
path $P_r$ of length~$q$ from $s_i$ to the vertex whose bottom index
is $rb_q$. If the concatenation of these paths is a shortest path, we
are done since in this case $p+q\leq D$. Otherwise, notice that a
shortcut can only exist around $s_i$. In fact, it could only be that
the penultimate vertex of $P_l$ and the second vertex of $P_r$ are
adjacent, or that the ante-penultimate of $P_l$ and the third vertex of
$P_r$ are both adjacent to a neighbor of $s_i$. In any case, the
resulting shortest path has length at least $p+q-2$ and at most $D$,
hence $p+q+1\leq D+3$, as claimed.

It follows that using the union of all sets $LT^i$ and $RT^i$ (respectively, $LB^i$ and $RB^i$), $1\leq i\leq k$, induces a partition $\mathcal P(T)$ of the line $T$ (respectively, $\mathcal P(B)$ of the line $B$) into at most $k(D+3)+1$ parts. Moreover, for any vertex $v$ in $V(G)\setminus S$, the membership of $b(v)$ in a given part of $\mathcal P(B)$ and of $t(v)$ in a given part of $\mathcal P(T)$ completely determines the distances of $v$ to the vertices in $S$. Let $v$ be a vertex of $V(G)\setminus S$, with $b(v)$ belonging to some part $P$ of the partition $\mathcal P(B)$. For a given $i$ ($1\leq i\leq k$), by definition of $LT^i$ and $RT^i$, there are only two possibilities for the points of $LT^i\cup RT^i$ that $t(v)$ lies between. Hence, there are at most $2k$ vertices in $V(G)\setminus S$ whose associated segment has its bottom point within part $P$ of $\mathcal P(B)$. Hence, in total we have
\begin{align*}
|V(G)| & \leq (k(D+3)+1)\cdot 2k+k\\
& = 2k^2(D+3)+3k,
\end{align*}
which completes the proof.
\end{proof}

We do not know if the bound of Theorem~\ref{thm:permutation-MB} is tight, but we are able to provide a construction similar to the one for interval graphs of Proposition~\ref{prop:md-interval-tight}, showing that this bound is almost tight.

\begin{proposition}\label{prop:md-perm-tight}
For every even $k\geq 2$ and every $D\geq 2$, there are permutation graphs of diameter $D$, a resolving set of size~$k$ and on $\Theta(Dk^2)$ vertices.
\end{proposition}
\begin{proof}
Consider $k/2$ paths $P_1, \ldots, P_{k/2}$ of length $D-1$ where the path $P_{i+1}$ is obtained from path $P_i$ by a translation (see Figure~\ref{fig:mdtightperm}). Let $P_i=\{u^i_1,\ldots,u^i_D\}$. The endpoints of the paths (i.e. the vertices $u^i_1$ and $u^i_D$) form a resolving set. One can add $k/2+2$ vertices that have the bottom index lying between the bottom points of two consecutive segments $u^i_{2k}$ and $u^i_{2k+1}$ of the same path (see the figure). In this way, we add $k/2+2$ segments for each of the $D/2$ intersections of the $k/2$ paths. In the end, the graph has $\Theta(Dk^2)$ vertices.
\end{proof}

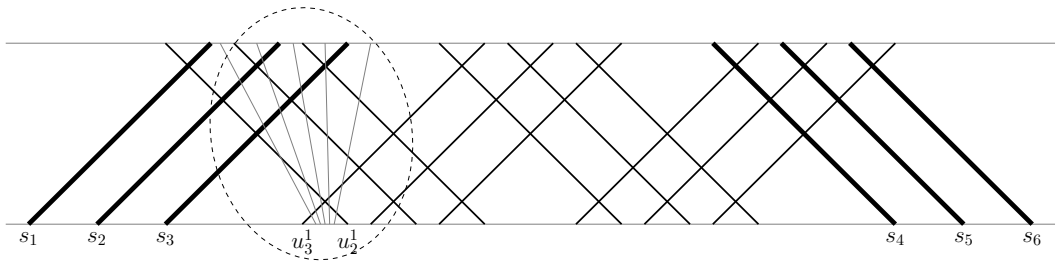
\begin{figure}[!ht]
\centering
\scalebox{0.6}{\begin{tikzpicture}

\draw[help lines] (-10.5,2) -- (12.5,2) 
      (-10.5,-2) -- (12.5,-2);

\foreach \I in {1,2,3}{
\draw[line width=3pt] (-10+\I*1.5-1.5,-2) node[below] {\Large $s_{\I}$} -- (-6+\I*1.5-1.5,2) ;}
\foreach \I in {0,1,2}{
\draw[line width=1.1pt] (-7+\I*1.5,2) -- (-3+\I*1.5,-2) node(\I1) {};
\draw[line width=1.1pt] (-4+\I*1.5,-2) node(\I2) {} -- (0+\I*1.5,2) ;
\draw[line width=1.1pt] (-1+\I*1.5,2) -- (3+\I*1.5,-2);
\draw[line width=1.1pt] (2+\I*1.5,-2) -- (6+\I*1.5,2) ;}
\foreach \I in {4,5,6}{
\draw[line width=3pt] (5+\I*1.5-6,2) -- (9+\I*1.5-6,-2)node[below] {\Large $s_{\I}$};
}

\draw[gray] (-3.7,-2) -- (-5.8,2);
\draw[gray] (-3.6,-2) -- (-5,2);
\draw[gray] (-3.5,-2) -- (-4.2,2);
\draw[gray] (-3.4,-2) -- (-3.5,2);
\draw[gray] (-3.3,-2) -- (-2.5,2);

\draw (01) node[below] {\Large $u^1_{2}$};
\draw (02) node[below] {\Large $u^1_{3}$};

\draw[dashed,rotate around={10:(-3.8,0)}] (-3.8,0) ellipse (2.2cm and 2.8cm);
\end{tikzpicture}}
\caption{\label{fig:mdtightperm} An example of the construction of Proposition~\ref{prop:md-perm-tight} with diameter $D=6$, a resolving set of size $k=6$, and order $\Theta(Dk^2)$. 
Segments similar to the ones in the dashed ellipse are added between the bottom points of every two consecutive segments $u^i_{2j}$ and $u^i_{2j+1}$ of path $P_i$, for $1\leq i\leq k/2$ and $1\leq j\leq D/2$. The bold segments $\{s_1,\ldots,s_6\}$ form a resolving set.}
\end{figure}

\section{Bipartite permutation graphs}\label{sec:bip-perm}

A graph is a bipartite permutation graph if it is a permutation graph
and it is bipartite. A characterization due to
Spinrad~\cite{Sp87} uses orderings of the vertices, as follows. Let $G$ be a
bipartite graph with parts $A$ and $B$. An ordering $<$ of $A$ has the
\emph{adjacency property} if, for every vertex $b\in B$, its
neighbourhood $N(b)$ consists of vertices that are consecutive in
$<$. It has the \emph{enclosure property} if, for every pair $b,b'$ of
vertices in $B$ with $N(b) \subseteq N(b')$, the vertices of
$N(b')\setminus N(b)$ are consecutive in $<$. A bipartite graph $G$
with parts $A$ and $B$ is a bipartite permutation graph if and only if
it admits an ordering of $A$ that has the adjacency and enclosure
properties.

\begin{theorem}\label{thm:bip-perm}
Let $G$ be a bipartite permutation graph on $n$ vertices and let $S$
be a $k$-subset of $V(G)$. If $S$ is a locating-dominating
set or an identifying code, then $n\leq 3k+2$. If $S$ is an open
locating-dominating set, then $n\leq 2k+2$. Hence, $\M(G)\geq
\frac{n-2}{3}$, $\OLD(G) \geq \frac{n-2}{2}$, and $\LD(G)\geq
\frac{n-2}{3}$.
\end{theorem}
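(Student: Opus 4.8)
The plan is to exploit the Spinrad ordering characterization in the same spirit as the interval and unit interval proofs: fix a bipartite permutation graph $G$ with parts $A$ and $B$ and an ordering $<$ of (say) $A\cup B$ obtained by combining the adjacency/enclosure orderings of both sides, so that the neighbourhood of every vertex is an interval in the opposite side's ordering. Given a solution $S$ of size $k$, I would look at the $O(k)$ ``boundary'' positions created by the endpoints of the neighbourhood-intervals $N(s)\cap(\text{opposite side})$ for $s\in S$, together with the positions of the vertices of $S$ themselves. These boundaries cut each side's ordering into $O(k)$ consecutive blocks, and the key claim is that two vertices of $V(G)\setminus S$ lying in the same block and on the same side have exactly the same trace $N(v)\cap S$ on the solution — because, by the adjacency property, a vertex $s\in S$ separates $u$ from $v$ only if one of the endpoints of its neighbourhood interval falls strictly between $u$ and $v$ in the ordering, which is impossible inside a single block. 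Hence each block contributes at most one vertex to $V(G)\setminus S$, giving $n\le (\text{number of blocks})+k = O(k)$.

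Carrying this out, first I would set up the ordering carefully: since $G$ is a bipartite permutation graph, $A$ has an ordering with the adjacency and enclosure properties, and symmetrically for $B$; in a permutation diagram both orderings are realized simultaneously, so I can speak of ``the'' left-to-right order on $A$ and on $B$ and of the fact that each $N(v)$ is an interval in the other part. Second, for each $s\in S$ I record at most two boundary points in the opposite part (the first and last neighbour of $s$); over all of $S$ this is at most $2k$ boundaries per side, so each side splits into at most $2k+1$ blocks. Third, I prove the separation claim above for identifying codes / locating-dominating sets, and its ``open'' analogue (using $N(v)$ rather than $N[v]$, which is automatic here since $G$ is bipartite and triangle-free, so closed and open neighbourhoods on $S$ behave essentially the same up to whether $v\in S$). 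Fourth, I count: at most one non-solution vertex per block, but I must be careful that blocks come in two flavours (one per side), and that domination kills the extreme blocks (a vertex in the leftmost block before any neighbour-interval starts is undominated), shaving the count down to the stated $3k+2$ for (locating-)domination and identifying codes, and $2k+2$ in the open case where the tighter ``no vertex shares even the open neighbourhood of a solution vertex'' constraint removes a further $\sim k$ blocks. The metric-dimension bound then follows from $\MD(G)\le\LD(G)$ proved in the introduction.

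I expect the main obstacle to be the bookkeeping in the final count, exactly as in Theorem~\ref{thm:permutation-sqrt}: naively one gets roughly $4k$ blocks (two sides, $2k+1$ each, minus solution vertices), so to reach the sharp constants $3$ and $2$ I will need to argue that many boundary points coincide or that certain blocks are forced to be empty. Specifically, a block whose ``configuration'' (the pair recording which side it is on and which solution-neighbour-intervals cover it) equals the configuration of a vertex of $S$ cannot also host a non-solution vertex without violating separation, and consecutive solution vertices can share a boundary, so a maximal chain of solution vertices sharing boundaries saves one extra block — the same telescoping trick used in the permutation-graph proof. The open case additionally uses that a solution vertex $s$ and any vertex $v$ with $N(v)\cap S=N(s)\cap S$ fail to be totally separated, which is what collapses the block count from $3k$-ish to $2k$-ish. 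Handling these coincidence cases cleanly, and making sure the ``$+2$'' additive slack correctly absorbs the two extreme undominated blocks on the two sides, is where the real work lies; everything else is a direct transcription of the interval-graph argument with ``left/right endpoint of an interval'' replaced by ``first/last neighbour in the Spinrad ordering''.
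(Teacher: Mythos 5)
Your block-decomposition approach is sound and, once the boundary count is done correctly, it closes; it is essentially the dual of the paper's argument. The paper fixes the Spinrad orderings $<_A$ and $<_B$ (first proving that $<_B$ inherits the adjacency and enclosure properties, so that every neighbourhood is an interval in both directions), and then counts how many of the $|B|-1$ pairs of consecutive vertices of $B$ each solution vertex can separate: a vertex of $S\cap A$ separates at most two such pairs (the two ends of its neighbourhood interval), while a maximal run of consecutive solution vertices inside $B$ accounts for at most one pair per vertex, giving $|B|-1\le 2|S\cap A|+|S\cap B|$ and symmetrically for $A$; summing yields $n\le 3k+2$, and in the open case vertices of $S\cap B$ separate no pair at all, giving $n\le 2k+2$. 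Your count of "at most one non-solution vertex per block" establishes exactly the same two inequalities from the other side.

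The one concrete problem in your write-up is the boundary count, and it leads you to anticipate difficulties that do not exist. Since $G$ is bipartite, a solution vertex $a\in S\cap A$ has $N(a)\subseteq B$, so it creates boundary points only on side $B$; likewise $b\in S\cap B$ only affects side $A$. Hence side $B$ splits into at most $2|S\cap A|+1$ blocks and side $A$ into at most $2|S\cap B|+1$ blocks, for a total of $2k+2$ blocks, not $\sim 4k$. Two vertices of $B\setminus S$ in the same block have identical traces on $S$ (their only possible neighbours in $S$ lie in $A$, and being in the same block means no interval endpoint of any $N(a)$, $a\in S\cap A$, separates them), so each block carries at most one vertex of $V(G)\setminus S$ and $n\le(2k+2)+k=3k+2$ falls out immediately; for open location-domination every vertex of $B$, whether in $S$ or not, has its open trace determined by its block, so $|B|\le 2|S\cap A|+1$, $|A|\le 2|S\cap B|+1$ and $n\le 2k+2$. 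None of the extra machinery you propose is needed: no telescoping over runs of solution vertices, no coalescing of boundary points, and no use of domination to empty the extreme blocks (the last would in fact sharpen the bound toward the near-tight $3k-1$ of the accompanying construction). Two smaller remarks: you should justify that $N(a)$ is an interval in $<_B$ for $a\in A$ (the paper proves this; your appeal to the permutation diagram also works), and note that $\M$ in the statement denotes the identifying code number, not the metric dimension, so no detour through $\MD(G)\le\LD(G)$ is needed --- all three displayed inequalities are direct restatements of the order bounds.
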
 
\begin{proof}
Let $G$ be a bipartite permutation graph with parts $A$ and $B$. We
may assume that $|A|\geq 2$ and $|B|\geq 2$ (otherwise the graph is a
star with isolated vertices and the bounds hold, indeed
$\M(K_{1,k})=\LD(K_{1,k})=k$; for $k\geq 2$, $K_{1,k}$ has no open
locating-dominating set, and $\OLD(K_{1,1})=2$).

Let $<_A$ be an ordering of $A$ that has the adjacency and enclosure
properties. We also order the vertices of $B$ using the natural
ordering $<_B$ of their neighbourhoods within $A$ along $<_A$: given
$b_1,b_2\in B$ with $x_1,y_1$ and $x_2,y_2$ the smallest and largest
(according to $<_A$) members of $N(b_1)$ and $N(b_2)$, respectively,
we have $b_1<_B b_2$ if $x_1<_A x_2$ or $x_1=x_2$ and $y_1\leq y_2$. 
Note that $<_B$ has the adjacency and the enclosure properties with respect to $<_A$.
Indeed, let $a\in A$ and consider its smallest and largest neighbors $b_1$ and $b_2$ (then $b_1\leq_B b_2$). Let $b$ be an element between $b_1$ and $b_2$. Let $x_b$ (respectively $y_b$) be the smallest (resp. largest) neighbor in $A$ of $b$. Since $a$ is adjacent to $b_2$ and $b<_Bb_2$, we have $x_b\leq_A a$. Similarly $a\leq_A y_b$. Since $<_A$ has the adjacency property, $a$ is in the neighborhood of $b$ and the neighborhood of $a$ consists of vertices that are consecutive in $<_B$ and $<_B$ has the adjacency property. Consider now two vertices $a$ and $a'$ of $A$ with $N(a)\subseteq N(a')$. Without loss of generality, we can assume that $a'<a$. Assume there exists a vertex $b$ in $N(a')\setminus N(a)$ that is larger than all the vertices of $N(a)$. Let $b'$ be an element of $N(a)$. We have $b'<_B b$. Let $x_{b}$ and $x_{b'}$ be the smallest elements of $N(b)$ and $N(b')$ respectively, and $y_b$ and $y_{b'}$ be the largest elements of $N(b)$ and $N(b')$ respectively. 
Then $y_b<_A a \leq_A y_{b'}$. Since $b'<_B b$, we must have $x_{b'}<_A x_b$ and we get a contradiction because $<_A$ has the enclosure property, $N(b)\subseteq N(b')$ but $x_2$ and $a$ are in $N(b')\setminus N(b)$ with some elements of $N(b)$ in between. Hence $<_B$ has the enclosure property. Note the order induced on $A$ by $<_B$ (as we defined $<_B$ from $<_A$) is exactly $<_A$.

Consider the set $\mathcal P_B$ of all pairs of consecutive vertices
in $B$ with respect to $<_B$. We have $|\mathcal P_B|=|B|-1$. Let $S$ be
an (open) locating-dominating set or an
identifying code. We claim that a vertex $a$ of $A\cap S$ can separate
at most two pairs in $\mathcal P_B$. Indeed, a pair $b,b'$ in $\mathcal P_B$ is
separated by $a$ if and only if $a$ is adjacent to exactly one of
$b,b'$. Let the vertices of $B$ be ordered $b_1 <_B\ldots<_B
b_{|B|}$. Then, by definition of $<_A$ and $<_B$, there are indices
$1\leq \ell,r\leq |B|$ with $a,b_i$ non-adjacent if $i\leq\ell$ or
$i\geq r$, and $a,b_i$ adjacent if $\ell<i<r$. The claim follows.

Now, let $b=b_i$ belong to $B\cap S$. Then $b$ may only separate
two pairs in $\mathcal P_B$: the ones $b$ itself belongs to, i.e., 
$\{b_{i-1},b_i\}$ and $\{b_{i},b_{i+1}\}$. However, we claim that the vertices of
$S_B=S\cap B$ account only for at most $|S_B|$ pairs in $S_B$ that are
\emph{only} separated by vertices of $S_B$. Indeed, let $b_\ell <_B
\ldots <_B b_r$ be a maximal sequence of vertices of $B$ that belong
to $S_B$. Then these $r-\ell+1$ vertices separate altogether at most
$r-\ell+2$ pairs of $\mathcal P_B$. If they separate exactly $r-\ell+2$ such
pairs, then $\ell>1$ and $r<|B|$. But the pair
$\{b_{\ell-1},b_{r+1}\}$ is also separated by $S$, that is, by a
vertex $a$ in $A\cap S$. But then one of the pairs of $\mathcal P_B$ separated
by $b_\ell, \ldots,b_r$ is also separated by $a$. Hence there are at
most $r-\ell+1$ pairs of $\mathcal P_B$ separated by these $r-\ell+1$
vertices. Repeating this counting for all such maximal subsequences
yields the claim.

Moreover, observe that if $S$ is an open locating-dominating set, a
vertex $b$ in $S\cap B$ cannot separate any pair in $\mathcal P_B$.

We can use the same arguments reversing the role of $<_B$ and $<_A$ and the set $\mathcal P_A$ of pairs of consecutive vertices of $A$. 

To summarize, if $S$ is a locating-dominating set or an identifying
code, the vertices in $S\cap A$ may separate at most $2|S\cap A|$
pairs of $\mathcal P_B$ and separate at most $|S\cap A|$ pairs of $\mathcal P_A$ that
are not separated by a vertex of $S\cap B$, and vice-versa the
vertices in $S\cap B$ may separate at most $2|S\cap B|$ pairs of $\mathcal P_A$
and separate at most $|S\cap B|$ pairs of $\mathcal P_B$ that are not separated
by a vertex of $S\cap A$. Hence $|B|-1=|\mathcal P_B|\leq 2|S\cap A|+|S\cap B|$
and $|A|-1=|\mathcal P_A|\leq 2|S\cap B|+|S\cap A|$. In total, $n=|A|+|B|\leq
3|S|+2$ and we are done.

Similarly, if $S$ is an open locating-dominating set, we have
$|B|-1=|\mathcal P_B|\leq 2|S\cap A|$ and $|A|-1=|\mathcal P_A|\leq 2|S\cap B|$,
yielding $n\leq 2|S|+2$.
\end{proof}

The bounds of Theorem~\ref{thm:bip-perm} are almost tight. 

\begin{proposition}\label{prop:low-bip-perm}
For every $k\geq 1$, there exist three bipartite permutation graphs with
a locating-dominating set, an identifying code and an open
locating-dominating set of size $3k-1$, $3k-3$ and $2k-2$, respectively.
\end{proposition}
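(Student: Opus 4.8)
The plan is to exhibit, for each of the three identification types, an explicit bipartite permutation graph together with a solution of the stated size, and then to invoke Theorem~\ref{thm:bip-perm} to see that these examples essentially meet the bounds (with the indicated sizes the inequality $n\le 3k+2$, resp.\ $n\le 2k+2$, is met up to an additive constant, so the exhibited solutions are optimal or near-optimal and the bounds are tight for infinitely many values of $n$). Concretely, each graph will be assembled from $k$ copies of a fixed small \emph{block} — a bipartite permutation graph on a constant number of vertices carrying a constant-size partial solution — glued end to end, with the two extreme blocks slightly modified so that the total solution sizes come out to exactly $3k-1$, $3k-3$ and $2k-2$.

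First I would describe the block. Rather than drawing a permutation diagram, it is cleaner to use Spinrad's ordering characterization quoted before Theorem~\ref{thm:bip-perm}: one specifies an ordering $<_A$ of the $A$-side of the block and lists each neighbourhood $N(b)$, $b\in B$, as an interval of $<_A$, chosen so that both the adjacency and the enclosure properties hold. Concatenating the orderings of the $k$ consecutive blocks yields an ordering of the whole $A$-side that still has the adjacency and enclosure properties (neighbourhoods stay consecutive, and nesting only occurs inside a single block), so the chained graph is again a bipartite permutation graph; the induced order on the $B$-side, defined as in the proof of Theorem~\ref{thm:bip-perm}, inherits these properties as well. The block is designed so that in the counting argument of that proof every one of its solution vertices is used optimally: a solution vertex on side $A$ is given a neighbourhood that is an interval of $B$ avoiding both ends of its block, so it separates exactly two consecutive $B$-pairs, and symmetrically on side $B$. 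This makes the inequalities $|B|-1\le 2|S\cap A|+|S\cap B|$ and $|A|-1\le 2|S\cap B|+|S\cap A|$ tight block by block, which is what forces $n$ to grow linearly with the right slope.

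Next I would verify that the designated set $S$ is a locating-dominating set (for the first graph), an identifying code (for the second) and an open locating-dominating set (for the third). Domination, and for identifying codes / open codes the required twin-freeness, follow directly from how the intervals cover each side. The core of the verification is separation: by the standard reduction it suffices to check that any two vertices of $V(G)\setminus S$ have distinct neighbourhoods within $S$, plus, for identifying codes, that the solution vertices are also separated from everything else, and, for open codes, the further requirement that every vertex (including those of $S$) has a neighbour in $S$ — which is why the open-code block is smaller and contributes only $2$ to $|S|$, giving $n\le 2k+2$ instead of $n\le 3k+2$. Inside a single block this is a bounded combinatorial check; the only genuinely new point is to confirm that a vertex in the rightmost part of one block and a vertex in the leftmost part of the next block still receive different neighbourhoods in $S$, and it is precisely this requirement that dictates how much consecutive blocks overlap and how the two extreme blocks are truncated.

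Finally, counting the vertices of the three chained graphs and plugging the resulting values of $n$ together with $|S|=3k-1,\,3k-3,\,2k-2$ into Theorem~\ref{thm:bip-perm} shows that each inequality is met up to a constant (and, for a suitable choice of the end blocks, with equality for infinitely many $n$). The step I expect to be the main obstacle is the separation check across block boundaries: the most natural periodic construction creates a pair of vertices — one near the right end of a block, one near the left end of the next — with identical neighbourhoods in $S$, so the block and its gluing must be tuned carefully to kill all such pairs while keeping the global ordering compatible with the adjacency and enclosure properties; fixing the two end blocks so as to land on the exact sizes $3k-1$, $3k-3$, $2k-2$ is a further, minor, bookkeeping complication.
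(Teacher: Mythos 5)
Your proposal is a plan for a proof rather than a proof: the statement is existential, so the proof essentially \emph{is} an explicit construction, and you never exhibit one. You describe a generic ``block'' to be repeated $k$ times, but you do not specify the block's vertex set, its edges, which of its vertices go into $S$, or how consecutive blocks overlap — and you yourself flag that the naive periodic gluing fails (two vertices straddling a block boundary get the same neighbourhood in $S$) and that the block ``must be tuned carefully'', without doing the tuning. Likewise the truncation of the two end blocks, which is what is supposed to produce the exact values $3k-1$, $3k-3$ and $2k-2$, is left as unexamined bookkeeping. As it stands, every substantive step (the block, the cross-boundary separation, the membership in the class of bipartite permutation graphs, the final count) is deferred, so there is nothing to check.

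For comparison, the paper's constructions are much lighter than what you envisage and require no gluing argument at all: for location-domination, take the path $P_{2k-1}$ on $v_0,\dots,v_{2k-2}$, put the even-indexed vertices in $S$, and attach a pendant to each vertex of $S$ (order $3k-1$, $|S|=k$); for identifying codes, take the same path and code but instead add, for each even $i\in\{2,\dots,2k-4\}$, a vertex adjacent to $v_i,v_{i+2},v_{i+4}$ (order $3k-3$); for open location-domination, take $P_k$ with $S=V(P_k)$ and attach pendants to all but two of its vertices (order $2k-2$). One then checks directly that each graph is a bipartite permutation graph (the natural ordering of the path has the adjacency and enclosure properties) and that $S$ separates and (totally) dominates as required. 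Note also that the quantities $3k-1$, $3k-3$, $2k-2$ in the statement are the \emph{orders} of the graphs (each solution has size $k$), which is what makes them near-tight against the bounds $n\le 3k+2$ and $n\le 2k+2$ of Theorem~\ref{thm:bip-perm}; your reading is consistent with this, but your write-up never pins down either $n$ or $|S|$ for any concrete graph. To turn your outline into a proof you would at minimum have to write down one explicit block per problem and verify the boundary separation; at that point you would likely rediscover something equivalent to the path-with-pendants constructions above.
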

\begin{proof}
For location-domination, consider an odd path $P_{2k-1}$ with
$V(P_{2k-1})=\{v_0,\ldots,v_{2k-2}\}$, let $S=\{v_i~|~i=0\bmod 2\}$
and attach a pendant vertex to every vertex in $S$. This graph is a
bipartite permutation graph (observe that $S$, together with its natural ordering,
has the adjacency and enclosure properties), it has $n=3k-1$ vertices and $S$ is a
locating-dominating set.

For identifying codes, again select the odd path $P_{2k-1}$ with
$S=\{v_i~|~i=0\bmod 2\}$, but now for every $i\in\{2,\ldots, 2k-4\}$
add a vertex adjacent to $\{v_i,v_{i+2},v_{i+4}\}$. Again $S$ is an
identifying code and the graph has $n=3k-3$ vertices.

For open locating-dominating sets, select any path $P_k$ with
$S=V(P_k)=\{v_0,\ldots,v_{k-1}\}$, and attach a pendant vertex to
every vertex in $S\setminus\{v_1,v_{k-2}\}$. Again $S$ is an
open locating-dominating set and the graph has $n=2k-2$ vertices.

The constructions are illustrated in Figure~\ref{fig:constr-bip-perm}.
\end{proof}

\begin{figure}[!htpb]
  \centering
  \subfigure[Locating-dominating set]{
    \scalebox{1.0}{
      \begin{tikzpicture}[join=bevel,inner sep=0.5mm,scale=0.75,line width=0.5pt]

        \foreach \i [evaluate=\i as \e using \i*2, evaluate=\i as \o using \i*2+1] in {0,...,5}{

          \pgfmathtruncatemacro\e{\e}
          \pgfmathtruncatemacro\o{\o}

          \path (\i,0) node[draw,shape=circle,fill=black] (\e) {};
          \draw (\e) node[above left=0.05cm] {$v_{\e}$};
          \path (\i,1) node[draw,shape=circle] (\e-t) {};

          \draw (\e)--(\e-t);

          \ifnum \i<5
          \path (\i+0.5,-1) node[draw,shape=circle] (\o) {};
          \draw (\o) node[below=0.1cm] {$v_{\o}$};
          \draw (\e)--(\o);
          \fi

          \ifnum \i>0
          \pgfmathtruncatemacro\o{\o-2}
          \draw (\e)--(\o);
          \fi
        }

      \end{tikzpicture}
    }
  }\qquad
  \subfigure[Identifying code]{
    \scalebox{1.0}{
      \begin{tikzpicture}[join=bevel,inner sep=0.5mm,scale=0.75,line width=0.5pt]

        \foreach \i [evaluate=\i as \e using \i*2, evaluate=\i as \o using \i*2+1] in {0,...,5}{

          \pgfmathtruncatemacro\e{\e}
          \pgfmathtruncatemacro\o{\o}

          \path (\i,0) node[draw,shape=circle,fill=black] (\e) {};
          \draw (\e) node[left=0.1cm] {$v_{\e}$};
          
          \ifnum \i>0
          \ifnum \i<5
          \path (\i,1) node[draw,shape=circle] (\e-t) {};
          \pgfmathtruncatemacro\a{\e-2}
          \pgfmathtruncatemacro\b{\e+2}
          \draw (\e)--(\e-t)--(\a) (\b)--(\e-t);
          \fi
          \fi
                    
          \ifnum \i<5
          \path (\i+0.5,-1) node[draw,shape=circle] (\o) {};
          \draw (\o) node[below=0.1cm] {$v_{\o}$};
          \draw (\e)--(\o);
          \fi
          
          \ifnum \i>0
          \pgfmathtruncatemacro\o{\o-2}
          \draw (\e)--(\o);
          \fi

        }
      \end{tikzpicture}
    }
  }\qquad
  \subfigure[Open locating-dominating set]{
    \scalebox{1.0}{
      \begin{tikzpicture}[join=bevel,inner sep=0.5mm,scale=0.75,line width=0.5pt]

        \foreach \i [evaluate=\i as \e using \i*2, evaluate=\i as \o using \i*2+1] in {0,...,2}{

          \pgfmathtruncatemacro\e{\e}
          \pgfmathtruncatemacro\o{\o}

          \path (\i*1.5,0) node[draw,shape=circle,fill=black] (\e) {};
          \draw (\e) node[above=0.1cm] {$v_{\e}$};
         
          \ifnum \i<2
          \path (\i*1.5,-1) node[draw,shape=circle] (\e-b) {};
          \draw (\e)--(\e-b);
          \fi         
          
          \path (\i*1.5+0.75,-1) node[draw,shape=circle,fill=black] (\o) {};
          \draw (\o) node[below=0.1cm] {$v_{\o}$};
          \draw (\e)--(\o);
 
          \ifnum \i>0
          \path (\i*1.5+0.75,0) node[draw,shape=circle] (\o-t) {};
          \draw (\o)--(\o-t);
          \fi

          \ifnum \i>0
          \pgfmathtruncatemacro\o{\o-2}
          \draw (\e)--(\o);
          \fi

          \path (-0.5,0) node {};
          \path (4.5,0) node {};

        }
      \end{tikzpicture}
    }
  }

\caption{Extremal constructions of Proposition~\ref{prop:low-bip-perm}
with $k=6$. Black vertices belong to the
solution.}\label{fig:constr-bip-perm}
\end{figure}
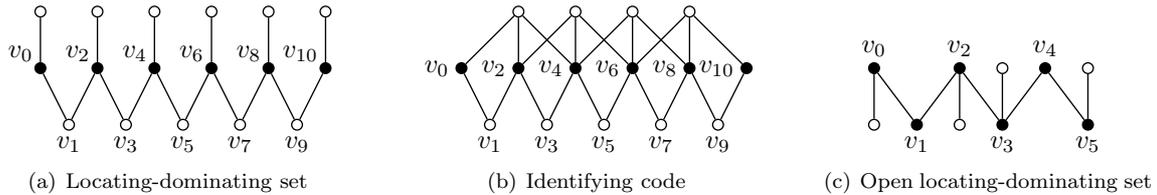

\begin{theorem}\label{thm:bip-permutation-MD}
Let $G$ be a connected bipartite permutation graph on $n$ vertices, of diameter~$D$ and a resolving set of size~$k$. Then $n\leq k(2D-1)+2 = \Theta(Dk)$.
\end{theorem}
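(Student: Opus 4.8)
The plan is to use the two linear orders $<_A$ and $<_B$ on the parts $A$ and $B$ built in the proof of Theorem~\ref{thm:bip-perm} (both enjoying the adjacency and enclosure properties with respect to each other) and to show that distances behave in a ``valley-shaped'' (unimodal) way along each of them. As usual we may assume $G$ is connected with $|A|,|B|\geq 2$; the remaining cases are stars and satisfy the bound directly.

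The heart of the argument is the following structural claim: for every vertex $s$ of $G$ and every integer $t\geq 0$, the set $\{a\in A : d(s,a)\leq t\}$ is an interval of $(A,<_A)$, and $\{b\in B : d(s,b)\leq t\}$ is an interval of $(B,<_B)$. I would prove this by induction on $t$, following the BFS layers from $s$: if the set of vertices of $A$ at distance $\leq t-1$ from $s$ is already an interval of $<_A$, then the set of vertices of $B$ at distance $\leq t$ from $s$ is the union of the neighbourhoods $N(a)$ over all $a$ in that interval; each $N(a)$ is an interval of $<_B$ by the adjacency property, and — the key point — each such $N(a)$ meets the interval ``vertices of $B$ at distance $\leq t-1$ from $s$'', because every vertex counted so far at distance $d\geq 1$ from $s$ has a neighbour at distance $d-1$. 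A union of intervals of $<_B$ that all meet a common interval is itself an interval, and the symmetric step handles $A$. Consequently $d(s,\cdot)$ restricted to $(A,<_A)$ is a step function that is nonincreasing then nondecreasing: its level sets cut $A$ into blocks of constant distance, with the minimum distance occurring in a single block and every larger value in at most two blocks; the same holds on $(B,<_B)$.

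The counting then mirrors the interval-graph argument of Theorem~\ref{thm:interval-MB}. For $s\in S$, on line $A$ all distances to $s$ have the same parity: they lie in $\{0,2,\dots\}\cap[0,D]$ if $s\in A$ and in $\{1,3,\dots\}\cap[1,D]$ if $s\in B$; combined with the previous paragraph this shows $s$ cuts $A$ into at most $D+1$ blocks, and likewise $B$ into at most $D+1$ blocks. Keeping track of the parity on each of the two lines separately (when $s\in A$, the smallest distance attained on $B$ is $1$ rather than $0$, and conversely), one obtains that the total number of block-boundaries $s$ induces on $A$ and on $B$ together is at most $2D-1$. Now let $\mathcal R_A$ be the common refinement of the block partitions of $A$ coming from all $s\in S$, and $\mathcal R_B$ the analogous refinement of $B$; then $|\mathcal R_A|+|\mathcal R_B|$ equals $2$ plus the total number of boundaries, hence at most $k(2D-1)+2$. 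Finally, two distinct vertices of $A$ lying in the same part of $\mathcal R_A$ would have equal distance to every element of $S$ — impossible if one of them lies in $S$ (since that vertex's own distance function isolates it in $\mathcal R_A$), and contradicting the resolving property otherwise — so $|A|\leq|\mathcal R_A|$, similarly $|B|\leq|\mathcal R_B|$, and $n=|A|+|B|\leq k(2D-1)+2$.

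I expect the main obstacle to be the structural claim that the distance level sets from $s$ are intervals of $<_A$ and of $<_B$: one must check carefully that the orders furnished by Theorem~\ref{thm:bip-perm} really make each BFS layer ``attach'' to the previous one as an interval, which is precisely where the adjacency property of both orders is used (the enclosure property having already been spent to produce these orders). Once this unimodality is in place, the rest is the same bookkeeping as in Theorems~\ref{thm:interval-MB} and~\ref{thm:unit-MD}, and in fact the parity argument can be pushed a little further to shave the constant slightly below $k(2D-1)+2$.
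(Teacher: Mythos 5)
Your proposal is correct and follows essentially the same route as the paper: both use the orderings $<_A$ and $<_B$ with the adjacency and enclosure properties to show that each vertex of the resolving set partitions each side into $O(D)$ blocks of consecutive vertices of constant distance, then take the common refinement and observe that each cell contains at most one vertex of $V(G)\setminus S$. The only difference is one of detail: the paper simply asserts the block structure and the ``at most $D$ parts per side'' count, whereas you supply the inductive nested-interval (BFS-layer) argument justifying it and do the boundary bookkeeping jointly over the two sides, arriving at the same bound $k(2D-1)+2$.
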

\begin{proof}
Let $A$ and $B$ be the two parts of the bipartition of $G$, and consider two orderings $<_A$ of $A$ and $<_B$ of $B$ that have the adjacency and enclosure properties. Let $S=\{s_1,\ldots,s_k\}$ be a resolving set of $G$, and assume without loss of generality that for some $i\in\{1,\ldots,k\}$, $s_i\in A$. Then, the sets $A$ and $B$ can be partitioned into parts consisting of consecutive vertices (with respect to $<_A$ and $<_B$), where the vertices in each part have the same distance to $s_i$. Moreover, the vertices in $A$ have even distances to $s_i$, while the vertices in $B$ have odd distances to $s_i$. The number of parts in $A$ and in $B$ is at most $D$.

 Repeating this process for each vertex of $S$, we have partitioned the vertices in $A\setminus S$ and of $B\setminus S$ into at most $k(D-1)+1$ parts each. Each part may contain at most one vertex of $V(G)\setminus S$ (since the membership in a part determines the distances to the vertices of $S$). Hence, we have
\begin{align*}
|V(G)| & \leq 2(k(D-1)+1)+k\\
& = k(2D-1)+2.\hfill\qedhere
\end{align*}
\end{proof}

Next, we show that Theorem~\ref{thm:bip-permutation-MD} is asymptotically tight.

\begin{proposition}\label{prop:md-bip-perm-tight}
For every even $k\geq 2$ and every $D\geq 2$, there exists a bipartite permutation graph of diameter $D$, a resolving set of size~$k$ and $\Theta(Dk)$ vertices.
\end{proposition}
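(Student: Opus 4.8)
The plan is to construct, for every even $k\geq 2$ and every $D\geq 2$, an explicit ``bipartite band'' graph: roughly the bipartite‑permutation analogue of the path powers used after Theorem~\ref{thm:unit-MD}, playing the same role here as the constructions of Propositions~\ref{prop:md-interval-tight} and~\ref{prop:md-perm-tight} do for interval and permutation graphs. Assume first that $D\geq 3$. Put $w=k/2$, and let $G$ have parts $A=\{a_1,\dots,a_M\}$ and $B=\{b_1,\dots,b_M\}$ with $a_j$ adjacent to $b_i$ exactly when $i\leq j\leq i+w$, where $M=w(D-1)/2+1$ when $D$ is odd (the even case is patched below). I would first verify that $G$ is a bipartite permutation graph: it is bipartite by construction, connected because $a_1,b_1,a_2,b_2,\dots,a_M,b_M$ is a spanning path, and the natural ordering $a_1<\dots<a_M$ has the adjacency and enclosure properties of the Spinrad characterization, since each $N(b_i)=\{a_i,\dots,a_{\min(M,i+w)}\}$ is an interval of this ordering and two such intervals are comparable by inclusion only when their left endpoints coincide after the truncation at $M$.

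The crux is a distance estimate. Since one step from $B$ to $A$ or from $A$ to $B$ changes the index by at most $w$, and a zig‑zag advancing one block of $w$ indices per two steps achieves that rate, one obtains for $t\leq w<j$ the closed forms $d(a_j,a_t)=2\lceil(j-t)/w\rceil$ and $d(b_j,a_t)=2\lceil(j-t)/w\rceil+1$, the degree‑one vertex $a_1$ causing no exception (a shortest path to it passes through $b_1$ in any case). Hence, for any vertex $v$, the tuple $\bigl(d(v,a_1),\dots,d(v,a_w)\bigr)$ is non‑increasing in $t$ with a single unit step‑down (possibly degenerate at the boundary): its parity reveals whether $v\in A$ or $v\in B$, and the common value together with the position of the step‑down recovers the index of $v$. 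Therefore $\{a_1,\dots,a_w\}$ is already a resolving set, and since $G$ has $2M=w(D-1)+2\geq k$ vertices we may enlarge it to a resolving set of size exactly $k$. The diameter of $G$ is attained by $a_1$ and $b_M$ and equals $1+2\lceil(M-1)/w\rceil$, which for the stated $M$ is exactly $D$ when $D$ is odd; for even $D\geq 4$ I would instead take $M=w(D/2-1)+1$ and attach a single pendant vertex to $b_M$ (readily seen to preserve all three properties while raising the diameter by one); and for the sole remaining case $D=2$ I would take $G=K_{2,k}$, a bipartite permutation graph of diameter $2$ and metric dimension exactly $k$ on $k+2$ vertices. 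In all cases $n=\Theta(wD)=\Theta(Dk)$.

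I expect the only genuine difficulty to be the distance computation underlying the resolving‑set property: one must prove the formulas for $d(a_j,a_t)$ and $d(b_j,a_t)$ uniformly — in particular that leftward shortest paths toward $a_1,\dots,a_w$ are never shortened by the right end of the band, and that the degree‑one vertex at the left end creates no off‑by‑one — and then verify carefully that the step pattern of $\bigl(d(v,a_t)\bigr)_{t\leq w}$ is injective in $v$, including for the ``boundary'' vertices $b_1,\dots,b_w$. Everything else (the two Spinrad properties, the diameter equalling exactly $D$, and the degenerate even‑$D$ and $D=2$ cases) is routine, though it does require some care with the ceilings and the small cases.
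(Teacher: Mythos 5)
Your construction is correct, and the supporting claims you flag as "the only genuine difficulty" do all go through: the monotone-step argument for $d(a_j,a_t)=2\lceil (j-t)/w\rceil$ (lower bound because an $A\to B$ step decreases the index by at most $w$ and a $B\to A$ step never decreases it; upper bound by the explicit zig-zag), the resulting injectivity of the distance vectors to $\{a_1,\dots,a_w\}$ (parity separates $A$ from $B$, and the pair (common ceiling value, position of the size-two step) recovers the index, with the constant vectors corresponding exactly to $j\equiv 1\pmod w$), and the boundary vertices $b_1,\dots,b_w$ with vectors of the form $(3,\dots,3,1,\dots,1)$. Two cosmetic imprecisions: the step-down in the distance vector has size $2$, not $1$ (it is one unit in the ceiling), and the "non-increasing" claim fails for the vertices $a_j$ with $j\leq w$ — harmlessly, since those lie in the resolving set. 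The paper takes a lazier route: it simply reuses the permutation diagram of Proposition~\ref{prop:md-perm-tight}, i.e.\ the union of $k/2$ translated paths of length $D-1$ (omitting the added cluster vertices), whose $k$ path-endpoints form the resolving set, and inherits the distance analysis from that proposition. The two graphs are close relatives — both are width-$(k/2)$ "thickened paths" of length about $D$ on $\Theta(Dk)$ vertices — but your version is self-contained: it gives an explicit adjacency rule, verifies the Spinrad adjacency/enclosure properties directly rather than exhibiting a diagram, gets away with a one-sided resolving set of size $k/2$ (then padded to $k$), and handles the parity of $D$ and the $D=2$ case ($K_{2,k}$) explicitly, at the cost of a longer distance computation that the paper avoids by citation.
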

\begin{proof}
Simply consider the construction of Proposition~\ref{prop:md-perm-tight} built from $k/2$ paths of length $D-1$ each (omitting the second part of the construction of Proposition~\ref{prop:md-perm-tight}). This bipartite permutation graph has $Dk/2$ vertices and a resolving set of size~$k$.
\end{proof}

\section{Algorithm and bounds for Cographs}\label{sec:cog}

The \emph{cotree} of a cograph $G$ is a tree where the leaves are the
vertices of $G$, and the inner nodes are of type $\oplus$ and
$\bowtie$. This tree represents the construction of $G$ using the two
operations. A cograph can be recognized in linear time and its
corresponding cotree can be constructed in linear time
too~\cite{HP05}. Many problems can be computed in linear time in
cographs using their cotree representation and by simple bottom-up
computation. Epstein, Levin and Woeginger~\cite{ELW12j} gave such an
algorithm for computing the metric dimension. Observe that connected
cographs have diameter at most~2, hence, as already discussed, for any
connected cograph $G$ we have $\MD(G)\leq\LD(G)\leq\MD(G)+1$, and
$\LD(G)=\MD(G)+1$ if and only if every minimum resolving set has a
non-dominated vertex. The latter fact is computed by the algorithm
of~\cite{ELW12j}, which can therefore be used for computing the
location-domination number of a cograph.

In this subsection, we will give a similar linear-time algorithm for
\PBID, and we will give linear lower bounds on the value of parameters
$\LD$, $\MD$, $\ID$ and $\OLD$ in terms of the order.

\subsection{The algorithm}

We describe in detail the algorithm for identifying codes (the one for
open locating-dominating sets is very similar). We denote by
$\sepID(G)$ the smallest size of a \emph{separating set}, that is, a set
$S\subseteq V(G)$ where for every pair $u,v$ of distinct vertices,
there is an element of $S$ dominating exactly one of $u,v$ (it is an
identifying code without the condition of being a dominating set). It
follows from the definitions that
$\sepID(G)\leq\M(G)\leq\sepID(G)+1$, where the upper bound is reached if
and only if for every smallest separating set there is a non-dominated
vertex.
Therefore, if we can compute $\sepID(G)$ as well as decide the latter
fact, then we can compute $\M(G)$.

We define $\emp(G)$ to be the property that for a graph $G$, every
minimum separating set leaves a non-dominated vertex of $G$; $\univ(G)$ is the
property that for every minimum separating set $S$ of $G$, there exists a vertex of $G$
that is dominated by all vertices of $S$.

The advantage of using $\sepID(G)$ comes from the following lemma. 

\begin{lemma}\label{prop:induct_sep}
Let $G_1,G_2$ be two twin-free graphs with $\sepID(G_1)=k_1$ and
$\sepID(G_2)=k_2$. Then, $k_1+k_2\leq\sepID(G_1\oplus G_2)\leq k_1+k_2+1$,
where the upper bound is reached if and only if properties $\emp(G_1)$
and $\emp(G_2)$ hold. Moreover, suppose $G_1\bowtie G_2$ is a twin-free graph, then $k_1+k_2\leq\sepID(G_1\bowtie G_2)\leq
k_1+k_2+1$ and the upper bound is reached if and only if properties
$\univ(G_1)$ and $\univ(G_2)$ hold.
\end{lemma}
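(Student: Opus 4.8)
The plan is to analyze the two cases (disjoint union and complete join) separately, exploiting the fact that in a disjoint union the separating condition decomposes almost entirely, while in a complete join it is the adjacency to \emph{all} vertices of the other part that makes the two components interact.

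\textbf{The disjoint union case.} First I would prove the lower bound $\sepID(G_1\oplus G_2)\geq k_1+k_2$. Given a minimum separating set $S$ of $G:=G_1\oplus G_2$, write $S_i=S\cap V(G_i)$. Since no vertex of $G_1$ is adjacent to any vertex of $G_2$, a vertex of $S_2$ dominates no vertex of $G_1$ and hence cannot separate any pair inside $G_1$; thus $S_1$ must already separate all pairs of $V(G_1)$, giving $|S_1|\geq k_1$, and symmetrically $|S_2|\geq k_2$. For the upper bound, take minimum separating sets $S_1,S_2$ of $G_1,G_2$; I claim $S_1\cup S_2$ separates $G$ except possibly for one ``bad pair'' consisting of one non-dominated vertex of $G_1$ and one non-dominated vertex of $G_2$. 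Indeed, pairs within $G_i$ are separated by $S_i$; a pair with one vertex $u\in V(G_1)$ and one vertex $v\in V(G_2)$ is separated as soon as $u$ is dominated by $S_1$ (or $v$ by $S_2$). So the only unseparated pairs are $(u,v)$ with $u$ non-dominated by $S_1$ and $v$ non-dominated by $S_2$; since a separating set of a twin-free graph has at most one non-dominated vertex (two non-dominated vertices would be a non-separated pair), there is at most one such $u$ and one such $v$. If either $G_i$ has a minimum separating set with \emph{no} non-dominated vertex (i.e. $\emp(G_i)$ fails), choose that one and $S_1\cup S_2$ already works, so $\sepID(G)\leq k_1+k_2$; otherwise add the single bad vertex $u$ to the set to fix it, giving $\sepID(G)\leq k_1+k_2+1$, and one must check that in the $\emp(G_1)\wedge\emp(G_2)$ case the value $k_1+k_2$ is genuinely not attainable — any separating set $S$ of size $k_1+k_2$ must restrict to minimum separating sets on each side, both of which leave a non-dominated vertex, and those two vertices form a non-separated pair, contradiction. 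This establishes the ``if and only if''.

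\textbf{The complete join case.} Here every vertex of $G_1$ is adjacent to every vertex of $G_2$, so a vertex $s\in S_2$ dominates \emph{every} vertex of $G_1$; hence $s$ separates a pair of $V(G_1)$ only if it separates it by adjacency within $G_1$ — but it dominates both — so again $s$ separates no pair inside $G_1$, and $|S_1|\geq k_1$, $|S_2|\geq k_2$, giving the lower bound. For the upper bound, take minimum separating sets $S_1,S_2$; pairs inside $G_i$ are handled by $S_i$, and a cross pair $(u,v)$ with $u\in V(G_1),v\in V(G_2)$ is automatically separated whenever there is a vertex of $S$ dominating exactly one of them — e.g.\ any $s\in S_1$ not adjacent to $u$ inside $G_1$ separates them (it dominates $v$ since the join makes $s$ adjacent to all of $G_2$, but not $u$). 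The cross pair $(u,v)$ fails to be separated only when every element of $S$ dominates both $u$ and $v$, equivalently when $u$ is dominated by all of $S_1$ (within $G_1$) and $v$ is dominated by all of $S_2$; that is exactly the ``universal-type'' vertices counted by $\univ$. As before, twin-freeness of $G_1\bowtie G_2$ forces at most one such $u$ and at most one such $v$, so at most one bad cross pair remains, and the same dichotomy as in the union case gives $\sepID(G_1\bowtie G_2)\leq k_1+k_2+1$ with equality iff $\univ(G_1)$ and $\univ(G_2)$ both hold (in the forcing direction, a size-$k_1+k_2$ separating set restricts to minimum separating sets on both sides, each leaving a vertex dominated by all of it, and those two vertices are not separated).

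\textbf{Main obstacle.} The routine part is the inequality chains; the delicate part is getting the reductions exactly right — in particular verifying that a minimum separating set of the composed graph must restrict to a \emph{minimum} separating set on each side (not merely a separating set), which is what makes the ``only if'' directions work, and making sure the notion of ``non-dominated vertex'' ($\emp$) versus ``vertex dominated by everything'' ($\univ$) is matched to the correct operation. One should also double-check the edge cases where a component is a single vertex (then $\sepID=0$ and the component is trivially ``all dominated'' or ``all non-dominated'' depending on the operation), and confirm that the twin-free hypotheses on $G_1$, $G_2$ (and on $G_1\bowtie G_2$ in the join case) are exactly what is needed to guarantee the ``at most one bad vertex per side'' claim; I expect this bookkeeping, rather than any deep idea, to be where the proof's care is concentrated.
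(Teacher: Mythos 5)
Your proof follows the same route as the paper's: the lower bounds come from restricting a separating set of the composed graph to each part, the upper bounds from taking the union of minimum separating sets and observing that at most one cross pair can fail to be separated, and the ``if and only if'' from the dichotomy on $\emp$ (resp.\ $\univ$) — in fact you spell out the ``only if'' directions more explicitly than the paper does. The one step that would fail if taken literally is the repair of the bad cross pair in the join case: you invoke ``the same dichotomy as in the union case,'' but adding one of the two bad vertices $u\in V(G_1)$, $v\in V(G_2)$ to $S$ does not separate them in $G_1\bowtie G_2$, because $u$ and $v$ are adjacent and hence $u$ dominates both of them. This is exactly where the twin-freeness of $G_1\bowtie G_2$ enters (and the only place it is needed): since $u$ and $v$ are not twins, some vertex $w$ satisfies $w\in N[u]\triangle N[v]$, and $S\cup\{w\}$ is the desired separating set of size $k_1+k_2+1$. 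Relatedly, your attribution of the twin-freeness hypotheses to the ``at most one bad vertex per side'' claim is misplaced — that claim already follows from the separating property of $S_1$ and $S_2$ alone (two vertices of $G_1$ both non-dominated by $S_1$, or both dominated by all of $S_1$, would themselves be a non-separated pair of $G_1$).
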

\begin{proof}
Note that in both $G_1\oplus G_2$ and $G_1\bowtie G_2$, a vertex in
$G_1$ cannot separate a pair in $G_2$, and vice-versa. Hence, for
every separating set of $G_1\oplus G_2$ or $G_1\bowtie G_2$, its
restriction to $G_i$ for $i\in\{1,2\}$ is a separating set of
$G_i$. This proves the two lower bounds.

For the upper bounds, let $S_1$ and $S_2$ be minimum separating sets
of $G_1$ and $G_2$, respectively. If $S=S_1\cup S_2$ is not a
separating set of $G_1\oplus G_2$ or $G_1\bowtie G_2$, by the previous
observation, there must be a pair $u,v$ with $u\in G_1$ and $v\in G_2$
that is not separated. In the case of $G_1\oplus G_2$, these two
vertices must both be non-dominated by $S$, and this is the only
non-separated pair. Then, adding one of them gives a separating set of
size $k_1+k_2+1$. For the case $G_1\bowtie G_2$, $u$ is dominated
by all vertices of $S_2$, and $v$ is dominated by all vertices of
$S_1$. Hence both $u,v$ must be dominated by all vertices of $S$, and
this is the only non-separated pair. Since $u,v$ are not twins, there
must be a vertex $w$ that separates them; $S\cup\{w\}$ is a separating
set of $G_1\bowtie G_2$ of size $k_1+k_2+1$.
\end{proof}

Using the following lemma, it is easy to keep track of the properties
$\emp$ and $\univ$ while parsing the cotree structure of a cograph
$G$.

\begin{lemma}\label{prop:emp-univ} We have:\\
1. $\emp(K_1)$ and $\univ(K_1)$;\\
2. $\emp(G_1\oplus G_2)$ if and only if $\emp(G_1)$ or $\emp(G_2)$;\\
3. $\univ(G_1\oplus G_2)$ if and only if one of $G_1,G_2$ (say $G_1$) is $K_1$, $\univ(G_2)$ and $\neg\emp(G_2)$;\\
4. $\emp(G_1\bowtie G_2)$ if and only if $G_1=K_1$, $\neg\univ(G_2)$ and $\emp(G_2)$;\\
5. $\univ(G_1\bowtie G_2)$ if and only if $\univ(G_1)$ or $\univ(G_2)$.\\
6. If $\neg \emp(G)$ and $\neg \univ(G)$, then there exists a minimum separating set $S$ of $G$ such that every vertex of $G$ is dominated by some element of $S$ but no vertex of $G$ is dominated by the entire set $S$.
\end{lemma}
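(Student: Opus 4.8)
The plan is to prove the six items together, by reading off how (minimum) separating sets of $G_1\oplus G_2$ and of $G_1\bowtie G_2$ restrict to the two factors, with Lemma~\ref{prop:induct_sep} supplying all the quantitative control. First I would record the facts that do the work. As noted in the proof of Lemma~\ref{prop:induct_sep}, for a separating set $S$ of $G_1\oplus G_2$ or $G_1\bowtie G_2$, each $S\cap V(G_i)$ is a separating set of $G_i$; moreover, by Lemma~\ref{prop:induct_sep}, $\sepID$ of the combined graph is either $k_1+k_2$ (in which case both restrictions of a minimum separating set are minimum) or $k_1+k_2+1$ (in which case one restriction is minimum and the other has exactly one extra vertex), the second case occurring precisely when $\emp(G_1)\wedge\emp(G_2)$ for $\oplus$, respectively $\univ(G_1)\wedge\univ(G_2)$ for $\bowtie$. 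Next, two elementary domination facts: in $G_1\oplus G_2$ a vertex $v\in V(G_i)$ is dominated by $S$ iff it is dominated by $S\cap V(G_i)$, and it is dominated by \emph{every} vertex of $S$ iff $S\subseteq V(G_i)$ and $v$ is dominated by every vertex of $S$ inside $G_i$; in $G_1\bowtie G_2$ a vertex $v\in V(G_i)$ is dominated by $S$ iff $S\cap V(G_{3-i})\neq\emptyset$ or $v$ is dominated by $S\cap V(G_i)$ inside $G_i$, and it is dominated by every vertex of $S$ iff it is dominated by every vertex of $S\cap V(G_i)$ inside $G_i$. Item~1 is then immediate: the unique minimum separating set of $K_1$ is the empty set, which leaves the vertex undominated (so $\emp(K_1)$) and for which the ``universal'' condition holds vacuously (so $\univ(K_1)$).

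For items~2--5 I would case on the $\sepID$-dichotomy above. Take the ``equals $k_1+k_2$'' case first, so that a minimum separating set $S$ of the combined graph restricts to minimum separating sets $S_1,S_2$ of $G_1,G_2$, and conversely any union of minimum separating sets of the factors is a minimum-size separating set provided it still separates --- and by the proof of Lemma~\ref{prop:induct_sep} it can fail to separate only through a single cross pair (both of its vertices undominated, in the $\oplus$ case; both universal to $S$, in the $\bowtie$ case). Feeding the two domination facts into this, item~2 reduces to ``$S$ leaves an undominated vertex of $G_1\oplus G_2$ iff $S_1$ or $S_2$ does'', and item~5 to ``$S$ has a vertex universal to $S$ in $G_1\bowtie G_2$ iff $S_1$ or $S_2$ has such a vertex in its factor''. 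The mixed items~3 and~4 use in addition that an undominated (resp.\ universal-to-$S$) vertex can occur in $G_1\oplus G_2$ only when $S$ lies entirely inside one factor, which forces the \emph{other} factor, being separated by $S\cap V(\cdot)=\emptyset$, to be $K_1$; one then transfers the remaining requirement onto the non-trivial factor, and this is exactly where $\neg\emp$ (for item~3) or $\neg\univ$ (for item~4) of that factor enters --- because the isolated vertex of the $K_1$ side can always be appended to a dominating, resp.\ universal-free, minimum separating set of the non-trivial side. Finally, in the ``equals $k_1+k_2+1$'' case one extra observation closes each item: one of the restrictions $S_i$ is minimum, so the corresponding property of $G_i$ already yields what is needed about $S$.

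For item~6 I would not induct but split $G$ at the root of its cotree: $G=G_1\oplus G_2$ (with $G_1$ one connected component and $G_2$ the union of the others) if $G$ is disconnected, and $G=G_1\bowtie G_2$ if $G$ is connected. Assume $\neg\emp(G)$ and $\neg\univ(G)$. In the disconnected case, item~2 gives $\neg\emp(G_1)$ and $\neg\emp(G_2)$, hence (item~1) neither factor is $K_1$, hence $k_1,k_2\geq1$; choose minimum separating sets $A_1,A_2$ of $G_1,G_2$ that dominate their factor (they exist since $\neg\emp(G_i)$). Then $A_1\cup A_2$ has size $k_1+k_2=\sepID(G)$, still separates (nothing is undominated, so the bad cross pair cannot occur), dominates $G$, and has no vertex universal to it because a vertex of $V(G_1)$ cannot be dominated by the non-empty set $A_2\subseteq V(G_2)$. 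In the connected case, item~5 gives $\neg\univ(G_1)$ and $\neg\univ(G_2)$, hence (item~1) $k_1,k_2\geq1$ and also $\sepID(G)=k_1+k_2$ (the $+1$ case would require $\univ(G_1)\wedge\univ(G_2)$); choose minimum separating sets $A_1,A_2$ of $G_1,G_2$ having \emph{no} vertex universal to them in their factor. Then $A_1\cup A_2$ has size $k_1+k_2=\sepID(G)$, still separates (the bad cross pair needs a universal-to-$A_i$ vertex in each factor, which does not exist), dominates $G$ because both $A_i$ are non-empty and everything is adjacent across the join, and by the domination facts has no vertex universal to it. In either case $S=A_1\cup A_2$ is the set required.

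The part I expect to be most delicate is not any single argument but the uniform handling of the degenerate cases built around $K_1$ / isolated vertices --- these are exactly where the $+1$ in $\sepID$ and the $\emp$/$\univ$ bookkeeping interact (for instance, in item~3 one must track whether the single vertex of the $K_1$ factor lies in the separating set and whether deleting it still separates, and it is precisely there that $\neg\emp$ of the non-trivial factor is needed). Organising the proof so that the two domination facts, the ``single bad cross pair'' statement, and the $\sepID$-dichotomy are all isolated up front reduces each of the six items to a short, local verification.
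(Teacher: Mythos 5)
Your overall architecture (the two domination facts for $\oplus$ and $\bowtie$, the dichotomy from Lemma~\ref{prop:induct_sep}, items 1, 2, 5 first, then item 6 from items 2 and 5, then items 3 and 4) is sound, and your non-inductive derivation of item~6 at the root of the cotree is arguably cleaner than the paper's simultaneous induction. Items 1, 2, 5 and 6 go through essentially as you describe. However, there is a genuine gap in item~4, in the direction $\emp(K_1\bowtie G_2)\Rightarrow\neg\univ(G_2)$. Your plan is to refute this by ``appending the vertex $w$ of the $K_1$ side'' to a minimum separating set $S_2$ of $G_2$, producing a minimum separating set of $K_1\bowtie G_2$ with no undominated vertex. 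This fails: if $\univ(G_2)$ holds, some vertex $x$ of $G_2$ is dominated by every vertex of $S_2$, and since $w$ is universal in $K_1\bowtie G_2$, the pair $w,x$ is not separated by $S_2\cup\{w\}$ --- so $S_2\cup\{w\}$ is not a separating set at all. (Your phrase ``universal-free minimum separating set of the non-trivial side'' also cannot be instantiated here, since under the hypothesis $\univ(G_2)$ no such set exists.) This step is the one genuinely delicate point of the lemma, and the paper handles it by taking $y$ a non-neighbour of $x$ (which exists because $K_1\bowtie G_2$ is twin-free, else $x$ and $w$ would be twins), checking that $S_2\cup\{y\}$ is a minimum separating set of $K_1\bowtie G_2$, and then showing it leaves no undominated vertex by exhibiting an induced path $y,s,x,u$ on four vertices (where $u$ is the alleged undominated vertex and $s\in S_2$ a neighbour of $y$), contradicting $P_4$-freeness. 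Nothing in your sketch supplies this, and it is not a routine ``local verification'': both twin-freeness and the cograph structure are essential inputs that your uniform framework never touches.

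A smaller, repairable omission: in the forward directions of items 3 and 4 you must refute the combination $\neg\univ(G_2)\wedge\neg\emp(G_2)$ by producing a \emph{single} minimum separating set of $G_2$ that is simultaneously dominating and universal-free; having one set of each kind is not enough, because only the dominating ones are separating sets of $K_1\oplus G_2$, and only the universal-free ones are separating sets of $K_1\bowtie G_2$. This conjunction is exactly item~6 applied to the factor $G_2$. Since you prove item~6 independently of items 3 and 4, you may invoke it there without circularity, but you should say so explicitly --- the paper runs a simultaneous induction precisely so that item~6 is available for the factors, and your write-up currently leaves the transfer step at the level of ``one then transfers the remaining requirement onto the non-trivial factor''.
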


\begin{proof}
We prove the lemma by induction using the cotree structure of cographs.

The first item is clearly true.
For the second item, assume $\emp(G_1\oplus
G_2)$. By Lemma~\ref{prop:induct_sep}, if $\neg\emp(G_1)$ or
$\neg\emp(G_2)$, then any minimum separating set of $G_1\oplus G_2$ is
the union of a minimum separating set of $G_1$ and one of $G_2$. Hence
if both $\neg\emp(G_1)$ and $\neg\emp(G_2)$, then $\neg\emp(G_1\oplus
G_2)$, which is a contradiction. Now, if $\emp(G_1)$ and $\neg\emp(G_2)$ (or vice-versa), by Lemma~\ref{prop:induct_sep}, we have $\emp(G_1\oplus G_2)$. If both $\emp(G_1)$ and
$\emp(G_2)$, then again by Lemma~\ref{prop:induct_sep},
$\sepID(G_1\oplus G_2)=\sepID(G_1)+\sepID(G_2)+1$, but since no vertex of
$G_1$ dominates any vertex of $G_2$ (and vice-versa), there must
remain a non-dominated vertex in $G_1\oplus G_2$.

For the third item, assume $\univ(G_1\oplus G_2)$. If none of
$G_1,G_2$ is $K_1$, then there must be a code vertex in both
$G_1,G_2$, which would imply that $\neg\univ(G_1\oplus G_2)$ and contradict the hypothesis. Thus we may assume $G_1=K_1$,
and let $S_2$ be a minimum separating set of $G_2$. By
Lemma~\ref{prop:induct_sep}, if $\emp(G_2)$, $\sepID(G_1\oplus
G_2)=\sepID(G_1)+\sepID(G_2)+1$. But then $S'=S_2\cup V(K_1)$ is a minimum
separating set of $G_1\oplus G_2$ without a vertex dominated by the
whole of $S'$, a contradiction. Hence, $\neg\emp(G_2)$. If
we also have $\neg\univ(G_2)$, by induction hypothesis and using item $6$, there exists a minimum separating set $S_2$ of $G_2$ with no vertex dominated by the whole set $S_2$ and with all vertices of $G_2$ dominated by $S_2$. 
Hence $S_2$ is a minimum separating set of $G_1\oplus G_2$ without a vertex dominated by the whole set $S_2$ and we have $\neg \univ(G_1\oplus G_2)$, a contradiction. 
For the converse, if $G_1=K_1$, $\univ(G_2)$ and
$\neg\emp(G_2)$, then by Lemma~\ref{prop:induct_sep}, $\sepID(G_1\oplus
G_2)=\sepID(G_1)+\sepID(G_2)$, and it is clear that there is no minimum separating set of $G_1\oplus G_2$ containing the vertex of $K_1$.
Hence every minimum separating set of $G_1\oplus G_2$ is a
minimum separating set of $G_2$, and since $\univ(G_2)$, we are done.

For the fourth item, assume that $\emp(G_1\bowtie G_2)$. Again if none
of $G_1,G_2$ is $K_1$ there must be a code vertex in each part, a
contradiction. Assume $G_1=K_1$. If $\univ(G_2)$, by
Lemma~\ref{prop:induct_sep}, $\sepID(G_1\bowtie
G_2)=\sepID(G_1)+\sepID(G_2)+1$. Since $\emp(G_1\bowtie G_2)$, the vertex
of $K_1$ cannot belong to any minimum separating set. Consider a
minimum separating set $S_2$ of $G_2$: since $\univ(G_2)$, there is a vertex $x$ of $G_2$, which is dominated by the whole set $S_2$. But since $G$ is twin-free, $x$
has a non-neighbour $y$ in $G_2$ (and $y\notin S_2$). 
Then $S_2\cup\{y\}$ is a (minimum) separating set of $G_1\bowtie G_2$. Since $\emp(G_1\bowtie G_2)$, there is a vertex $u$, necessarily in $G_2$, that is not dominated by $S_2\cup\{y\}$.
If $x$ is not adjacent to $u$, we could choose $u$ to be $y$ and $S_2\cup\{u\}$ would be a (minimum)
separating set of $G_1\bowtie G_2$ without any non-dominated
vertex (since $S_2$ is a separating set for $G_2$, there is at most one vertex of $G_2$ having no neighbours in $S_2$), a contradiction. 
Hence, $x$ is adjacent to $u$ and $u\neq y$ and $y$ is not adjacent to $u$ and $x$.
But since $u\neq y$ and $S_2$ is a separating set of $G_2$, in order to be separated from $u$, $y$ must be adjacent to some vertex $s$ of $S_2$. Then, $y,s,x,u$ induce a path on
four vertices, a contradiction since we assumed $G_1\bowtie G_2$ is a
cograph. Hence $\neg\univ(G_2)$. 
Now, if we also have $\neg\emp(G_2)$, using induction hypothesis and item 6, there is a minimum
separating set $S_2$ of $G_2$ that dominates each vertex of $G_2$ and such that no vertex of $S_2$ is dominated by all the other vertices of $S_2$. Hence $S_2$ is also a separating set of $G_1$ and $\neg\emp(G_1\bowtie G_2)$, a contradiction. 
For the converse, assume $G=K_1$, $\neg\univ(G_2)$ and $\emp(G_2)$. Then by Lemma~\ref{prop:induct_sep}
$\sepID(G_1\bowtie G_2)=\sepID(G_1)+\sepID(G_2)=\sepID(G_2)$. Let $S$ be a
minimum separating set of $G_1\bowtie G_2$: then $S\setminus V(K_1)$
is a minimum separating set of $G_2$, hence $S\setminus V(K_1)=S$ and thus
we have $\emp(G_1\bowtie G_2)$ (since $\emp(G_2)$).

For the fifth item, suppose that $\neg\univ(G_1)$ and
$\neg\univ(G_2)$. Then, by Lemma~\ref{prop:induct_sep},
$\sepID(G_1\bowtie G_2)=\sepID(G_1)+\sepID(G_2)$ and the restriction
of a separating set $S$ of $G_1\bowtie G_2$ to $G_i$ ($i\in\{1,2\}$)
is a separating set of $G_i$. Since none of $G_1,G_2$ is $K_1$, there
is vertex of $S$ in each part, hence we have $\neg\univ(G_1\bowtie
G_2)$. For the converse, assume that $\univ(G_1)$ or $\univ(G_2)$. If
both $\univ(G_1)$ and $\univ(G_2)$, by Lemma~\ref{prop:induct_sep},
$\sepID(G_1\bowtie G_2)=\sepID(G_1)+\sepID(G_2)+1$. Again, since the
restriction of a separating set $S$ of $G_1\bowtie G_2$ to $G_i$
($i\in\{1,2\}$) is a separating set of $G_i$, a minimum separating set
$S$ of $G_1\bowtie G_2$ consists of one separating set $S_1$ of $G_1$, one separating set $S_2$ of $G_2$, with an additional vertex in say $G_1$. Then the vertex in
$G_2$ that is dominated by the whole $S_2$ is also dominated by the
whole set $S$. The other case is handled similarly.

For the sixth item, we use the previous results. Assume first that $G=G_1\oplus G_2$ and that $\neg \emp(G)$ and $\neg \univ(G)$. Then we have in particular, using item 2, $\neg \emp(G_1)$ and $\neg \emp(G_2)$. Consider any minimum separating sets $S_1$ of $G_1$ and $S_2$ of $G_2$ that dominates all the vertices of $G_1$ and $G_2$ respectively. By Lemma~\ref{prop:induct_sep}, $S_1\cup S_2$ is a minimum separating set of $G$ that dominates all the vertices of $G$. Since $S_1$ and 
$S_2$ are both non-empty, $S_1\cup S_2$ has no vertex dominated by all the vertices of $S_1\cup S_2$.
Assume now that $G=G_1\bowtie G_2$ and that $\neg \emp(G)$ and $\neg \univ(G)$. Using item 5, we have $\neg \univ(G_1)$ and $\neg \univ(G_2)$. Let $S_1$ (respectively $S_2$) be a minimum separating set of $G_1$ (respectively $G_2$) with no vertex dominated by all the vertices of $S_1$ (respectively $S_2$). By Lemma~\ref{prop:induct_sep}, $S_1\cup S_2$ is a minimum separating set of $G$ and no vertex is dominated by all the vertices of $S_1\cup S_2$.
Moreover, $S_1$ and $S_2$ are both non-empty, hence $S_1\cup S_2$ dominates all the vertices of $G$. 
\end{proof}

Observe that if a cograph is twin-free, then every
intermediate cograph obtained during its construction is twin-free
too, since operations $\oplus$ and $\bowtie$ preserve twins.
This fact together with Lemmas~\ref{prop:induct_sep}
and~\ref{prop:emp-univ},  implies a linear time algorithm which constructs an identifying code of a minimum size for a given cograph (based on parsing of it cotree structure).

Moreover similar ideas
lead to an algorithm for open locating-dominating sets, the details of which are left to the reader.

\begin{theorem}
There exist linear-time algorithms that construct a minimum identifying code and a minimum open locating-dominating set of a cograph.
\end{theorem}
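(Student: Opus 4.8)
The plan is to compile the recursions of Lemmas~\ref{prop:induct_sep} and~\ref{prop:emp-univ} into a single bottom-up sweep of the cotree. First I would build the cotree in linear time~\cite{HP05} and dispose of the degenerate case: it is easy to check in linear time whether the input cograph has a pair of (closed) twins, in which case no identifying code exists; otherwise the graph is twin-free, and since $\oplus$ and $\bowtie$ preserve twins, the cograph $G_t$ attached to every node $t$ of the cotree is twin-free too, so Lemmas~\ref{prop:induct_sep} and~\ref{prop:emp-univ} apply at every node.

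Then I would process the cotree in post-order, maintaining at each node $t$ four pieces of data: the integer $\sepID(G_t)$, the Booleans $\emp(G_t)$ and $\univ(G_t)$, and one concrete minimum separating set $S_t$ of $G_t$, chosen \emph{extremally} --- dominating whenever $\neg\emp(G_t)$, and with no vertex dominated by all of $S_t$ whenever $\neg\univ(G_t)$ (both can be arranged at once when both properties fail, by item~6 of Lemma~\ref{prop:emp-univ}). At a leaf $K_1$ everything is immediate, with $S_t=\emptyset$ (item~1). At an internal node with children $t_1,t_2$: the value $\sepID(G_t)$ is read off from Lemma~\ref{prop:induct_sep} according to whether both children satisfy $\emp$ (a $\oplus$-node) or both satisfy $\univ$ (a $\bowtie$-node); the Booleans $\emp(G_t),\univ(G_t)$ follow from items~2--5 of Lemma~\ref{prop:emp-univ}; and $S_t$ is built from $S_{t_1},S_{t_2}$ exactly as in the constructive proof of Lemma~\ref{prop:induct_sep}, namely as the disjoint union $S_{t_1}\cup S_{t_2}$, together with one extra vertex in the $+1$ case only (a non-dominated vertex for $\oplus$; a vertex separating the two ``dominated-by-everything'' vertices of the children for $\bowtie$). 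At the root we have $\M(G)=\sepID(G)$ if $\neg\emp(G)$, and then the maintained $S$ is already a minimum identifying code; if $\emp(G)$ then $\M(G)=\sepID(G)+1$ and we add the non-dominated vertex of $S$ to obtain one.

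For the running time I would store each $S_t$ as a doubly linked list, so that forming the disjoint union $S_{t_1}\cup S_{t_2}$ is an $O(1)$ splice rather than an $O(|S_{t_1}|+|S_{t_2}|)$ copy; with $O(n)$ cotree nodes this makes the sweep linear, \emph{provided} each node is handled in constant time. The one step that is not automatically constant-time is locating the extra vertex in a $+1$ step: we must avoid scanning $G_t$ for a non-dominated or a separating vertex. The remedy is to carry a bounded set of witness pointers along with each node --- for a node with $\emp(G_t)$, the (unique) vertex left non-dominated by $S_t$; for a node with $\univ(G_t)$, a vertex dominated by all of $S_t$ together with one of its non-neighbours, or a flag that it has none --- and to update these by the same case analysis that underlies the proofs of Lemmas~\ref{prop:induct_sep} and~\ref{prop:emp-univ}. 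I expect this bookkeeping, and especially checking that the required witness always exists (which is precisely where twin-freeness is used, e.g.\ two vertices each dominated by all of a separating set cannot both be universal without being twins), to be the main --- though entirely routine --- obstacle.

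Finally, the algorithm for open locating-dominating sets is obtained by replaying the whole construction with ``separates'' replaced by ``totally separates'', closed neighbourhoods by open ones, twin-freeness by open-twin-freeness, and $\sepID,\emp,\univ$ and Lemmas~\ref{prop:induct_sep}--\ref{prop:emp-univ} by their open analogues (whose statements and proofs run in parallel, as already indicated in the text); the same cotree sweep then produces a minimum open locating-dominating set in linear time.
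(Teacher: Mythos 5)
Your proposal is correct and follows exactly the route the paper takes: a post-order sweep of the cotree maintaining $\sepID$, $\emp$, $\univ$ and an extremal witness separating set via Lemmas~\ref{prop:induct_sep} and~\ref{prop:emp-univ}, with twin-freeness propagated to every cotree node because $\oplus$ and $\bowtie$ preserve twins. The paper itself leaves all the implementation details (constant-time splicing, witness pointers, the open-neighbourhood analogues) to the reader, so your write-up is, if anything, more complete than the original.
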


\subsection{Bounds for cographs}

We now use the previous discussion to give tight lower bounds on the
identifying code number, (open) locating-domination number and metric
dimension of cographs.

\begin{theorem}\label{thm:bound-cographs}
Let $G$ be a twin-free cograph on $n\geq 2$ vertices with an
identifying code of size~$k$. Then, $n\leq 2k-2$, or equivalently $\M(G)\geq
\frac{n+2}{2}$.
\end{theorem}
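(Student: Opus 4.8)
The plan is to induct on the cotree structure of $G$, using the machinery built up for $\sepID$ in Lemmas~\ref{prop:induct_sep} and~\ref{prop:emp-univ}, and then convert the bound on $\sepID$ into one on $\M$. More precisely, I would first prove by structural induction that every twin-free cograph $H$ on $m\geq 2$ vertices satisfies $m\leq 2\sepID(H)-1$, and in fact a slightly more refined statement that tracks the properties $\emp$ and $\univ$: if $\emp(H)$ or $\univ(H)$ holds then one gets the stronger inequality $m\leq 2\sepID(H)-2$, which is exactly what is needed to absorb the ``$+1$'' that appears in Lemma~\ref{prop:induct_sep} when passing to $G_1\oplus G_2$ or $G_1\bowtie G_2$. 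Keeping this stronger inequality in the ``bad'' cases is what makes the induction go through.

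For the base case, a single vertex is the only twin-free cograph with fewer than $2$ vertices, so the induction really starts at $m=2$: the only twin-free cograph on two vertices is $K_2$ (the edge), for which $\sepID(K_2)=1$ since one vertex separates the pair, and indeed $2\leq 2\cdot 1 = 2$, matching the $m\leq 2\sepID(H)-2$ form here — actually one should be slightly careful and check the small cases directly, since $K_2$ has $\emp(K_2)$ failing but $\univ(K_2)$ as well; I would just verify by hand that $n\leq 2k-2$ holds on all twin-free cographs with, say, $n\leq 3$. For the inductive step, write $H=H_1\star H_2$ with $\star\in\{\oplus,\bowtie\}$, where $H_1,H_2$ are twin-free (using the observation just before the theorem that $\oplus$ and $\bowtie$ preserve twins, so the pieces of a twin-free cograph are twin-free). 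If one of $H_1,H_2$ is $K_1$, handle that separately and directly. Otherwise both have at least $2$ vertices, apply the induction hypothesis to each, add $|V(H_1)|+|V(H_2)|$, and combine with the bounds $\sepID(H)\geq \sepID(H_1)+\sepID(H_2)$ from Lemma~\ref{prop:induct_sep}. When the upper bound in Lemma~\ref{prop:induct_sep} is \emph{not} attained (i.e. $\sepID(H)=\sepID(H_1)+\sepID(H_2)+1$ fails), we have $\sepID(H)=\sepID(H_1)+\sepID(H_2)$ and the two ``$2k-1$''-type bounds add up to give $|V(H)|\leq 2\sepID(H)-2$, even better than needed. When the upper bound \emph{is} attained — which by Lemma~\ref{prop:induct_sep} forces $\emp(H_1)\wedge\emp(H_2)$ (for $\oplus$) or $\univ(H_1)\wedge\univ(H_2)$ (for $\bowtie$) — the refined induction hypothesis gives $|V(H_i)|\leq 2\sepID(H_i)-2$ for both $i$, so $|V(H)|\leq 2\sepID(H_1)+2\sepID(H_2)-4 = 2(\sepID(H)-1)-4+2 = 2\sepID(H)-4 \leq 2\sepID(H)-1$, and the slack covers it.

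The main obstacle is getting the refined induction hypothesis exactly right: I need to decide, for each of the cases $\oplus$ and $\bowtie$ and each configuration of $\emp/\univ$ flags on $H_1,H_2$, whether the conclusion about $H$ should be the plain bound $m\leq 2\sepID(H)-1$ or the strengthened $m\leq 2\sepID(H)-2$, and Lemma~\ref{prop:emp-univ} must then certify that $\emp(H)$ or $\univ(H)$ holds precisely when I claim the strong bound. The $K_1$ summand cases are the delicate ones, since that is exactly where Lemma~\ref{prop:emp-univ} toggles the flags in a non-obvious way (items 3 and 4); I expect to have to trace through those carefully. The final conversion from $\sepID$ to $\M$ is immediate: from $\sepID(G)\leq \M(G)$ and $n\leq 2\sepID(G)-1$ one would only get $n\leq 2\M(G)-1$, which is off by one, so I actually need to run the whole argument with $\M$ in place of $\sepID$ — using that Lemma~\ref{prop:induct_sep}'s analogue for identifying codes (dominating separating sets) behaves the same way, together with the fact that in the extremal configurations the separating set can be taken to dominate every vertex (item 6 of Lemma~\ref{prop:emp-univ}) — so that an identifying code of size $k$ really does give $n\leq 2k-2$. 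Equivalently: carry the induction for $\M$ directly, with the refined hypothesis ``$n\leq 2\M(G)-2$, and moreover if $\neg\emp(G)$ then $\M(G)=\sepID(G)$'', and finish.
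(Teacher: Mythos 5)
Your overall strategy --- inducting on the cotree while tracking the flags $\emp$ and $\univ$ via Lemmas~\ref{prop:induct_sep} and~\ref{prop:emp-univ}, and converting from $\sepID$ to $\M$ at the end --- is indeed the paper's strategy, but the refined induction hypothesis you propose is inverted, and both of your stated intermediate claims are false. The graph $\overline{K_2}$ (which, not $K_2$, is the unique twin-free graph on two vertices: $K_2$ is a pair of closed twins and admits no separating set at all, so your base-case computation is already off) refutes both claims: it has $n=2$ and $\sepID(\overline{K_2})=1$, so $n\leq 2\,\sepID-1$ fails, and since $\emp(\overline{K_2})$ holds, your strengthened claim ``$\emp$ or $\univ$ implies $n\leq 2\,\sepID-2$'' fails as well; $K_1\oplus P_3$ (with $n=4$, $\sepID=2$ and $\emp$ holding) is a slightly larger counterexample. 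The correct trichotomy, which is what the paper proves, runs in the opposite direction: $\neg\emp(G)\wedge\neg\univ(G)$ gives the strong bound $n\leq 2\,\sepID(G)-2$, exactly one of the two flags gives $n\leq 2\,\sepID(G)-1$, and $\emp(G)\wedge\univ(G)$ gives only $n\leq 2\,\sepID(G)$. Relatedly, your accounting of Lemma~\ref{prop:induct_sep} is backwards: the case where the ``$+1$'' is attained is the \emph{easy} case (you gain a unit of budget, so even the weakest per-piece bounds suffice), whereas the hard case is $\sepID(G)=\sepID(G_1)+\sepID(G_2)$, where the deficit must be extracted from the flags of the pieces.

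The slack in the weak cases is then recovered not inside the $\sepID$ induction but only at the very end, when passing to $\M$: whenever $\emp(G)$ holds one has $\M(G)=\sepID(G)+1$, which upgrades $n\leq 2\,\sepID(G)$ to $n\leq 2\M(G)-2$, and $n\leq 2\,\sepID(G)-1$ to $n\leq 2\M(G)-3$; when $\neg\emp(G)$ holds the induction already delivers $\sepID(G)\geq\frac{n+1}{2}$ or better. Your closing suggestion to ``run the whole argument with $\M$ in place of $\sepID$'' would not work as stated, since $\M$ does not enjoy the clean additivity of Lemma~\ref{prop:induct_sep} under $\oplus$ (a disconnected graph can have $\M$ strictly larger than the sum over components plus one); the paper deliberately carries $\sepID$ through the induction and converts only at the end. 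As written, the proposal therefore does not yield a proof; to repair it you must flip the refined hypothesis to the trichotomy above and redo the case analysis of the inductive step (in particular the $K_1$-summand cases) accordingly.
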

\begin{proof}
In fact, we prove the following stronger facts (for a cograph $G$ on at least two vertices):\\1. if $\neg\emp(G)$
and $\neg\univ(G)$, then $\sepID(G)\geq\frac{n+2}{2}$;\\
2. if $\emp(G)$ and $\neg\univ(G)$ or $\neg\emp(G)$ and $\univ(G)$,
then $\sepID(G)\geq\frac{n+1}{2}$;\\
3. if $\emp(G)$ and $\univ(G)$, then $\sepID(G)\geq\frac{n}{2}$.

The proof uses induction on the order of the cograph and the fact that
any cograph is built recursively from two cographs using operation
$\oplus$ or $\bowtie$. The claim is clearly true for the only
twin-free cograph on two vertices, $\overline{K_2}$, hence assume
$n>2$. We just have to prove the result for $G=G_1\oplus G_2$ since everything is symmetric by taking the complement and exchanging $\emp(G)$ with $\univ(G)$.

Assume first that $G_1=K_1$. Then $G_2$ has $n_2\geq 2$ vertices and by induction the properties $1,2,3$ hold for $G_2$.
We have $\emp(G_1)$ and so $\emp(G)$. If $\univ(G)$ holds, then by
Lemma~\ref{prop:emp-univ}, we have $\univ(G_2)$ and $\neg \emp(G_2)$, hence $\sepID(G)\geq \sepID(G_2)\geq \frac{n_2+1}{2} = \frac{n}{2}$, and we are done.
Assume now that $\neg \univ(G)$. If $\emp(G_2)$, then by Lemma~\ref{prop:induct_sep}, $\sepID(G)=\sepID(G_1)+\sepID(G_2)+1\geq \frac{n_2}{2}+1 \geq \frac{n+2}{2}$ and we are done.
Otherwise, we have $\neg \emp(G_2)$ and by item 3 of Lemma~\ref{prop:emp-univ}, we also have $\neg \univ(G_2)$, hence $\sepID(G)\geq\sepID(G_2)\geq \frac{n_2+2}{2}\geq \frac{n+1}{2}$.

Assume now that none of $G_1,G_2$ is $K_1$, then by induction, the properties hold for $G_1$ and $G_2$ and we have $\neg \univ(G)$. If both $\neg \emp(G_1)$ and $\neg \emp(G_2)$, then we also have $\neg \emp(G)$ and $\sepID(G)\geq \sepID(G_1)+\sepID(G_2) \geq \frac{n_1+1}{2}+\frac{n_2+1}{2}\geq \frac{n+2}{2}$ and we are done.
If both $\emp(G_1)$ and $\emp(G_2)$, then $\emp(G)$ and $\sepID(G)\geq \sepID(G_1)+\sepID(G_2)+1\geq \frac{n_1}{2}+\frac{n_2}{2}+1\geq \frac{n+1}{2}$.
Finally, if only one property holds, say $\emp(G_1)$, then $\emp(G)$ and $\sepID(G)\geq \sepID(G_1)+\sepID(G_2)\geq \frac{n_1}{2}+\frac{n_2+1}{2}\geq \frac{n+1}{2}$.
\end{proof}

\begin{proposition}
The bound of Theorem~\ref{thm:bound-cographs} is tight for infinitely
 many cographs.
\end{proposition}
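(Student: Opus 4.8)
The plan is to construct, for each integer $i\geq 1$, a twin-free cograph $G_i$ on $n_i=2^{i+2}-2$ vertices with $\M(G_i)=2^{i+1}=\frac{n_i+2}{2}$; since $n_i\to\infty$ this gives infinitely many cographs for which the bound $n\leq 2\M(G)-2$ of Theorem~\ref{thm:bound-cographs} is an equality. Concretely, I would start from $G_0=\overline{K_2}$ and, for $i\geq 0$, set $H_{i+1}=K_1\bowtie G_i$ and $G_{i+1}=H_{i+1}\oplus H_{i+1}$; thus $H_1=P_3$, $G_1=P_3\oplus P_3$, $H_2=K_1\bowtie(P_3\oplus P_3)$, and so on. Each $G_i$ and $H_i$ is a twin-free cograph: they are cographs because built only from $\oplus$ and $\bowtie$, and twin-free because $\oplus$ creates no twin across its parts and preserves twin-freeness inside each, while $H_{i+1}=K_1\bowtie G_i$ adjoins a universal vertex to the \emph{disconnected} graph $G_i$, so that vertex has no twin and no two vertices of $G_i$ become twins.

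The core of the argument is to follow the properties $\emp$ and $\univ$ (and hence the value of $\sepID$) along the cotree using Lemmas~\ref{prop:induct_sep} and~\ref{prop:emp-univ}. One checks directly that $\emp(\overline{K_2})$ and $\univ(\overline{K_2})$ hold. Then item~4 of Lemma~\ref{prop:emp-univ} gives $\neg\emp(H_{i+1})$ (since $\emp(K_1\bowtie G_i)$ would require $\emp(G_i)$, which will have been ruled out for $i\geq1$, and for $i=0$ would require $\neg\univ(\overline{K_2})$), and item~5 gives $\univ(H_{i+1})$ (since $\univ(K_1)$ holds); items~2 and~3 then give $\neg\emp(G_{i+1})$ and $\neg\univ(G_{i+1})$ (the latter because neither part of $G_{i+1}$ is $K_1$). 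So, inductively, $\neg\emp(G_i)\wedge\neg\univ(G_i)$ for $i\geq1$ and $\neg\emp(H_i)\wedge\univ(H_i)$ for all $i\geq1$. Lemma~\ref{prop:induct_sep} then pins down the numbers: the $\bowtie$-step adds no $+1$ because $\neg\univ(G_i)$ (for $i\geq1$; for $H_1=K_1\bowtie\overline{K_2}$ the $+1$ \emph{is} added, giving $\sepID(P_3)=2$), and the $\oplus$-step adds no $+1$ because $\neg\emp(H_{i+1})$; hence $\sepID(H_{i+1})=\sepID(G_i)$ for $i\geq1$ and $\sepID(G_{i+1})=2\,\sepID(H_{i+1})$, so with $\sepID(G_1)=4$ we get $\sepID(G_i)=2^{i+1}$. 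Since $\neg\emp(G_i)$, a minimum separating set of $G_i$ dominates $V(G_i)$ and is thus an identifying code, so $\M(G_i)=\sepID(G_i)=2^{i+1}$; and the orders satisfy $n(H_{i+1})=n(G_i)+1$, $n(G_{i+1})=2\,n(H_{i+1})$, $n(G_0)=2$, giving $n(G_i)=2^{i+2}-2=2\M(G_i)-2$.

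The main obstacle I expect is the $\emp$/$\univ$ bookkeeping at the two kinds of cotree nodes: one must be sure that $K_1\bowtie G_i$ really lands in the regime $\neg\emp\wedge\univ$ and that $H_{i+1}\oplus H_{i+1}$ lands in $\neg\emp\wedge\neg\univ$, since these are exactly the cases where Lemma~\ref{prop:induct_sep} does \emph{not} add the extra $+1$ --- any error there would change the ratio between $\M(G_i)$ and $n_i$ and destroy tightness. Once the small base cases $\overline{K_2}$, $P_3$ and $P_3\oplus P_3$ are checked by hand, the induction is routine.
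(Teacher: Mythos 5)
Your construction is correct. All the bookkeeping checks out: $\emp(\overline{K_2})$ and $\univ(\overline{K_2})$ hold; item~4 of Lemma~\ref{prop:emp-univ} gives $\neg\emp(H_{i+1})$ in both the base case (because $\univ(\overline{K_2})$) and the inductive case (because $\neg\emp(G_i)$), item~5 gives $\univ(H_{i+1})$, and items~2 and~3 give $\neg\emp(G_{i+1})\wedge\neg\univ(G_{i+1})$; Lemma~\ref{prop:induct_sep} then yields $\sepID(H_{i+1})=\sepID(G_i)$ for $i\geq 1$ and $\sepID(G_{i+1})=2\sepID(H_{i+1})$, with the single $+1$ occurring only at $H_1=P_3$. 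Twin-freeness is preserved at every step since each $G_i$ is disconnected when the universal vertex is adjoined, and a disjoint union of twin-free graphs is twin-free. Together with the fact that $\neg\emp(G)$ forces $\M(G)=\sepID(G)$, this gives $\M(G_i)=2^{i+1}$ and $n(G_i)=2^{i+2}-2=2\M(G_i)-2$, so the bound is attained for infinitely many cographs.

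Your route uses exactly the same machinery as the paper (the recursions of Lemmas~\ref{prop:induct_sep} and~\ref{prop:emp-univ} for $\sepID$, $\emp$ and $\univ$), but a genuinely different recursion. The paper maintains four mutually recursive families $G_n^1,\dots,G_n^4$, one for each $\emp$/$\univ$ state matching the three cases of the refined bound in the proof of Theorem~\ref{thm:bound-cographs}, growing by one vertex at a time; this produces extremal graphs for every admissible order. You instead keep a single alternating pair of states ($\neg\emp\wedge\univ$ for the $H_i$ and $\neg\emp\wedge\neg\univ$ for the $G_i$) and double the graph at each step. The payoff of your version is a much shorter verification (only two states to track and three small base cases to check by hand); the cost is that tightness is exhibited only for the sparse sequence of orders $n=2^{i+2}-2$ rather than for all even $n$, which is still enough for the statement as claimed.
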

\begin{proof}
We construct, inductively, graphs reaching the bound. 
Assume there are graphs 
$G_n^1$,$G_n^2$,$G_n^3$,$G_n^4$
on $n$ vertices such that
\begin{itemize}
\item $\sepID(G_n^1)= \lceil \frac{n+2}{2}\rceil $, $\neg \emp(G_n^1)$ and $\neg \univ(G_n^1)$;
\item $\sepID(G_n^2)=\lceil \frac{n+1}{2} \rceil$, $\emp(G_n^2)$ and $\neg \univ(G_n^2)$;
\item $\sepID(G_n^3)=\lceil \frac{n+1}{2} \rceil$, $\neg \emp(G_n^3)$ and $\univ(G_n^3)$;
\item $\sepID(G_n^4)=\lceil \frac{n}{2} \rceil$, $\emp(G_n^4)$, $\univ(G_n^4)$ and $G_n^4$ does not have a universal vertex.
\end{itemize}

Then the graphs $G_{n+1}^1=\overline{K_3}\bowtie G_{n-2}^2 $,
$G_{n+1}^2=K_1\oplus G_{n}^4 $, $G_{n+1} ^3= K_1\bowtie G_n^4$,
$G_{n+1}^4=K_1\oplus G_n^3$, satisfy the properties for $n+1$
vertices.

Starting with $G_3^2=\overline{K_3}$, $G_3^3=P_3$, $G_4^2=\overline{K_4}$
and $G_4^3=C_4$, we obtain a sequence of graphs $G_n^i$ for $i\geq 2$ and
$n\geq 4$ satisfying the properties. We obtain the graphs $G_n^1$ for $n\geq 6$ using the graphs $G_{n-2}^2$.
\end{proof}

We can prove similar results for locating-dominating sets. Since the proofs are 
very similar to that of identifying codes, we defer them to 
Appendix~\ref{appdx:cograph-LD}.

\begin{theorem}\label{thm:bound-cographs-MD}
Let $G$ be a connected cograph on $n\geq 2$ vertices, having a resolving set
of size~$k$ and a locating-dominating set of size $d$. Then, $n\leq
3k\leq 3d$.
\end{theorem}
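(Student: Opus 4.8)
The plan is to reduce the statement to the single inequality $n\le 3\MD(G)$ and to prove the latter by the cotree induction of Theorem~\ref{thm:bound-cographs}, carried out for the \emph{distinguishing number} $\sepLD(G)$ — the minimum size of a set $S$ such that the vertices of $V(G)\setminus S$ have pairwise distinct neighbourhoods within $S$ (this is the locating-dominating analogue of $\sepID$: a locating-dominating set without the domination requirement). For the second inequality, every locating-dominating set is a resolving set, so $\MD(G)\le\LD(G)$, i.e. $3k\le 3d$. For the first, observe that since $G$ is a connected cograph it has diameter at most $2$, and — exactly as for identifying codes versus $\sepID$ — in a graph of diameter at most $2$ two vertices outside $S$ are resolved by $S$ precisely when some vertex of $S$ is adjacent to exactly one of them, i.e. when they have distinct neighbourhoods in $S$; hence $\MD(G)=\sepLD(G)$ for connected $G$.

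It therefore suffices to prove $\sepLD(G)\ge n/3$ for every cograph $G$ on $n\ge 2$ vertices, and for this I would mirror the proof of Theorem~\ref{thm:bound-cographs} word for word: first establish the $\oplus$/$\bowtie$ recursions for $\sepLD$ (the exact analogue of Lemma~\ref{prop:induct_sep}, with $\sepID$ replaced by $\sepLD$) together with the recursions for the two auxiliary properties ``every minimum distinguishing set leaves a non-dominated vertex'' and ``every minimum distinguishing set has a vertex dominated by the whole set'' (the analogue of Lemma~\ref{prop:emp-univ}, with $\oplus$ and $\bowtie$ unchanged); then prove by induction on $|V(G)|$ the three-case bound $\sepLD(G)\ge \frac{n+2}{3}$, $\frac{n+1}{3}$, $\frac{n}{3}$ according to whether none, exactly one, or both of these two properties hold, with base cases $K_1$ (where $\sepLD=0$ and both properties are true), $\overline{K_2}$, and $K_2$. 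The weakest of the three bounds, $\sepLD(G)\ge n/3$, is exactly what is needed; combined with $\MD(G)=\sepLD(G)$ and $\MD(G)\le\LD(G)$ it gives $n\le 3\MD(G)\le 3\LD(G)$, i.e. $n\le 3k\le 3d$. This is precisely the locating-domination counterpart of Theorem~\ref{thm:bound-cographs}, which is why it is natural that the $\sepLD$ machinery used here is the one deferred to Appendix~\ref{appdx:cograph-LD}.

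The one real subtlety — and the main obstacle — is that the whole induction must be run on $\sepLD$ and not on $\MD$: the identity $\MD(G)=\sepLD(G)$ uses connectivity, and for a \emph{disconnected} intermediate cograph $\MD$ can be strictly smaller than $\sepLD$ (two vertices in different components are resolved for free by the infinite distance between them — e.g. $\MD(P_3\oplus P_3)=2<3=\sepLD(P_3\oplus P_3)$), so one may pass from $\sepLD$ to $\MD$ only at the very last step, for the connected graph $G$ itself. Checking the recursions and the three base cases, and verifying that the three-case induction closes up — in particular that a $K_1$ factor, which violates the hypothesis $n\ge 2$, contributes $\sepLD(K_1)=0$ with both auxiliary properties true and thus slots correctly into the $\oplus$ and $\bowtie$ recursions — is routine bookkeeping, identical in structure to the proof of Theorem~\ref{thm:bound-cographs}.
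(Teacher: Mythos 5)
Your proposal is correct and follows essentially the same route as the paper's Appendix~\ref{appdx:cograph-LD}: reduce to the separating-set parameter $\sepLD$, invoke $\MD(G)=\sepLD(G)$ only at the very end for the connected (hence diameter-$2$) graph $G$, and run the three-case cotree induction via the $\sepLD$-analogues of Lemmas~\ref{prop:induct_sep} and~\ref{prop:emp-univ} --- including your correct observation that the induction cannot be run on $\MD$ itself because of disconnected intermediate cographs. One remark: your assignment of the bounds $\frac{n+2}{3}$, $\frac{n+1}{3}$, $\frac{n}{3}$ to the cases none/exactly one/both is the one the paper's induction actually establishes (and the only one consistent with $P_3$, which satisfies both auxiliary properties and has $\sepLD(P_3)=1=n/3$, and with the extremal graphs of Proposition~\ref{prop:bound-cographs-MD-tight}); the enumerated list of ``stronger facts'' at the start of the paper's proof states cases~1 and~3 with these two bounds interchanged, which appears to be a typo.
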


 The bound of Theorem~\ref{thm:bound-cographs-MD} is tight for infinitely
 many cographs.

\begin{proposition}\label{prop:bound-cographs-MD-tight}
There exist infinitely many cographs where both inequalities of Theorem~\ref{thm:bound-cographs-MD} are simultaneously achieved.
\end{proposition}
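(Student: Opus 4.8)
The plan is to construct an explicit infinite family of connected cographs attaining both inequalities of Theorem~\ref{thm:bound-cographs-MD} with equality, i.e. connected cographs $G$ with $n(G)=3\MD(G)$ \emph{and} $\MD(G)=\LD(G)$, built recursively from the cotree operations $\oplus$ and $\bowtie$, in the same spirit as the tightness construction following Theorem~\ref{thm:bound-cographs}.

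First I would reduce the two target equalities to a single requirement. Since a connected cograph has diameter at most~$2$, for such a graph a set $S$ is resolving if and only if it (totally) separates all pairs of $V(G)\setminus S$ via neighbourhoods; hence $\MD(G)$ equals the minimum size of such a separating set, and --- as already noted in the introduction --- $\LD(G)=\MD(G)$ exactly when some minimum separating set is moreover dominating. So it suffices to exhibit, for infinitely many $k$, a connected cograph on $n=3k$ vertices possessing a minimum (necessarily $k$-element) separating set that is also a dominating set; then $k=\MD(G)=\LD(G)$ and $n=3k$. I would also establish, in parallel to Lemma~\ref{prop:induct_sep} and Lemma~\ref{prop:emp-univ}, how this separation parameter behaves along the cotree: a vertex of $G_1$ cannot separate a pair of $G_2$ either in $G_1\oplus G_2$ or in $G_1\bowtie G_2$ (in the join case it is adjacent to both), which makes the parameter additive up to an additive~$1$, the defect being governed by two auxiliary states --- ``every minimum separating set leaves an undominated vertex'' (relevant for $\oplus$) and ``every minimum separating set has a vertex dominated by the whole set'' (relevant for $\bowtie$).

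I would then define the recursive family. As in the proof after Theorem~\ref{thm:bound-cographs}, I would maintain a bounded number of auxiliary families $G^1_j,\dots,G^\ell_j$, each tagged by its state combination, together with rules $G^{\,i}_{j+1}=X^{i}\star G^{\,i'}_{j'}$ (with $\star\in\{\oplus,\bowtie\}$ and $X^{i}$ a fixed small cograph assembled from copies of $K_1$, $\overline{K_2}$, $\overline{K_3}$ and $P_3$) chosen so that each step adds exactly three vertices and raises the separation parameter by exactly one, while the state information needed to iterate is propagated correctly. The lower bound $\MD(G_j)\ge k$ would come from the same twin-counting argument used to prove Theorem~\ref{thm:bound-cographs-MD}; the matching upper bound, and the existence of a \emph{dominating} minimum separating set, would be the content of the induction, using the additivity fact above; the base cases would be small explicit cographs verified by hand.

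The main obstacle is the simultaneity of the two equalities: the bound $n\le 3k$ is most naturally saturated by a resolving set that leaves exactly one vertex undominated, and precisely in that situation $\LD(G)=\MD(G)+1$, so the second inequality becomes strict. The recursion must therefore be arranged so that the ``$+1$'' defects that accumulate the separation parameter up to~$k$ are all of the ``dominated-vertex'' type (coming from $\bowtie$) rather than the ``undominated-vertex'' type (coming from $\oplus$), while still spending exactly three vertices per unit of the parameter. Producing a set of gadgets $X^{i}$ and a choice of operations $\star$ that keep both the vertex count and the state bookkeeping consistent is the delicate point; once it is in place each inductive step is routine, and the cograph property is automatic because $\oplus$ and $\bowtie$ preserve it.
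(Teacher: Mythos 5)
Your high-level plan is the same as the paper's: maintain a small number of recursive families of cographs indexed by the two auxiliary states (``every minimum separating set leaves an undominated vertex'' and ``every minimum separating set leaves a vertex dominated by the whole set''), propagate $\sepLD$ and these states through $\oplus$ and $\bowtie$ via the analogues of Lemma~\ref{prop:induct_sep} and Lemma~\ref{prop:emp-univ} (these are Lemmas~\ref{prop:induct_sep_MD} and~\ref{prop:emp-univ-MD}), and check small base cases by hand. However, you never actually produce the construction: the gadgets $X^i$, the choice of $\oplus$ versus $\bowtie$ at each step, and the base cases are exactly the content of an existence proof, and you explicitly defer them as ``the delicate point.'' As it stands this is a strategy outline, not a proof.

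There is also a more substantive problem with the target you reduce to. You propose to build connected cographs on $n=3k$ vertices with a minimum ($k$-element) separating set that is also dominating, and you describe the tension between saturating $n\le 3k$ and keeping $\LD=\MD$ as an engineering constraint to be designed around. It is in fact an obstruction: the refined case analysis underlying Theorem~\ref{thm:bound-cographs-MD} (whose cases mirror those of Theorem~\ref{thm:bound-cographs}) shows that if some minimum separating set of a cograph dominates every vertex --- which for a connected cograph is precisely the condition $\LD(G)=\MD(G)$ --- then $\sepLD(G)\ge\frac{n+1}{3}$, i.e.\ $n\le 3k-1$. Conversely, $\sepLD(G)=n/3$ forces every minimum separating set to leave an undominated vertex, whence $\LD(G)=\MD(G)+1$. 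So exact simultaneous equality $n=3k=3d$ is impossible, and your plan cannot be completed as stated. What the paper's recursion actually delivers is the family $G_n^3$ (connected, with a dominating minimum separating set), giving $n=3k-1=3d-1$ for $n\equiv 2 \pmod 3$ --- for instance $K_1\bowtie(K_2\oplus K_2)$ with $k=d=2$ and $n=5$ --- which is best possible; ``achieved'' in the proposition has to be read up to this additive constant, in line with the other near-tightness results in the paper.
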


\section{Conclusion}\label{sec:conclu}

We conclude with some open problems. It would be interesting to know whether similar bounds, as the ones we established here, hold for other standard graph classes. One specific case that is worth studying is the metric dimension of planar graphs and line graphs. For these two classes, such bounds are known to exist for locating-dominating sets and identifying codes ($O(k)$ for planar graphs~\cite{RS84} and $O(k^2)$ for line graphs~\cite{lineID}). Do bounds of the form $O(Dk)$ and $O(Dk^2)$, respectively, hold for planar graphs and line graphs? \footnote{After the completion and submission of this paper, this has been investigated in~\cite{BDFHMP16} by two of the authors with other colleagues. The answer turns out to be negative for both planar graphs and line graphs. In that paper, it is shown that there exist line graphs of diameter~$4$ and order $\Omega(2^k)$, outerplanar graphs of order $\Theta(kD^2)$, and planar graphs of metric dimension~$3$ and order~$\Theta(D^3)$. Nevertheless, for planar graphs, it is proved there that $n=O(D^4k^4)$ holds.}

We also remark that during the writing of this paper, the fourth author, together with several colleagues~\cite{BLLPT14}, proved that for any graph $G$ of order $n$ and VC-dimension at most~$d$, the bound $n\leq O(k^d)$ holds, where $k$ is the size of an identifying code of $G$ (the same bound also applies to (open) locating-dominating sets). In particular, interval graphs have VC-dimension at most~2, and permutation graphs have VC-dimension at most~3. Hence, the result of~\cite{BLLPT14} generalizes some of our results (but our bounds are more precise).

\appendix

\section{Locating-dominating sets and metric dimension of cographs}\label{appdx:cograph-LD}

As mentioned in the introduction, if $G$ has diameter~2, we have
$\MD(G)\leq\LD(G)\leq\MD(G)+1$, where the upper bound is reached if
and only if for every smallest resolving set there is a non-dominated
vertex. Since we will use the cotree structure of cographs, we have to deal with not connected graphs for which the difference between $\MD(G)$ and $\LD(G)$ can be more than one. As before, we denote by $\sepLD(G)$ the smallest size of a {\em separating set}, that is, a set $S\subseteq V(G)$ that separates all the vertices of $V(G)\setminus S$ (it is a locating-dominating set without the condition of being a dominating set). If $G$ is a connected cograph, it has diameter $2$ and a separating set is equivalent to a resolving set, in particular $\sepLD(G)=\MD(G)$. If $G$ is not connected, then the two parameters can be different since in a resolving set, one vertex per connected component could be not dominated. We define $\empMD(G)$ as the property that for a graph $G$, every minimum separating set leaves a non-dominated vertex; $\univMD(G)$ is
the property that every minimum separating set $S$ of $G$ leaves a
vertex in $G\setminus S$ that is dominated by all vertices of $S$. We have $\sepLD(G)\leq \LD(G)\leq \sepLD(G)+1$ and $\LD(G)=\sepLD(G)+1$ if and only if $\empMD(G)$.

Note that $S$ is a separating set of $G$ if and only if it is a separating set of the complement of $G$. Moreover, the following hold:\\
\indent $\empMD(G)$ if and only if $\univMD(\overline{G})$\\
\indent $\univMD(G)$ if and only if $\empMD(\overline{G})$

\medskip

We have the following lemma.

\begin{lemma}\label{prop:induct_sep_MD}
Let $G_1,G_2$ be two cographs with $\sepLD(G_1)=k_1$ and
$\sepLD(G_2)=k_2$. Then, $k_1+k_2\leq\sepLD(G_1\oplus G_2)\leq k_1+k_2+1$,
where the upper bound is reached if and only if we have $\empMD(G_1)$
and $\empMD(G_2)$. Moreover, $k_1+k_2\leq\sepLD(G_1\bowtie G_2)\leq
k_1+k_2+1$ and the upper bound is reached if and only if we have
$\univMD(G_1)$ and $\univMD(G_2)$.
\end{lemma}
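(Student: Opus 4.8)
The structure of Lemma~\ref{prop:induct_sep_MD} is completely parallel to that of Lemma~\ref{prop:induct_sep} for identifying codes, so the proof strategy will be the same, with ``domination by the open neighbourhood within $S$'' replaced everywhere by ``domination by the closed neighbourhood within $S$'' (equivalently, a vertex $v\notin S$ is separated from $w$ by some $s\in S$ iff the sets $N[v]\cap S$ and $N[w]\cap S$ differ). The key structural fact I would state first is: in both $G_1\oplus G_2$ and $G_1\bowtie G_2$, no vertex of $G_1$ can separate a pair of vertices of $G_2$, and conversely. For the disjoint union this is immediate; for the join, it is because every vertex of $G_1$ dominates \emph{all} of $G_2$, so it treats every pair of $G_2$ identically. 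Consequently the restriction of any separating set of $G_1\oplus G_2$ (resp.\ $G_1\bowtie G_2$) to $V(G_i)$ is a separating set of $G_i$, which immediately gives the two lower bounds $\sepLD(G_1\oplus G_2)\geq k_1+k_2$ and $\sepLD(G_1\bowtie G_2)\geq k_1+k_2$.

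\textbf{Upper bounds.} For the upper bounds I would take minimum separating sets $S_1$ of $G_1$ and $S_2$ of $G_2$ and show $S=S_1\cup S_2$ is ``almost'' a separating set of the composite graph, with at most one extra vertex needed. For $G_1\oplus G_2$: since $S_i$ separates all pairs inside $G_i$, and $S$ handles cross pairs automatically unless \emph{both} members of a cross pair $\{u,v\}$ ($u\in G_1$, $v\in G_2$) have empty closed neighbourhood in $S$, i.e.\ are non-dominated. If $S$ already dominates all vertices we are done with $k_1+k_2$; otherwise there is exactly one non-dominated vertex in each of $G_1,G_2$ (because $S_i$ is a separating set of $G_i$, so it leaves at most one non-dominated vertex there), these two form the unique non-separated cross pair, and adding either one of them to $S$ gives a separating set of size $k_1+k_2+1$. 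This shows the upper bound is $k_1+k_2+1$, and it is attained precisely when for \emph{every} choice of minimum $S_1,S_2$ a non-dominated vertex survives on each side — which is exactly the conjunction $\empMD(G_1)\wedge\empMD(G_2)$. The ``only if'' direction uses that if, say, $\neg\empMD(G_1)$, one may pick $S_1$ dominating all of $G_1$, and then the only possible surviving non-separated cross pair disappears, giving $\sepLD(G_1\oplus G_2)=k_1+k_2$.

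\textbf{The join case.} For $G_1\bowtie G_2$ the symmetric argument works via complementation: $\overline{G_1\bowtie G_2}=\overline{G_1}\oplus\overline{G_2}$, a separating set is the same object for a graph and its complement, $\sepLD(G_i)=\sepLD(\overline{G_i})$, and $\empMD(\overline{G_i})\Leftrightarrow\univMD(G_i)$, so the join statement is literally the union statement applied to the complements. (Alternatively one argues directly: with $S=S_1\cup S_2$, a cross pair $\{u,v\}$ is non-separated iff $u$ is dominated by all of $S_2$ \emph{and} all of $S_1$ — automatic from the join — i.e.\ iff $u$ is dominated by all of $S$ and likewise $v$; such a pair exists iff both $S_1$ leaves a universally-dominated vertex in $G_1$ and $S_2$ does in $G_2$; and since the graph need not be twin-free we must separately check that such a pair can always be fixed by adding one vertex, using that if $u,v$ are both dominated by all of $S$ but still need separating, some vertex of $G$ — not necessarily outside $S_1\cup S_2$ — distinguishes them, and a minimal adjustment argument.) I would present the short complementation route.

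\textbf{Expected obstacle.} The one subtlety I expect to spend care on is that, unlike Lemma~\ref{prop:induct_sep}, here $G_1,G_2$ need \emph{not} be twin-free and the composite graph need not be twin-free, yet a separating set (unlike an identifying code) always exists. So I must be sure that ``adding one vertex to $S_1\cup S_2$'' genuinely repairs the unique bad cross pair — the point being that there is at most \emph{one} non-separated pair, and it consists of one vertex from each side, so adding either of its two members to $S$ breaks it; no twin-freeness is needed because we are only asked to separate vertices \emph{outside} the set, and the added vertex is no longer outside. I would also double-check the ``reached if and only if'' equivalences handle the boundary cases where a $G_i$ is a single vertex, since $K_1$ is exactly the base case feeding the induction in Lemma~\ref{prop:emp-univ}'s analogue.
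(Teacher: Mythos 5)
Your proposal is correct and follows essentially the same route as the paper's proof: lower bounds via the observation that no vertex of $G_1$ separates a pair of $G_2$ (so restrictions of separating sets are separating sets), and upper bounds by taking $S_1\cup S_2$ and adding at most one vertex of the unique bad cross pair, with the $\empMD/\univMD$ characterizations following from whether a non-dominated (resp.\ fully dominated) vertex must survive on each side. Your complementation shortcut for the join case is a harmless variant that the paper itself licenses by the duality it states just before the lemma, and your remark that twin-freeness is not needed here (unlike in Lemma~\ref{prop:induct_sep}) correctly identifies the one place where the two proofs genuinely differ.
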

\begin{proof}
Note that in both $G_1\oplus G_2$ and $G_1\bowtie G_2$, a vertex in
$G_1$ cannot separate a pair in $G_2$, and vice-versa. Hence, for
every separating set of $G_1\oplus G_2$ or $G_1\bowtie G_2$, its
restriction to $G_i$ for $i\in\{1,2\}$ is a separating set of
$G_i$. This proves the two lower bounds.

For the upper bounds, let $S_1$ and $S_2$ be minimum separating sets
of $G_1$ and $G_2$, respectively. If $S=S_1\cup S_2$ is not a
separating set of $G_1\oplus G_2$ or $G_1\bowtie G_2$, by the previous
observation, there must be a pair $u,v$ with $u\in G_1$ and $v\in G_2$
that is not separated. In the case of $G_1\oplus G_2$, these two
vertices must both be non-dominated by $S$, and this is the only
non-separated pair. Then, adding one of them gives a separating set of
size $k_1+k_2+1$. For the case $G_1\bowtie G_2$, $u$ is dominated
by all vertices of $S_2$, and $v$ is dominated by all vertices of
$S_1$. Hence both $u,v$ must be dominated by all vertices of $S$, and
this is the only non-separated pair. Then $S\cup\{u\}$ is a resolving
set of $G_1\bowtie G_2$ of size $k_1+k_2+1$.
\qed\end{proof}

Using the following lemma, it is easy to keep track of the properties
$\empMD$ and $\univMD$ while parsing the cotree structure of a cograph
$G$.

\begin{lemma}\label{prop:emp-univ-MD} We have:\\
1. $\empMD(K_1)$ and $\univMD(K_1)$;\\
2. $\empMD(G_1\oplus G_2)$ if and only if $\empMD(G_1)$ or $\empMD(G_2)$;\\
3. $\univMD(G_1\oplus G_2)$ if and only if one of $G_1,G_2$ (say
$G_1$) is $K_1$, $\univMD(G_2)$ and $\neg\empMD(G_2)$;\\
4. $\empMD(G_1\bowtie G_2)$ if and only if $G_1=K_1$, $\neg\univMD(G_2)$ and $\empMD(G_2)$;\\
5. $\univMD(G_1\bowtie G_2)$ if and only if $\univMD(G_1)$ or $\univMD(G_2)$.\\
6. If $\neg \empMD(G)$ and $\neg \univMD(G)$, there is a minimum separating set $S$ of $G$ such that all the vertices are dominated by some vertex of $S$ and there is no vertex out of $S$ that dominates all $S$.
\end{lemma}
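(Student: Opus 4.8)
We prove all six items simultaneously by induction on $n=|V(G)|$, following the proof of Lemma~\ref{prop:emp-univ} almost verbatim but with $\sepID$ replaced by $\sepLD$ and Lemma~\ref{prop:induct_sep} replaced by Lemma~\ref{prop:induct_sep_MD}; note that, unlike in the identifying-code setting, no twin-freeness is assumed, so separating sets always exist and $\oplus,\bowtie$ may create twins (this only makes life easier). The base case $n=1$ is item~1: the unique minimum separating set of $K_1$ is $\emptyset$, which leaves its vertex non-dominated (so $\empMD(K_1)$) while vacuously every vertex is dominated by all of $\emptyset$ (so $\univMD(K_1)$). For the inductive step, write $G=G_1\oplus G_2$ or $G=G_1\bowtie G_2$ with both $G_i$ on fewer vertices, and assume all six items hold for $G_1,G_2$.

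The first move is to use the complementation duality recalled just before the lemma: since a set separates $G$ iff it separates $\overline G$, we have $\empMD(G)\Leftrightarrow\univMD(\overline G)$, together with $\overline{G_1\oplus G_2}=\overline{G_1}\bowtie\overline{G_2}$ and $\overline{K_1}=K_1$. Applying item~3 to $\overline{G_1},\overline{G_2}$ and translating back gives item~4, and applying item~2 to $\overline{G_1},\overline{G_2}$ gives item~5. Hence it suffices to establish items~1, 2, 3 and 6; in particular, the delicate induced-$P_4$ argument used for item~4 of Lemma~\ref{prop:emp-univ}, which relied on twin-freeness, is bypassed entirely.

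For item~2, Lemma~\ref{prop:induct_sep_MD} says that a minimum separating set of $G_1\oplus G_2$ is a disjoint union of minimum separating sets of $G_1,G_2$ unless $\empMD(G_1)$ and $\empMD(G_2)$ both hold, in which case $\sepLD(G_1\oplus G_2)=\sepLD(G_1)+\sepLD(G_2)+1$ and, as $\oplus$ adds no edges between the parts, every minimum separating set still leaves a non-dominated vertex; bookkeeping of which part contributes the undominated vertex yields both directions. For item~3 one first observes that a separating set of $G_1\oplus G_2$ must meet each part in a vertex whenever that part has at least two vertices, so $\univMD(G_1\oplus G_2)$ forces one part, say $G_1$, to be $K_1$; then $\neg\empMD(G_2)$ is forced (otherwise the vertex of $K_1$ joins some minimum separating set and kills the universality witness), and in the remaining subcase where we would have $\neg\univMD(G_2)$ one invokes item~6 of the induction hypothesis for $G_2$ (legitimate since $\neg\empMD(G_2)$ and $\neg\univMD(G_2)$) to get a minimum separating set of $G_2$ dominating everything yet with no vertex dominated by all of it; this set is a minimum separating set of $K_1\oplus G_2$ witnessing $\neg\univMD(G)$, a contradiction. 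The converse of item~3 is routine from Lemma~\ref{prop:induct_sep_MD}, since $\neg\empMD(G_2)$ forces $\sepLD(K_1\oplus G_2)=\sepLD(G_2)$ and no minimum separating set contains the isolated vertex.

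Finally, for item~6 assume $\neg\empMD(G)$ and $\neg\univMD(G)$. If $G=G_1\oplus G_2$, item~2 gives $\neg\empMD(G_1)$ and $\neg\empMD(G_2)$, and neither $G_i$ is $K_1$ (as $\empMD(K_1)$); choosing for each $i$ a minimum separating set $S_i$ dominating all of $G_i$ (which exists by $\neg\empMD(G_i)$, and is nonempty since $G_i\neq K_1$), Lemma~\ref{prop:induct_sep_MD} makes $S_1\cup S_2$ a minimum separating set dominating all of $G$, and no vertex of $G$ is dominated by all of $S_1\cup S_2$ since a vertex of $G_1$ is non-adjacent to the nonempty set $S_2$ and symmetrically. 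If $G=G_1\bowtie G_2$, item~5 gives $\neg\univMD(G_1)$ and $\neg\univMD(G_2)$, and neither $G_i$ is $K_1$ (as $\univMD(K_1)$); unpacking $\neg\univMD(G_i)$ directly yields a minimum separating set $S_i$ of $G_i$ with no vertex of $G_i$ dominated by all of $S_i$, and $S_1\cup S_2$ is a separating set of $G_1\bowtie G_2$ (a non-separated cross pair $u\in G_1,v\in G_2$ would both be dominated by all of $S_1\cup S_2$, contradicting that $v$ is not dominated by all of $S_2$), hence minimum by Lemma~\ref{prop:induct_sep_MD}; it dominates all of $G$ because across the join every vertex of one part sees the nonempty set $S_i$ of the other part, and no vertex is dominated by all of $S_1\cup S_2$ because a vertex of $G_i$ already fails to be dominated by all of $S_i$. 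I expect the main obstacle to be the forward direction of item~3 (equivalently item~4): arranging the case split so that item~6 of the induction hypothesis is available precisely when needed, and invoking the ``$+1$ versus no $+1$'' alternatives of Lemma~\ref{prop:induct_sep_MD} consistently throughout; the rest is a faithful transcription of the identifying-code argument.
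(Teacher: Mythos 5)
Your proof is correct and follows essentially the same route as the paper's: induction on the cotree, the base case $K_1$, complementation (swapping $\empMD\leftrightarrow\univMD$ and $\oplus\leftrightarrow\bowtie$) to reduce items 4 and 5 to items 3 and 2, and then the same arguments via Lemma~\ref{prop:induct_sep_MD} for items 2, 3 and 6, including the use of item 6 of the induction hypothesis in the forward direction of item 3. The only cosmetic caveat is that $\neg\univMD(G_i)$ literally yields a minimum separating set with no vertex \emph{outside} $S_i$ dominated by all of $S_i$ (not ``no vertex of $G_i$''), but that weaker form is exactly what your item-6 and item-3 arguments actually need.
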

\begin{proof}
Since taking the complement is the same as exchanging $\empMD$ to $\univMD$ and $\bowtie$ to $\oplus$, we just have to prove items 1, 2, 3 and 6. We prove these items by induction.

The first item is clear. For the second item, assume $\empMD(G_1\oplus
G_2)$. By Lemma~\ref{prop:induct_sep_MD}, if $\neg\empMD(G_1)$ or
$\neg\empMD(G_2)$, then any minimum separating set of $G_1\oplus G_2$ is
the union of a minimum separating set of $G_1$ and one of $G_2$. Hence
if both $\neg\empMD(G_1)$ and $\neg\empMD(G_2)$, then $\neg\empMD(G_1\oplus
G_2)$. Now, if $\empMD(G_1)$ and $\neg\empMD(G_2)$ (or vice-versa), for
the same reason we have $\empMD(G_1\oplus G_2)$. If both $\empMD(G_1)$ and
$\empMD(G_2)$, then again by Lemma~\ref{prop:induct_sep_MD},
$\sepLD(G_1\oplus G_2)=\sepLD(G_1)+\sepLD(G_2)+1$, but since no vertex of
$G_1$ dominates any vertex of $G_2$ (and vice-versa), there must
remain a non-dominated vertex.

For the third item, assume $\univMD(G_1\oplus G_2)$. If none of
$G_1,G_2$ is $K_1$, then there must be a code vertex in both
$G_1,G_2$, hence $\neg\univMD(G_1\oplus G_2)$. Hence assume $G_1=K_1$,
and let $S_2$ be a minimum separating set of $G_2$. By
Lemma~\ref{prop:induct_sep_MD}, if $\empMD(G_2)$, $\sepLD(G_1\oplus
G_2)=\sepLD(G_1)+\sepLD(G_2)+1$. But then $S'=S_2\cup V(K_1)$ is a minimum
separating set of $G_1\oplus G_2$ without a vertex dominated by the
whole of $S'$, a contradiction. Hence, $\neg\empMD(G_2)$. If
$\neg\univMD(G_2)$, using induction and item 6, there is a minimum separating set $S_2$ of $G_2$
without any vertex dominated by the whole of $S_2$ and all the vertices dominated by some vertex of $S_2$.
Then $S_2$ is a minimum
separating set of $G$ without any vertex dominated by the whole set $S_2$, a contradiction.
Hence $\univMD(G_2)$. For the converse, if $G_1=K_1$, $\univMD(G_2)$ and
$\neg\empMD(G_2)$, then by Lemma~\ref{prop:induct_sep_MD}, $\sepLD(G_1\oplus
G_2)=\sepLD(G_1)+\sepLD(G_2)$, and it is clear that the vertex of $K_1$
does not belong to a any minimum separating set of $G_1\oplus
G_2$. Hence every minimum separating set of $G_1\oplus G_2$ is a
minimum separating set of $G_2$, and we are done.

For the sixth item, we use the previous results. Assume first that $G=G_1\oplus G_2$ and that $\neg \empMD(G)$ and $\neg \univMD(G)$. Then we have in particular, using item 2, $\neg \empMD(G_1)$ and $\neg \empMD(G_2)$. Consider any minimum separating sets $S_1$ of $G_1$ and $S_2$ of $G_2$ that dominates all the vertices of $G_1$ and $G_2$ respectively. By Lemma~\ref{prop:induct_sep_MD}, $S_1\cup S_2$ is a minimum separating set of $G$ that dominates all the vertices of $G$. Since $S_1$ and 
$S_2$ are both non-empty, $S_1\cup S_2$ has no vertex dominated by all the vertices of $S_1\cup S_2$.
Assume now that $G=G_1\bowtie G_2$ and that $\neg \empMD(G)$ and $\neg \univMD(G)$. Using item 5, we have $\neg \univMD(G_1)$ and $\neg \univMD(G_2)$. Let $S_1$ (respectively $S_2$) be a minimum separating set of $G_1$ (respectively $G_2$) with no vertex dominated by all the vertices of $S_1$ (respectively $S_2$). By Lemma~\ref{prop:induct_sep_MD}, $S_1\cup S_2$ is a minimum separating set of $G$ and no vertex is dominated by all the vertices of $S_1\cup S_2$.
Moreover, $S_1$ and $S_2$ are both non-empty, hence $S_1\cup S_2$ dominates all the vertices of $G$.
\qed\end{proof}

We can now prove Theorem~\ref{thm:bound-cographs-MD}.

\begin{proof}[Proof of Theorem~\ref{thm:bound-cographs-MD}]
In fact, we prove the following stronger facts:
\\1. if $\neg\empMD(G)$
and $\neg\univMD(G)$, $\sepLD(G)\geq\frac{n}{3}$;\\
2. if $\empMD(G)$ and
$\neg\univMD(G)$ or $\neg\empMD(G)$ and $\univMD(G)$,
$\sepLD(G)\geq\frac{n+1}{3}$;\\3. if $\empMD(G)$ and $\univMD(G)$,
$\sepLD(G)\geq\frac{n+2}{3}$.

The claim is clearly true for $K_2$ and $\overline{K_2}$, hence assume
$n>2$. We just have to prove the result for $G=G_1\oplus G_2$ since
everything is symmetric by taking the complement and exchanging
$\empMD(G)$ with $\univMD(G)$.

Assume first that $G_1=K_1$. Then $G_2$ has at least $n_2\geq 2$
vertices and by induction the properties $1,2,3$ hold for $G_2$. We
have $\empMD(G_1)$ and so $\empMD(G)$. If $\univMD(G)$ holds, then by
Lemma~\ref{prop:emp-univ-MD}, we have $\univMD(G_2)$ and $\neg
\empMD(G_2)$, hence $\sepLD(G)\geq \sepLD(G_2)\geq \frac{n_2+1}{3}\geq
\frac{n}{3}$, and we are done. Assume now that $\neg \univMD(G)$. If
$\empMD(G_2)$, then by Lemma~\ref{prop:induct_sep_MD},
$\sepLD(G)=\sepLD(G_1)+\sepLD(G_2)+1\geq \frac{n_2}{3}+1 \geq \frac{n+2}{3}$
and we are done. Otherwise, we have $\neg \empMD(G_2)$ and by
Lemma~\ref{prop:emp-univ-MD}, we also have $\neg \univMD(G_2)$, hence
$\sepLD(G)\geq\sepLD(G_2)\geq \frac{n_2+2}{3}= \frac{n+1}{3}$.

Assume now that none of $G_1,G_2$ is $K_1$, then by induction, the
properties hold for $G_1$ and $G_2$ and we have $\neg \univMD(G)$. If
both $\neg \empMD(G_1)$ and $\neg \empMD(G_2)$, then we also have
$\neg \empMD(G)$ and $\sepLD(G)\geq \sepLD(G_1)+\sepLD(G_2) \geq
\frac{n_1+1}{3}+\frac{n_2+1}{3}\geq \frac{n+2}{3}$ and we are done.
If both $\empMD(G_1)$ and $\empMD(G_2)$, then $\empMD(G)$ and $\sepLD(G)=
\sepLD(G_1)+\sepLD(G_2)+1\geq \frac{n_1}{3}+\frac{n_2}{3}+1= \frac{n+3}{3}$.
Finally, if only one, say $\empMD(G_1)$ holds, then $\empMD(G)$ and
$\sepLD(G)\geq \sepLD(G_1)+\sepLD(G_2)\geq \frac{n_1}{3}+\frac{n_2+1}{3}\geq
\frac{n+1}{3}$.
\end{proof}

We now prove Proposition~\ref{prop:bound-cographs-MD-tight}:

\begin{proof}[Proof of Proposition~\ref{prop:bound-cographs-MD-tight}]
We construct graphs reaching the bound by induction
as follows. Assume there exist graphs $G_n^1$, $G_n^2$, $G_n^3$, $G_n^4$
on $n$ vertices such that
\begin{itemize}
\item $\sepLD(G_n^1)= \lceil \frac{n+2}{3}\rceil $, $\neg \empMD(G_n^1)$ and $\neg \univMD(G_n^1)$;
\item $\sepLD(G_n^2)=\lceil \frac{n+1}{3} \rceil$, $\empMD(G_n^2)$ and $\neg \univMD(G_n^2)$;
\item $\sepLD(G_n^3)=\lceil \frac{n+1}{3} \rceil$, $\neg \empMD(G_n^3)$ and $\univMD(G_n^3)$;
\item $\sepLD(G_n^4)=\lceil \frac{n}{3} \rceil$, $\empMD(G_n^4)$, $\univMD(G_n^4)$ and $G_n^4$ does not have a universal vertex.
\end{itemize}

Then the graphs $G_{n+1}^1=K_2\oplus G_{n-1}^3 $,
$G_{n+1}^2=K_1\oplus G_{n}^1 $, $G_{n+1}^3= K_1\bowtie G_n^1$,
$G_{n+1}^4=K_1\oplus G_n^3$, satisfy the properties for $n+1$
vertices.

Starting with $G_2^2=\overline{K_2}$, $G_2^3=K_2$,
$G_3^2=\overline{K_3}$, $G_3^3=K_3$, $G_4^2=\overline{K_2}\oplus K_2$
and $G_4^3=K_1\bowtie (K_1\oplus K_2)$, we obtain $G_4^1$, $G_3^4$ and
$G_4^4$ and then graphs $G_n^i$ for $n\geq 5$ satisfying the
properties.
\end{proof}

\end{document}